\DeclareMathOperator*{\argmin}{arg\,min}
\DeclareMathOperator*{\tr}{tr}
\DeclareMathOperator*{\Var}{Var}
\DeclareMathOperator*{\MSE}{MSE}
\DeclareMathOperator*{\rank}{rank}
\newcommand\IR{\mathrm{I\!R}}
\newcommand\subto{\mathrm{s.t.}}
\newcommand\cE{{\mathcal E}}
\newcommand\cG{{\mathcal G}}
\newcommand\cJ{{\mathcal J}}
\newcommand\cL{{\mathcal L}}
\newcommand\cQ{{\mathcal Q}}
\newcommand\cS{{\mathcal S}}
\newcommand\cX{{\mathcal X}}
\newcommand\bA{{\mathbf A}}
\newcommand\bB{{\mathbf B}}
\newcommand\bD{{\mathbf D}}
\newcommand\bG{{\mathbf G}}
\newcommand\bH{{\mathbf H}}
\newcommand\bI{{\mathbf I}}
\newcommand\bS{{\mathbf S}}
\newcommand\bZ{{\mathbf Z}}
\newcommand\bfA{{\mathbf A}}
\newcommand\bfB{{\mathbf B}}
\newcommand\bfC{{\mathbf C}}
\newcommand\bfD{{\mathbf D}}
\newcommand\bfG{{\mathbf G}}
\newcommand\bfH{{\mathbf H}}
\newcommand\bfI{{\mathbf I}}
\newcommand\bfL{{\mathbf L}}
\newcommand\bfP{{\mathbf P}}
\newcommand\bfb{{\mathbf b}}
\newcommand\bfe{{\mathbf e}}
\newcommand\bff{{\mathbf f}}
\newcommand\bfg{{\mathbf g}}
\newcommand\bfw{{\mathbf w}}
\newcommand\bfx{{\mathbf x}}
\newcommand\bfy{{\mathbf y}}
\newcommand\bbE{{\mathbb E}}
\newcommand{\bOmega}{\boldsymbol{\Omega}}
\newcommand{\bSigma}{\boldsymbol{\Sigma}}
\newcommand{\bfOmega}{\boldsymbol{\Omega}}
\newcommand{\one}{\boldsymbol{1}}
\newcommand\zero{\boldsymbol{0}}
\newcommand\Bias{\mathrm{Bias}}
\newcommand\ls{\mathrm{ls}}
\newcommand\df{\mathrm{df}}
\newcommand\CI{\mathrm{CI}}
\newcommand\CB{\mathrm{CB}}
\newcommand\iso{\mathrm{iso}}
\newcommand{\supp}{\href{bka-supp.pdf}{Supplementary Material}}
\title{Monotone Cubic B-Splines with a Neural-Network Generator}
\author[1,2]{Lijun Wang%
  \thanks{Lijun Wang was a doctoral student at The Chinese University of Hong Kong and is now a postdoctoral associate at Yale University. Email: \texttt{ljwang@link.cuhk.edu.hk}, \texttt{lijun.wang@yale.edu}}
}
\affil[1]{Department of Statistics, The Chinese University of Hong Kong, Hong Kong SAR, China}
\affil[2]{Department of Biostatistics, Yale University, New Haven, Connecticut, USA}
\author[1]{Xiaodan Fan%
  \thanks{Email: \texttt{xfan@cuhk.edu.hk}}
}
\author[3]{Huabai Li%
    \thanks{Email: \texttt{hbli@cuhk.edu.hk}}
}
\affil[3]{Department of Physics, The Chinese University of Hong Kong, Hong Kong SAR, China}
\author[4]{Jun S. Liu%
\thanks{Email: \texttt{jliu@stat.harvard.edu}}
}
\affil[4]{Department of Statistics, Harvard University, Cambridge, Massachusetts, USA}
\date{}
\begin{document}
\maketitle

\begin{abstract}
We present a method for fitting monotone curves using cubic B-splines, which is equivalent to putting a
monotonicity constraint on the coefficients. 
We explore different ways of enforcing this constraint and analyze their theoretical and empirical properties. We propose two algorithms for solving the spline fitting problem: one that uses standard optimization techniques and one that trains a Multi-Layer Perceptrons (MLP) generator to approximate the solutions under various settings and perturbations. The generator approach can speed up the fitting process when we need to solve the problem repeatedly, such as when constructing confidence bands using bootstrap. We evaluate our method against several existing methods, some of which do not use the monotonicity constraint, on some monotone curves with varying noise levels. We demonstrate that our method outperforms the other methods, especially in high-noise scenarios. We also apply our method to analyze the polarization-hole phenomenon during star formation in astrophysics. The source code is accessible at \texttt{\url{https://github.com/szcf-weiya/MonotoneSplines.jl}}.
\end{abstract}


\noindent%
{\it Keywords:}
B-spline; Monotone Fitting; Multi-Layer Perceptron; Parametric Bootstrap.
\vfill

\newpage

\section{Introduction}

Monotonicity or other shape constraints are commonly seen in many applications, such as monotonic patterns of growth curves in biology and ecology \parencite{kahmGrofitFittingBiological2010}, shapes of certain economic instruments during certain periods \parencite{pattonMonotonicityAssetReturns2010}, dose response functions in medicine, 
and curves related to the item response theory (IRT) in psychometrics \parencite{embretsonItemResponseTheory2013}. 
Various monotone fitting approaches have been proposed to handle such types of data. \textcite{ramsayMonotoneRegressionSplines1988} introduced integrated splines (I-splines),
and constructed monotone splines with non-negative coefficients on the I-splines. \textcite{meyerInferenceUsingShaperestricted2008} recommended using quadratic I-splines 
because a linear combination of the piecewise quadratic I-splines is non-decreasing if and only if their coefficients are non-negative. Similar to I-splines, \textcite{murrayFastFlexibleMethods2016} presented an integral parameterization for monotone polynomials.
\textcite{heMonotoneBsplineSmoothing1998} proposed a monotone smoothing method by minimizing the $L_1$ loss in the space of quadratic B-splines subject to the nonnegative (or nonpositive) first derivative constraint. Since the first derivative of quadratic B-splines is linear, the problem can be solved by linear programming.
The $L_1$ loss is a special median case of the loss for quantile functions, and the algorithm has been summarized in \textcite{heCOBSQualitativelyConstrained1999}, which is later updated by \textcite{ngFastEfficientImplementation2007} with available R package \texttt{COBS}.

Monotonicity is just one typical shape constraint, and there are several other typical shape constraints, such as the sign and the curvature. Researchers have developed a general workflow for general shape-constrained problems. \textcite{turlachShapeConstrainedSmoothing2005} proposed an iterative procedure: firstly, fit an unconstrained smoothing, then verify if the fit satisfies the shape constraints. If not, identify violations of the shape constraints, and add new constraints for the violations to refit. The procedure is iterated until all shape constraints are fulfilled. \textcite{pappOptimizationModelsShapeconstrained2011} and \textcite{pappShapeConstrainedEstimationUsing2014} characterized the monotonicity and curvature conditions with Bernstein polynomials and solved with a conic optimization approach. \textcite{navarro-garciaConstrainedSmoothingOutofrange2023} formulated the (sign, monotonicity, or curvature) constrained smoothing via the non-negative penalized splines approach based on a necessary and sufficient condition for non-negative univariate polynomials. Particularly, for monotone smoothing, they imposed such a condition on the first derivative of splines.

Another well-known approach for preserving monotonicity is the isotonic regression \parencite{barlowIsotonicRegressionProblem1972}. However, the isotonic regressions always under-smooth the data. To fulfill the smoothing requirement, \textcite{mammenEstimatingSmoothMonotone1991} proposed to conduct a smoothing step before (or after) the isotonisation step for isotonic regressions. Recently, \textcite{groeneboomConfidenceIntervalsMonotone2023} proposed to construct consistent bootstrap confidence intervals using the smoothed isotonic (i.e., smoothing after isotonisation) estimator, since the bootstrap based on the ordinary isotonic estimator is inconsistent.

Recently, neural network-based deep learning algorithms have been successfully applied to problems with complex patterns or structures, such as image and video classifications, speech recognition, and text modeling \parencite{jamesIntroductionStatisticalLearning2021}. There are also some researches on imposing the monotonicity constraint on neural networks. \textcite{zhangFeedforwardNetworksMonotone1999} proposed a monotone Multi-Layer Perceptron (MLP) network by replacing the weights $w_i$ between different layers with $e^{w_i}$. An implementation of monotone MLP based on \textcite{zhangFeedforwardNetworksMonotone1999} can be found in \textcite{cannonMonmlpMultilayerPerceptron2017}'s R package \texttt{monmlp}. \textcite{langMonotonicMultilayerPerceptron2005} used a similar idea but considered the hyperbolic tangent activation function and assumed positive weights between different layers. \textcite{mininComparisonUniversalApproximators2010} proposed a min-max neural network and constrained the weights to be non-negative to obtain a monotone model.

Splines are powerful tools for local polynomial representations, among which
the cubic spline is the most popular one. Some researchers even claim that cubic spline is the lowest-order spline for which the knot-discontinuity is not visible to human eyes, and there is scarcely any good reason to go beyond cubic splines \parencite{hastieElementsStatisticalLearning2009}.  Monotone quadratic splines proposed by \textcite{heMonotoneBsplineSmoothing1998}  do not have second derivatives at the knots, so that a commonly used measure of smoothness (and penalty)  in smoothing splines cannot be defined. Curiously, there has not been much literature on monotone fitting using cubic splines. To fill this gap, we here propose {\it monotone cubic B-splines} and provide two approaches for fitting them: one based on existing optimization toolboxes and another achieved by our proposed MLP generator, which takes advantage of the power and flexibility of neural networks. The MLP generator can be further extended to estimate the confidence band efficiently.

This article is organized as follows. Section \ref{sec:monobspl_method} elaborates the proposed monotone cubic B-splines by comparing the fitting errors under different monotonicity conditions (Section~\ref{sec:cond_mono}), giving an explicit form to the solution (Section~\ref{sec:sol}), and discussing the selection of tuning parameters(Section~\ref{sec:monobspl_paras}). Section \ref{sec:two_alg} presents two algorithms for fitting the monotone splines: existing optimization toolboxes and our proposed Multi-Layer Perceptrons (MLP) generator. The MLP generator can be further extended to estimate the confidence band efficiently in Section \ref{sec:conf_band}. Extensive simulations for comparing the monotone splines with other monotone fitting techniques are given in Section \ref{sec:monobspl_sim}. We also apply our monotone splines on an astrophysics project to explore the mystery of star formation in Section~\ref{sec:app}. Limitations and future work are discussed in Section \ref{sec:monobspl_discuss}.



\section{Monotone Cubic B-spline}\label{sec:monobspl_method}

\subsection{Preliminary}\label{sec:monobspl_pre}
An order-$M$ spline with $K$  ordered knots at $\xi_1<\xi_2<\cdots<\xi_K$
can be represented by a linear combination of $K+M$ bases: $f(x)=\sum_{i=1}^{K+M}\gamma_i h_i(x)$, where the set of functions,
$\{h_i(x)\}_{i=1}^{K+M}$, are called {\it bases}.
Although there are many equivalent bases for representing spline functions, the B-spline basis system, which has been discussed in detail in \textcite{deboorPracticalGuideSplines1978}, is attractive numerically \parencite{ramsayFunctionalDataAnalysis2005}. 

The order-$M$ B-spline basis can be defined through a lower order B-spline basis recursively. Let $B_{i,m}(x)$ be the $i$-th B-spline basis function of order $m\in [1:M]$. Let $\xi_0 <\xi_1$ and $\xi_{K+1}>\xi_K$ be two boundary knots. Augment the knot sequence $\{\xi_\ell\}_{\ell=1}^K$ to $\{\tau_i\}_{i=1}^{K+2M}$ by extending two boundary knots:
\begin{equation}
\begin{split}
    &\tau_1\le \tau_2\le\cdots\le\tau_M\le\xi_0\,,\\
    &\tau_{M+1} = \xi_1 < \tau_{M+2} = \xi_2 < \cdots < \tau_{K+M} =\xi_K\,,\\
    &\xi_{K+1}\le \tau_{K+M+1}\le \tau_{K+M+2}\le\cdots\le\tau_{K+2M}\,.
    \end{split}
    \label{eq:def_aug}
\end{equation}
B-spline basis functions are recursively defined as follows,
\begin{align*}
    B_{i, 1}(x) & = \begin{cases}
    1 & \text{ if } \tau_i\le x < \tau_{i+1}\\
    0 & \text{otherwise}
    \end{cases}
    \quad\,,i=1,\ldots,K+2M-1\\
    B_{i, m}(x) &= \frac{x-\tau_i}{\tau_{i+m-1} -\tau_i}B_{i,m-1}(x) + \frac{\tau_{i+m}-x}{\tau_{i+m}-\tau_{i+1}}B_{i+1,m-1}(x)\,,
    \quad i=1,\ldots,K+2M-m\,.
\end{align*}

Given $n$ paired points $\{x_i, y_i\}_{i=1}^n$, spline fitting aims to find some function $f$ by minimizing
\begin{equation}
\sum_{i=1}^n(y_i-f(x_i))^2 + \lambda\int \{f''(t)\}^2dt\,,
\label{eq:def_spline}
\end{equation}
where $\lambda \ge 0$ is the penalty parameter to discourage the roughness. Write $f(x)$ as a cubic B-spline,
$f(x) = \sum_{j=1}^J\gamma_jB_j(x)\,,$
where $J=K+M=K+4$ is the number of basis functions, $B_j$'s are the basis functions, and $\gamma_j$'s are the coefficients. Denote $\bfy = [y_1,\ldots,y_n]^T, \gamma =[\gamma_1,\ldots,\gamma_J]^T.$ Let $\bB$ be a $n\times J$ matrix with entries $\bB_{ij}=B_j(x_i),i=1,\ldots,n;j=1,\ldots,J$. Then we can write $f(x_i) = \bfb^T_{i}\gamma$, where $\bfb_i$ is the $i$-th row vector of $\bB$. Note that
$f''(t) = \sum_{j=1}^J\gamma_jB_j''(t)\,,$
then 
$$
\int [f''(t)]^2dt = \int \sum_{j=1}^J\sum_{k=1}^J\gamma_j\gamma_kB_j''(t)B_k''(t)dt = \sum_{j=1}^J\sum_{k=1}^J\gamma_j\gamma_k\int B_j''(t)B_k''(t)dt = \gamma^T\bOmega\gamma\,,
$$
where $\{\bfOmega\}_{jk}=\int B_j''(s)B_k''(s)ds$ is called the roughness penalty matrix. Now Problem \eqref{eq:def_spline} can be expressed in a matrix form,
\begin{equation}
  \min_{\gamma}\, (\bfy - \bfB\gamma)^T(\bfy - \bfB\gamma) + \lambda \gamma^T\bfOmega\gamma\,.    \label{eq:matrss_cubic_smooth_spline}
\end{equation}
 The solution turns out to be
\begin{equation}
\hat\gamma
= (\bfB^T\bfB+\lambda\bfOmega)^{-1}\bfB^T\bfy\,.    
\label{eq:sol_ridge}
\end{equation}
If $\lambda = 0$, the spline is referred to as a \emph{cubic spline}, and it is called a \emph{smoothing spline} when $\lambda > 0$.

\subsection{Encoding the Monotonicity Constraint}\label{sec:cond_mono}

For quadratic B-splines, the nonnegative (or nonpositive) first derivative constraint can be encoded as a set of linear inequality constraints on the knots (see the proof of Proposition~\ref{prop:conditions} in the \supp). For cubic B-splines, however, such simple  linear constraints at the knots are no longer sufficient to ensure monotonicity. Proposition~\ref{prop:conditions} below describes a set of computing-friendly constraints for a cubic B-spline to be monotone.

\begin{proposition}
Let $\xi_0<\xi_1<\ldots<\xi_{K} < \xi_{K+1}$ be the knots of cubic B-spline basis functions $B_{j,4}(x),j=1,\ldots,J=K+4$, and let $\{\tau_i\}_{i=1}^{K+8}$ be the augmented knots as defined in \eqref{eq:def_aug}.
To ensure a cubic spline function $f(x)=\sum_{j=1}^J\gamma_jB_{j,4}(x)$ to be non-decreasing in $x\in [\xi_0, \xi_{K+1}]$:
  \begin{itemize}
      \item A sufficient condition is that $\gamma_1\le \gamma_2\le\cdots\le \gamma_J$, which can be written in matrix form as, 
      \begin{equation}\label{eq:cond_suff}
      \bA\gamma\le 0\,,\;\textrm{ where }\;
\bA = \begin{bmatrix}
1 & -1 & 0 & \cdots & 0 & 0\\
0 & 1 & -1 &\cdots & 0 & 0\\
0 & 0 & 1 & \cdots & 0 & 0\\
\vdots & \vdots & \vdots & \ddots & \vdots & \vdots\\
0 & 0 & 0 & \cdots & 1 & -1
\end{bmatrix}_{(J-1)\times J}\,;
      \end{equation}
\item A necessary condition is that the first derivative is nonnegative at the knots, i.e., $f'(x) \ge 0\,, \forall x\in\{\xi_i\}_{i=0}^{K+1}$, which can be written in the following matrix form, 
\begin{equation}\label{eq:cond_nece}
\bfB^{(1)}\bD^{-1}\bA\gamma\le 0\,,
\end{equation}
where $\bfD$ is the diagonal matrix of size $(J-1)\times (J-1)$ with entries $\bfD_{jj} = \tau_{j+4} - \tau_{j+1}, j=1,\ldots,J-1$, and $\bfB^{(1)}$ is the matrix of size $(K+2)\times(J-1)$ with entries $\bfB^{(1)}_{i, j} = B_{j+1,3}(\xi_{i-1})$, i.e., the evaluation of the basis of one order lower $B_{j,3}(x),j=1,\ldots,J-1$ at $\xi_i, i=1,\ldots, K+2$.
\item Furthermore, a sufficient and necessary condition is that the first derivative is nonnegative at two boundary knots $\{\xi_0, \xi_{K+1}\}$ and the points $\{x: f''(x)=0\}$ with zero second-derivative, specifically,
\begin{equation}
f'(x) \ge 0\,,\forall x\in \underbrace{\{\xi_0, \xi_{K+1}\}\cup \bigcup_{\pi_i\in [0, 1],i=0,\ldots,K}\{\pi_i\xi_i+(1-\pi_i)\xi_{i+1}\}}_{\cX_0}\,,    
\label{eq:suff_and_nece}
\end{equation}

where
$$
\pi_i =\frac{A_{i+4}}{A_{i+4}-A_{i+3}}\,,A_{i+3} = \dfrac{1}{\tau_{i+5}-\tau_{i+3}}\left[\dfrac{\gamma_{i+3}-\gamma_{i+2}}{\tau_{i+6}-\tau_{i+3}} -\dfrac{\gamma_{i+2}-\gamma_{i+1}}{\tau_{i+5}-\tau_{i+2}} \right]\,.
$$
Let
$\bB^{(1)}_0$ be the matrix of size $n_0\times (J-1)$ with entries $\{\bB^{(1)}_0\}_{ij} = B_{j+1, 3}(x_{i})$, where $x_i\in \cX_0,  i=1,\ldots, n_0\triangleq\vert\cX_0\vert$ and $j=1,\ldots,J-1$, then the condition \eqref{eq:suff_and_nece} can be written in the following matrix from
\begin{equation}\label{eq:cond_suff_nece}
\bB^{(1)}_0\bD^{-1}\bA\gamma\le 0\,.
\end{equation}
\item Particularly, if $f''(x)\ge 0$, the sufficient condition $\bfA\gamma \le 0$ is also necessary for $f$ being non-decreasing. 
  \end{itemize}
  \label{prop:conditions}
\end{proposition}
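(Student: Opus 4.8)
The plan is to reduce every statement to a sign analysis of $f'$, exploiting that differentiating a B-spline lowers its order by one. By de Boor's differentiation formula, $f'(x)=3\sum_{i=2}^{J} c_i\,B_{i,3}(x)$ with $c_i=(\gamma_i-\gamma_{i-1})/(\tau_{i+3}-\tau_i)$, so $f'$ is an order-$3$ (quadratic) B-spline; differentiating once more gives $f''(x)=6\sum_k A_k B_{k,2}(x)$, an order-$2$ (piecewise-linear) B-spline whose coefficients $A_k=(c_k-c_{k-1})/(\tau_{k+2}-\tau_k)$ are exactly the quantities $A_{i+3}$ displayed in the statement. First I would record the bookkeeping identities $\bD^{-1}\bA\gamma=-(c_2,\dots,c_J)^T$ and $(\bfB^{(1)}\bD^{-1}\bA\gamma)_i=-\tfrac13 f'(\xi_{i-1})$, and likewise $(\bB^{(1)}_0\bD^{-1}\bA\gamma)_i=-\tfrac13 f'(x_i)$ for $x_i\in\cX_0$; these turn each inequality on $f'$ into the claimed matrix inequality. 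Throughout I use the non-negativity and partition-of-unity of the B-spline basis and the equivalence ``$f$ non-decreasing on $[\xi_0,\xi_{K+1}]$'' $\iff$ ``$f'\ge 0$ there.''

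The first two bullets are then immediate. For sufficiency, $\bA\gamma\le0$ means $c_i\ge0$ for all $i$; since $B_{i,3}\ge0$, we get $f'=3\sum_i c_i B_{i,3}\ge0$, hence $f$ is non-decreasing. For the necessary condition, $f$ non-decreasing forces $f'\ge0$ everywhere, in particular at the knots $\xi_0,\dots,\xi_{K+1}$, which is precisely $\bfB^{(1)}\bD^{-1}\bA\gamma\le0$ by the evaluation identity above.

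The sharp characterization in the third bullet is the heart of the proof and is where the cubic case genuinely differs from the quadratic one. The key observation is that, because $f$ is a cubic spline, $f'$ is $C^1$ on the whole interval $[\xi_0,\xi_{K+1}]$ (the knots are not singular points of $f'$). A $C^1$ function on a closed interval attains its global minimum either at an endpoint or at an interior stationary point, and at an interior minimizer $x^\star$ one has $(f')'(x^\star)=f''(x^\star)=0$. Hence $f'\ge0$ on the whole interval if and only if $f'\ge0$ at the two boundary knots $\{\xi_0,\xi_{K+1}\}$ and at every point of $\{x:f''(x)=0\}$, which is exactly $\cX_0$. It remains to locate $\{x:f''(x)=0\}$: since $f''$ is piecewise linear, on $[\xi_i,\xi_{i+1}]$ it interpolates linearly between its endpoint values, which I would show are proportional to $A_{i+3}$ at $\xi_i$ and $A_{i+4}$ at $\xi_{i+1}$; solving the scalar linear equation for the zero gives the convex-combination weight $\pi_i=A_{i+4}/(A_{i+4}-A_{i+3})$, and this zero lies in the interval precisely when $\pi_i\in[0,1]$, the selection rule in the union defining $\cX_0$. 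Translating ``$f'\ge0$ on $\cX_0$'' through $\bB^{(1)}_0$ yields \eqref{eq:cond_suff_nece}. I expect this bullet to be the main obstacle, on two counts: the conceptual step, namely that $C^1$-regularity lets one ignore all interior knots and test only the boundary knots plus the zeros of $f''$ (in contrast to the quadratic case, where $f'$ is merely continuous and every knot must be checked); and the careful index bookkeeping that matches the endpoint values of $f''$ on each subinterval to $A_{i+3},A_{i+4}$ and thereby reproduces the stated $\pi_i$. I would also dispose of the degenerate case $A_{i+3}=A_{i+4}$, where $f''$ is constant on the piece, by noting that if it vanishes there its zero set meets a knot, which then enters $\cX_0$ as $\pi_i\in\{0,1\}$.

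For the final bullet, assume in addition $f''\ge0$. Evaluating $f''$ at the knot where each $B_{k,2}$ peaks shows $f''\ge0$ forces $A_k\ge0$, i.e.\ $c_2\le c_3\le\cdots\le c_J$; equivalently $f'$ is non-decreasing, so its minimum over $[\xi_0,\xi_{K+1}]$ is the left-endpoint value $f'(\xi_0)$. Because the boundary-knot structure isolates the first derivative coefficient, $f'(\xi_0)$ is a positive multiple of $c_2$, so ``$f$ non-decreasing'' becomes $f'(\xi_0)\ge0\iff c_2\ge0$. Combining $c_2\ge0$ with the chain $c_2\le\cdots\le c_J$ gives $c_i\ge0$ for all $i$, i.e.\ $\bA\gamma\le0$; together with the first bullet this makes the condition necessary and sufficient in the convex case. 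This bullet should be routine once the derivative coefficients and the left-endpoint value of $f'$ are in hand.
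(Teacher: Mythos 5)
Your proposal is correct and follows essentially the same route as the paper's proof: de Boor's differentiation formula to express $f'$ and $f''$ as lower-order B-splines with differenced coefficients, positivity of the basis for sufficiency, evaluation at the knots for necessity, locating the minimum of the $C^1$ function $f'$ at the boundary or at the zeros of the piecewise-linear $f''$ (solved interval-by-interval to get $\pi_i$) for the sharp condition, and, under $f''\ge 0$, reducing monotonicity to $f'(\xi_0)\ge 0$ so that the coefficient chain $0\le c_2\le\cdots\le c_J$ yields $\bA\gamma\le 0$. The only differences are cosmetic: you replace the paper's Gaussian elimination on the constraint matrix in the last bullet by the direct chain argument, and you explicitly handle the degenerate case $A_{i+3}=A_{i+4}$, which the paper glosses over.
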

\begin{remark}
    There is at most a point with zero second-derivative in each interval $(\xi_i, \xi_{i+1})$, which is given by $\pi_i\xi_i + (1-\pi_i)\xi_{i+1}$. But the point might not lie in the interval $(\xi_i, \xi_{i+1})$, so we restrict $\pi_i\in[0, 1]$. Alternatively, we can write 
    $\bar\pi_i = \pi_iI(0 \le \pi_i \le 1) + I(\pi_i > 1)$,
    then evaluate $f'(x)\ge 0$ at $\{\xi_0,\xi_{K+1}\}\cup \{\bar\pi\xi_i+(1-\bar\pi_i)\xi_{i+1}\}_{i=0}^K$.
\end{remark}

Figure~\ref{fig:demo_conditions} illustrates those conditions in Proposition~\ref{prop:conditions} with a simple spline function $f(x) =\sum_{j=1}^4\gamma_jB_j(x), x\in[0, 1]$. There are only two boundary knots $\{0, 1\}$ (no internal knots) when $J=4$, then all three conditions do not rely on the knot locations, and hence we can directly compare the conditions by checking the space of $\gamma$. When the gap between $\gamma_3$ and $\gamma_4$ becomes smaller, the sufficient and necessary condition tends to be closer to the sufficient condition.
\begin{figure}[H]
    \centering
    \begin{subfigure}{0.5\textwidth}
    \includegraphics[width=\textwidth]{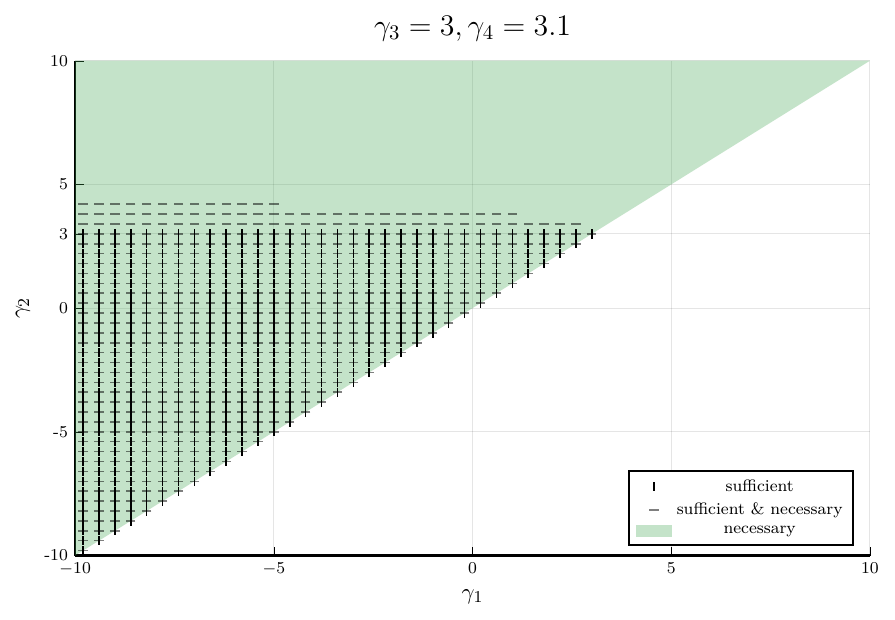}        
    \end{subfigure}%
    \begin{subfigure}{0.5\textwidth}
    \includegraphics[width=\textwidth]{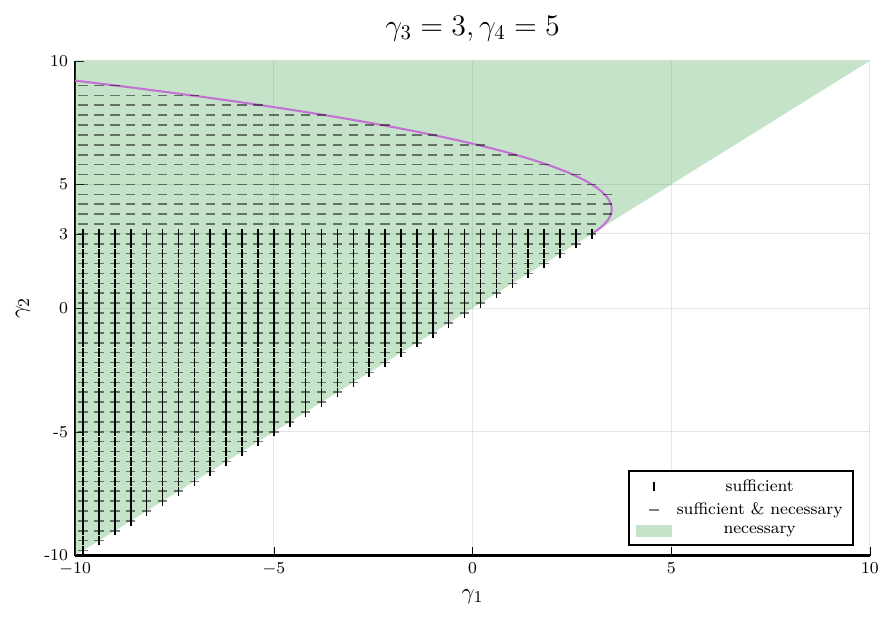}        
    \end{subfigure}
    \caption{Illustration of conditions for non-decreasing spline functions with $J=4$ basis functions and knots $\xi=\{0, 1\}$. Each panel shows the regions in the space of $(\gamma_1, \gamma_2)$ when $\gamma_3$ and $\gamma_4$ are fixed. The vertical hatching region is the sufficient condition, the horizontal hatching region denotes the sufficient and necessary condition, and the green shaded area shows the necessary condition. The upper boundary of the sufficient and necessary condition is $\gamma_1=\gamma_2 - (\gamma_2 - \gamma_3)^2/(\gamma_4-\gamma_3)$.}
    \label{fig:demo_conditions}
\end{figure}
Generally, we can illustrate the relationship between the conditions in Proposition~\ref{prop:conditions} using the diagram in Figure~\ref{fig:conditions}. The sufficient condition \eqref{eq:cond_suff} is the most restrictive, then the sufficient and necessary condition \eqref{eq:cond_suff_nece}, and finally, the necessary condition \eqref{eq:cond_nece}. 
\begin{figure}[H]
    \centering
    \begin{tikzpicture}
\coordinate (smallest) at (0, 0);
\coordinate (largest) at (0, 1);
\coordinate (middle) at (0, 0.5);
\coordinate (s1) at (0.3, 0.4);
\coordinate (s1p1) at (1.2, 1.2);
\coordinate (s1p2) at (1.6, -0.6);
\coordinate (s2) at (-1, 0.2);
\coordinate (s3) at (1, -0.25);
\node[ellipse,draw, minimum height = 3em] at (smallest) {$\bA\gamma \le 0$};
\node[ellipse, draw, minimum height=9em, minimum width=24em, label={[label distance=-2em]90:$\bfB^{(1)}\bfD^{-1}\bA\gamma \le 0$}] at (largest) {};
\node [ellipse, draw, dashed, minimum height= 6em, minimum width=15em, label={[label distance=-2em]90:$\bfB^{(1)}_{0}\bfD^{-1}\bA\gamma \le 0$}] at (middle) {};
\end{tikzpicture}
    \caption{Spaces of coefficient $\gamma$ under different conditions. The necessary and sufficient  condition is represented by the dashed ellipse, the sufficient condition is denoted by the smallest ellipse, and the necessary condition is shown by the largest ellipse. 
    }
    \label{fig:conditions}
\end{figure}
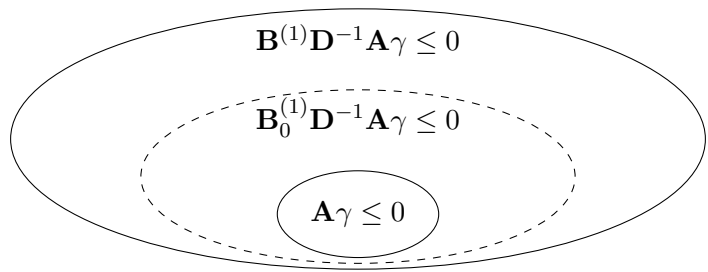

All three conditions in Proposition~\ref{prop:conditions} can be cast into a unified matrix form $\bS\gamma\le 0$,
where $\bS \in\{\bA, \bfB^{(1)}\bfD^{-1}\bfA, \bfB^{(1)}_0\bfD^{-1}\bfA\}$.
Both $\bfB^{(1)}\bfD^{-1}\bfA$ (necessary condition) and $\bfB^{(1)}_0\bfD^{-1}\bfA$ (sufficient and necessary condition) depend on the knot locations $\{\tau_i\}_{i=1}^{K+8}$. Moreover, $\bfB_0^{(1)}$ even depends on the coefficient vector $\gamma$. In other words, the sufficient condition is the simplest one, so we adopt the sufficient condition and formulate monotone splines as follows:
\begin{equation}
\begin{split}
    \min_{\gamma}\quad & (\bfy - \bB\gamma)^T(\bfy - \bB\gamma) + \lambda\gamma^T\bOmega\gamma\,,\\
    \subto\quad & \alpha\bA\gamma \le 0\,,
\end{split}    
\label{eq:matrss_cubic_smooth_spline_monotone}
\end{equation}
where $\alpha=1$ implies a non-decreasing function while $\alpha=-1$ results in a non-increasing spline. Without loss of generality, we focus on the non-decreasing scenario $\alpha=1$. 
We call the resulting fit the  \emph{Monotone Cubic Spline (MCS)} if
there is no smoothness penalty, i.e., $\lambda=0$; otherwise, we call it \emph{Monotone Smoothing Spline (MSS)} (i.e., if $\lambda>0$).

We derive an estimation error bound for monotone cubic splines to demonstrate the ability of the sufficient but not necessary condition to fit any monotone functions.
Without loss of generality, we restrict $x\in [0, 1]$. Let $0 = \xi_0 < \xi_1 < \cdots < \xi_{K+1}=1$ be nearly equally spaced knots of the B-spline, where the number of (internal) knots $K$ grows along with the number of observations $n$, i.e., $K =J-4= k(n)$ for some function $k(\cdot )$.
\begin{theorem}\label{thm:gap_suff_f}
Suppose $f(x)=\sum_{j=1}^J\gamma_jB_j(x)$ is a non-decreasing function in $[\xi_0, \xi_{K+1}]$, i.e., $f'(x)\ge 0$.  Given $n$ observations $\{(x_i, y_i)\}_{i=1}^n$, where $y_i=f(x_i)+\varepsilon_i$. Consider the monotone spline fitting $\hat f(x) = \sum_{j=1}^J\hat\gamma_jB_j(x)$, where $\hat\gamma$ is the solution to Problem \eqref{eq:matrss_cubic_smooth_spline_monotone} ($\lambda=0$).
Under Assumptions~\ref{ass:1},\ref{ass:2},\ref{ass:3},
\begin{assumption}[Bounded second derivative]\label{ass:1}
There is a constant $L$ such that $\vert f''(x)\vert \le L$; 
\end{assumption}
\begin{assumption}[Nearly uniform $x_i$'s]\label{ass:2}
$x_i$'s are nearly uniformly located in $[\xi_0, \xi_{K+1}]$. Specifically, there are at most $\frac{n}{K+1}(1+\eta_1)$ points between any two adjacent knots, where $\eta_1$ controls the bias of number of points since there are $\frac{n}{K+1}$ points in each interval on average;
\end{assumption}
\begin{assumption}[Nearly equally spaced knots]\label{ass:3}
The knots are nearly equally spaced. Specifically, the interval width between any two adjacent knots is at most $\frac{(1+\eta_2)}{K+1}$, where $\eta_2$ controls the difference from the average interval length $\frac{1}{K+1}$.
\end{assumption}
we have
\begin{itemize}
    \item No error: If $\varepsilon_i = 0$,
$$
\frac 1n \sum_{i=1}^n(f(x_i)-\hat f(x_i))^2 \le \frac{36(1+\eta_1)(1+\eta_2)^2L^2J}{(J-3)^3} = O(J^{-2})\,.
$$
    \item Nonzero error: If $\varepsilon_i$'s are i.i.d. sub-Gaussian errors with parameter $\sigma>0$. Then if $J=C n^{1/3}$ for some constant $C > 0$, for any $M\ge 0$, it holds with at least probability $1-2J^{-M^2}$ that
    \begin{align*}
    \frac{1}{n}\sum_{i=1}^n(f(x_i)-\hat f(x_i))^2
    &\le\frac{36(1+\eta_1)(1+\eta_2)^2L^2J}{(J-3)^3} + \frac{32}{c_1}\sigma^2(1+M)^2(1+\eta_1)\frac{\log J}{(J-3)^2}\\
    &=O\left(\frac{\log J}{J^2}\right)\,,     
    \end{align*}
    where $c_1>0$ is a constant such that the minimum eigenvalue of $\frac 1n\bfB^T\bfB \ge c_1/J$ \parencite{shenLocalAsymptoticsRegression1998}.
\end{itemize}
\end{theorem}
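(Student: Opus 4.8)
The plan is to compare the constrained fit $\hat f$ against a carefully chosen \emph{feasible} approximant $\tilde f$ of the truth, and then to transfer the quality of $\tilde f$ to $\hat f$ using the optimality of $\hat\gamma$ in Problem~\eqref{eq:matrss_cubic_smooth_spline_monotone}. The conceptual heart of the argument—and the reason the statement is worth proving—is that the feasible set is governed by the \emph{sufficient} condition $\bA\gamma\le 0$ (ordered coefficients), which is strictly smaller than the set of monotone cubic splines; so the first task is to exhibit an element of this smaller cone that still approximates an arbitrary monotone $f$ to the right order. I would take $\tilde\gamma_j=f(t_j^{\ast})$ with $t_j^{\ast}=(\tau_{j+1}+\tau_{j+2}+\tau_{j+3})/3$ the Greville abscissae, so that $\tilde f=\sum_j f(t_j^{\ast})B_j$ is Schoenberg's variation-diminishing spline. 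Feasibility is then immediate: the $t_j^{\ast}$ are strictly increasing in $j$, so monotonicity of $f$ forces $\tilde\gamma_1\le\cdots\le\tilde\gamma_J$, i.e. $\bA\tilde\gamma\le 0$.

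Next I would bound $\tilde f-f$ on each interval. Writing $\tilde f(x)-f(x)=\sum_j\bigl(f(t_j^{\ast})-f(x)\bigr)B_j(x)$ and Taylor-expanding $f$ about $x$, the affine part vanishes because the Greville construction reproduces linear functions ($\sum_j t_j^{\ast}B_j(x)=x$ and $\sum_jB_j(x)=1$), leaving a remainder controlled by $\vert f''\vert\le L$ (Assumption~\ref{ass:1}) and by the local mesh. Since $B_j(x)\ne 0$ only for $t_j^{\ast}$ within a bounded number of mesh widths of $x$, Assumption~\ref{ass:3} bounds each $\vert t_j^{\ast}-x\vert$ by a multiple of $(1+\eta_2)/(J-3)$ (recall $K+1=J-3$). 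Converting the supremum to the empirical average uses Assumption~\ref{ass:2}: partitioning $\frac1n\sum_i(\cdot)^2$ by interval and bounding the number of design points per interval by $\frac{n}{K+1}(1+\eta_1)$ yields, after collecting constants, a bound of the stated form $\frac{36(1+\eta_1)(1+\eta_2)^2L^2J}{(J-3)^3}=O(J^{-2})$. I would track the constants only loosely, as the rate is what matters. The noise-free case is then essentially free: since $y_i=f(x_i)$, $\tilde\gamma$ is feasible, and $\hat\gamma$ minimizes the ($\lambda=0$) residual sum of squares over the cone,
\[
\tfrac1n\textstyle\sum_i(f(x_i)-\hat f(x_i))^2=\tfrac1n\|\bfy-\bfB\hat\gamma\|^2\le\tfrac1n\|\bfy-\bfB\tilde\gamma\|^2=\tfrac1n\textstyle\sum_i(f(x_i)-\tilde f(x_i))^2,
\]
which is exactly the bias bound.

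For the noisy case I would start from the same basic inequality $\|\bfy-\bfB\hat\gamma\|^2\le\|\bfy-\bfB\tilde\gamma\|^2$, substitute $\bfy=\bff+\boldsymbol\varepsilon$ with $\bff=(f(x_i))_i$, and rearrange to $\|\bff-\bfB\hat\gamma\|^2\le\|\bff-\bfB\tilde\gamma\|^2+2(\hat\gamma-\tilde\gamma)^T\bfB^T\boldsymbol\varepsilon$. The first term is $n$ times the bias bound; the cross term is where the work lies. I would apply Cauchy--Schwarz as $2\|\hat\gamma-\tilde\gamma\|\,\|\bfB^T\boldsymbol\varepsilon\|$ and convert the coefficient norm into a function norm through the spectral bound $\lambda_{\min}(\tfrac1n\bfB^T\bfB)\ge c_1/J$ \parencite{shenLocalAsymptoticsRegression1998}, giving $\|\hat\gamma-\tilde\gamma\|^2\le\frac{J}{nc_1}\,(\|\bff-\bfB\hat\gamma\|+\|\bff-\bfB\tilde\gamma\|)^2$. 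For $\|\bfB^T\boldsymbol\varepsilon\|$ I would use that each coordinate $\sum_iB_k(x_i)\varepsilon_i$ is sub-Gaussian with variance proxy at most $\sigma^2\sum_iB_k(x_i)$, which Assumption~\ref{ass:2} bounds by a multiple of $\sigma^2 n(1+\eta_1)/(K+1)$ since $0\le B_k\le1$ and $B_k$ is supported on a bounded number of intervals; a union bound over the $J$ coordinates at threshold $(1+M)\sqrt{\log J}$ produces the factors $(1+M)^2\log J$ and the exceedance probability $2J^{-M^2}$. Substituting into the basic inequality gives a self-bounding quadratic inequality in $a:=\|\bff-\bfB\hat\gamma\|$ of the shape $a^2\le b^2+\kappa(a+b)$; solving it and dividing by $n$ produces the bias term plus a stochastic term of order $\frac{1}{c_1}\sigma^2(1+M)^2(1+\eta_1)\frac{\log J}{(J-3)^2}$, and finally setting $J=Cn^{1/3}$ balances the two contributions into the stated $O(\log J/J^2)$.

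I expect two steps to be the main obstacles. The first is the feasibility-plus-accuracy of the approximant: it is crucial that the \emph{same} construction landing inside the restrictive cone $\bA\tilde\gamma\le 0$ also attains the $f''$-driven approximation order, since this is precisely what rescues the (non-necessary) sufficient condition. The second, and technically heavier, is the high-probability control of the cross term: the coefficient-space noise $\bfB^T\boldsymbol\varepsilon$ must be tied back to the function-space error through $\lambda_{\min}(\tfrac1n\bfB^T\bfB)\ge c_1/J$, and the resulting quadratic inequality must be resolved so that it is the $\log J$ from the union bound, rather than a factor $J$, that governs the variance term.
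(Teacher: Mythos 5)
Your proposal is correct, but it takes a genuinely different route from the paper's own proof. The paper works with the true coefficient vector $\gamma$ (recall the theorem assumes $f=\sum_j\gamma_jB_j$) and uses as feasible comparator its isotonic projection $\iso(\gamma)$ onto the cone $\{\bA\gamma\le 0\}$; the technical heart there is a lemma showing that the coefficients of a monotone cubic spline with $\vert f''\vert\le L$ are \emph{nearly} monotone---every violation satisfies $\gamma_{i-2}-\gamma_{i-1}\le 3L(1+\eta_2)/(J-3)$, proved by evaluating $f'$ and $f''$ at the knots and an AM--GM argument inside each interval---after which Lemma 4 of \textcite{yangContractionUniformConvergence2019} converts the violation bound into $\Vert\gamma-\iso(\gamma)\Vert_\infty\le 3L(1+\eta_2)/(J-3)$, and the largest eigenvalue of $\bfB^T\bfB$ delivers the bias term; in the noisy case the paper bounds the cross term by $\ell_\infty$--$\ell_1$ H\"older and crucially invokes the contraction property of isotonic projection ($\Vert\iso(\hat\gamma)-\iso(\gamma)\Vert_p\le\Vert\hat\gamma-\gamma\Vert_p$) to close a quadratic inequality in $\Vert\hat\gamma-\iso(\gamma)\Vert_1$. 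You instead take the Schoenberg variation-diminishing spline at the Greville abscissae as the comparator, whose feasibility is immediate and whose $O(Lh^2)$ uniform accuracy follows from linear reproduction plus Taylor, and you handle the cross term by $\ell_2$ Cauchy--Schwarz, the triangle inequality, and the same spectral bound $\lambda_{\min}(\tfrac1n\bfB^T\bfB)\ge c_1/J$, with no isotonic-regression machinery at all. Both arguments yield the stated rates. Your route is more elementary, does not actually require $f$ to lie in the spline space (only monotone with $\vert f''\vert\le L$), and in the noise-free case gives the sharper bias $O(L^2J^{-4})$ rather than $O(L^2J^{-2})$; the price is that your constants do not reproduce the displayed ones ($36(1+\eta_1)(1+\eta_2)^2L^2J/(J-3)^3$ and $32/c_1$), and your variance term picks up a factor $C^3$ from $J=Cn^{1/3}$---harmless, since the theorem's content is the rate, and you flagged that constants are tracked loosely. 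What the paper's route buys is the near-monotonicity-of-coefficients lemma itself, which quantifies how far the sufficient condition $\bA\gamma\le 0$ is from necessity, and whose intermediate inequalities (in particular the quadratic inequality in $\Vert\delta\Vert_1$) are reused verbatim in the proof of Theorem~\ref{thm:gap_suff_g}.
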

Theorem~\ref{thm:gap_suff_f} implies that monotone splines based on the sufficient condition can achieve a small approximation error, which can be further reduced with more basis functions in order $J^{-2}$, to the monotone splines based on the sufficient and necessary condition. Furthermore, besides functions represented by B-splines, we can also obtain the approximation error to arbitrary monotone functions, as stated in Theorem~\ref{thm:gap_suff_g}.
\begin{theorem}\label{thm:gap_suff_g}
Suppose $g$ is a strictly increasing function, i.e., $g'(x) > 0$, and consider $n$ observations $\{(x_i, g(x_i))\}_{i=1}^n$.
Let $\hat g_n=\sum_{j=1}^J\hat\gamma_jB_j(x)$ be the monotone spline fitting under the sufficient condition, where $\hat\gamma$ is the solution to Problem \eqref{eq:matrss_cubic_smooth_spline_monotone}  with $\lambda=0$, and $\tilde g_n(x) = \sum_{j=1}^J\tilde \gamma_jB_j(x)$ be the monotone spline fitting based on the sufficient and necessary condition, where $\tilde\gamma$ is the solution to Problem~\eqref{eq:matrss_cubic_smooth_spline_monotone} by replacing the condition \eqref{eq:cond_suff} with \eqref{eq:cond_suff_nece}. And denote $\check g_n = \sum_{j=1}^J\check \gamma_jB_j(x)$ as the unconstrained B-spline fitting, where $\check \gamma$ is the solution to Problem~\eqref{eq:matrss_cubic_smooth_spline} ($\lambda=0$). Under Assumptions~\ref{ass:1},\ref{ass:2},\ref{ass:3}, when $n$ is sufficiently large
\begin{itemize}
    \item the monotone spline fitting $\tilde g_n$ is identical to the unconstrained B-spline fitting $\check g_n$, i.e., $\check g_n = \tilde g_n$,
    and we have
    \begin{equation}
    \frac 1n\sum_{i=1}^n(\tilde g_n(x_i) - g(x_i))^2 = O(J^{-8})\,.
    \label{eq:error_rate_J8}
    \end{equation}
    \item the monotone spline fitting $\hat g_n$ based on the sufficient condition satisfies
    \begin{equation}
    \frac{1}{n}\sum_{i=1}^n(\hat g_n(x_i) - g(x_i))^2 = O(J^{-2})\,.  
    \label{eq:error_rate_J2}
    \end{equation}
\end{itemize}
\end{theorem}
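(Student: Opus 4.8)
The plan is to treat the two bullets separately: I would derive the $O(J^{-8})$ rate for $\tilde g_n$ from the identity $\tilde g_n=\check g_n$ combined with classical cubic-spline approximation theory, and obtain the $O(J^{-2})$ rate for $\hat g_n$ by reducing to Theorem~\ref{thm:gap_suff_f}.

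For the first bullet, the heart of the matter is showing that the unconstrained least-squares spline $\check g_n$ is itself non-decreasing once $n$ (hence $J$) is large. By the necessary-and-sufficient part of Proposition~\ref{prop:conditions}, this means $\check\gamma$ satisfies \eqref{eq:cond_suff_nece}, so $\check g_n$ is feasible for the problem defining $\tilde g_n$; since $\tilde g_n$ minimizes the same ($\lambda=0$) least-squares objective over a subset that already contains the global minimizer $\check g_n$, the two must coincide (uniqueness holding because $\bfB^T\bfB$ is nonsingular for $J\le n$ with nearly uniform design). To show $\check g_n$ is non-decreasing I would pass through its derivative. Let $s^\ast$ be a cubic-spline quasi-interpolant of $g$ (e.g.\ of de~Boor--Fix type); standard estimates give the simultaneous bounds $\|g-s^\ast\|_\infty = O(J^{-4})$ and $\|g'-s^{\ast\prime}\|_\infty = O(J^{-3})$, provided $g$ has a bounded fourth derivative (a strengthening of Assumption~\ref{ass:1}). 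Least-squares optimality then yields $\frac1n\sum_i(\check g_n(x_i)-g(x_i))^2 \le \frac1n\sum_i(s^\ast(x_i)-g(x_i))^2 \le \|g-s^\ast\|_\infty^2 = O(J^{-8})$, which is exactly \eqref{eq:error_rate_J8}.

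To upgrade this into control of the derivative, I would exploit the equivalence between the empirical norm $\frac1n\sum_i p(x_i)^2$ and the $L_2$ norm $\int p^2$ on the spline space, guaranteed by Assumptions~\ref{ass:2} and~\ref{ass:3} through the eigenvalue bound $\lambda_{\min}(\frac1n\bfB^T\bfB)\ge c_1/J$ of \textcite{shenLocalAsymptoticsRegression1998} together with the analogous scaling of the Gram matrix $\int B_jB_k$. Writing $p=\check g_n-s^\ast$, this gives $\|p\|_{L_2}^2 = O(J^{-8})$, and applying the standard inverse inequalities for splines, $\|p'\|_{L_2}\le C J\|p\|_{L_2}$ and $\|q\|_\infty \le C J^{1/2}\|q\|_{L_2}$, yields $\|\check g_n'-s^{\ast\prime}\|_\infty = O(J^{-5/2})$. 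Combined with $\|s^{\ast\prime}-g'\|_\infty=O(J^{-3})$ and the fact that $g'\ge\delta>0$ on the compact interval (by strict monotonicity and continuity of $g'$), we obtain $\check g_n'\ge \delta - o(1) > 0$ for all large $n$, so $\check g_n$ is non-decreasing and $\tilde g_n=\check g_n$ follows. I expect this derivative transfer, moving from an $L_2$ bound on function values to a sup-norm bound on the derivative via inverse inequalities and norm equivalence, to be the main technical obstacle.

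For the second bullet I would reduce to Theorem~\ref{thm:gap_suff_f}. Take $f:=s^\ast$ from above, which for large $J$ is a genuinely non-decreasing cubic B-spline (since $s^{\ast\prime}\to g'>0$ uniformly) with $\|g-f\|_\infty=O(J^{-4})$ and bounded second derivative, so Assumptions~\ref{ass:1}--\ref{ass:3} hold for it. Theorem~\ref{thm:gap_suff_f} applied to the noiseless data $(x_i,f(x_i))$ bounds the gap between $f$ and its sufficient-condition fit $\hat f$ by $\frac1n\sum_i(f(x_i)-\hat f(x_i))^2=O(J^{-2})$. Because both $\hat g_n$ and $\hat f$ are projections, in the empirical inner product, of their respective response vectors onto the \emph{same} closed convex cone $\{\bfB\gamma:\bfA\gamma\le 0\}\subset\mathbb R^n$, the non-expansiveness of projection onto a convex set gives $\frac1n\sum_i(\hat g_n(x_i)-\hat f(x_i))^2\le \frac1n\sum_i(g(x_i)-f(x_i))^2 = O(J^{-8})$. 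A triangle inequality over the three pieces, $\hat g_n$ to $\hat f$ ($O(J^{-8})$), $\hat f$ to $f$ ($O(J^{-2})$), and $f$ to $g$ ($O(J^{-8})$), then delivers $\frac1n\sum_i(\hat g_n(x_i)-g(x_i))^2=O(J^{-2})$, which is \eqref{eq:error_rate_J2}. As an alternative (and in fact sharper) route one could bound $\hat g_n$ directly through the feasibility of the Schoenberg variation-diminishing spline, whose Greville-abscissae coefficients $g(t_j^\ast)$ are monotone in $j$ (so that $\bfA\gamma\le 0$ holds) and which approximates $g$ to order $O(J^{-2})$ in sup norm; but the reduction to Theorem~\ref{thm:gap_suff_f} is the most transparent way to obtain precisely the stated rate.
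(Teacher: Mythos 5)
Your proposal is correct, and while it keeps the paper's overall skeleton, it substitutes genuinely different tools at both key steps. For the first bullet, you and the paper both argue that the unconstrained fit $\check g_n$ is eventually increasing, hence feasible for (and therefore equal to) $\tilde g_n$; but the paper gets there by directly citing asymptotic results for the least-squares spline itself ($\sup_x\vert \check g_n - g\vert = O(J^{-4})$ from \textcite{shenLocalAsymptoticsRegression1998} and $\sup_x\vert \check g_n' - g'\vert = O(J^{-3})$ from \textcite{zhouDerivativeEstimationSpline2000}), whereas you construct a quasi-interpolant $s^\ast$ and transfer the empirical $O(J^{-8})$ bound to the derivative via norm equivalence and spline inverse inequalities, arriving at the slightly weaker but still sufficient $O(J^{-5/2})$. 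For the second bullet the routes diverge more substantially: the paper rewrites $\bfg_n = \tilde\bfg_n + \bfe$ with the deterministic residual $\bfe$ satisfying $\Vert\bfe\Vert_\infty = O(J^{-4})$, and re-runs the noisy-case machinery from its own proof of Theorem~\ref{thm:gap_suff_f} (isotonic contraction, the eigenvalue bound $\lambda_{\min}(\tfrac 1n\bfB^T\bfB)\ge c_1/J$, and the bound $\max_j \tfrac 1n\vert\bfe^T\bfB_j\vert = O(J^{-5})$), then expands the square with a cross term; you instead invoke Theorem~\ref{thm:gap_suff_f} as a black box on the monotone spline $s^\ast$ and bridge to $g$ via the non-expansiveness of the Euclidean projection onto the closed convex cone $\{\bfB\gamma: \bA\gamma\le 0\}$ plus a triangle inequality, which is cleaner and more modular. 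Your Schoenberg variation-diminishing alternative (feasibility of the monotone Greville coefficients together with $\Vert Vg - g\Vert_\infty = O(J^{-2})$ and least-squares optimality over the cone) is also valid and in fact yields the sharper rate $O(J^{-4})$ for \eqref{eq:error_rate_J2}, consistent with the paper's own remark that its bound is loose. One point worth noting: both arguments require $g$ to be smoother than Assumption~\ref{ass:1} provides (essentially fourth-order smoothness) for the $O(J^{-4})$ sup-norm approximation underlying \eqref{eq:error_rate_J8}; you flag this explicitly as a strengthened assumption, while the paper inherits it silently through its citations.
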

\begin{remark}
    The error bound in Equation~\eqref{eq:error_rate_J8} for the monotone spline fitting $\tilde g_n$ based on the (complicated) sufficient and necessary condition is derived from the bias bound of unconstrained spline fitting (e.g., \textcite{shenLocalAsymptoticsRegression1998}), which is quite tight. However, the error bound in Equation~\eqref{eq:error_rate_J2} for the monotone spline fitting $\hat g_n$ based on the simple sufficient condition is relatively loose, where we take the asymptotic results \parencite{yangContractionUniformConvergence2019} from the isotonic regression as an internal step since the monotonic coefficients $\hat \gamma_1\le \cdots \le \hat \gamma_J$ can be viewed as an isotonic fitting to the following isotonic regression,
    $$
    \hat\gamma_1,\ldots,\hat\gamma_J = 
    \argmin_{\beta_1\le \cdots\le \beta_J}\sum_{j=1}^J(\gamma_j-\beta_j)^2\,,
    $$
    so there might be some scarification in the error bound due to the internal isotinisation step. Note that Theorem~\ref{thm:gap_suff_f} also depends on such an internal step, so the error bounds therein might be improved.
    Investigating more tight error bounds or the min-max lower bounds might be a potentially interesting direction.
\end{remark}

The proofs of those theorems (and theorems in the following section) are given in the \supp.

\subsection{Characterization of Solutions}\label{sec:sol}
Theorem \ref{thm:solution} 
describes the solutions of monotone splines. If a solution $\hat\gamma$ has no ties, i.e., $\bA\hat\gamma < 0$ strictly holds, then the solution is the same as the unconstrained splines. If the solution has ties, then the solution can be written as a least-square-like form using unique elements of the solution.
\begin{theorem}\label{thm:solution}
  Let $\hat\gamma$ be the solution to Problem \eqref{eq:matrss_cubic_smooth_spline_monotone} when $\alpha=1$.
  \begin{itemize}
      \item If there is no ties in $\hat\gamma$, i.e., $\hat\gamma_1 <\cdots < \hat\gamma_J$, then
      $\hat\gamma = (\bfB^T\bfB+\lambda\bOmega)^{-1}\bfB^T\bfy\,.$
      \item If there exists ties in $\hat\gamma$, such as
$\hat\gamma_1<\cdots<\hat\gamma_{k_1}=\cdots=\hat\gamma_{k_2} < \cdots < \hat\gamma_{k_{m-1}} = \hat\gamma_{k_{m}} <\cdots< \hat\gamma_J\,,$
where $1\le k_1\le k_2\le\cdots \le k_{m-1}\le k_{m}\le J$, and let $\hat\beta$ be the sub-vector $\hat\gamma$ with unique entries, then
$$
\hat\gamma=\bfG^T\hat\beta = \bG^T(\bG\bB^T\bB\bG^T+\lambda\bG\bOmega\bG^T)^{-1}\bG\bB^T\bfy\,,
$$
where
\begin{equation}
    \bG = \begin{bmatrix}
\bI_{k_1-1} & & &&&\\
& \one^T_{k_2-k_1+1} & &&&\\
& & \ddots &&&\\
 &  &  & \bI_{k_{m-1}-k_{m-2}-1} &&\\
 &  & &  & \one^T_{k_{m}-k_{m-1}+1} &\\
 &  & &  & & \bI_{J-k_{m}}
\end{bmatrix}\,,
\label{eq:md_g}
\end{equation}
in which $\one$ is the all-ones vector, and $\bI$ is the identity matrix.
If $\bfG=\bfI$, it reduces to the above no-tie case.
  \end{itemize}
\end{theorem}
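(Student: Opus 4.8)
The plan is to treat Problem~\eqref{eq:matrss_cubic_smooth_spline_monotone} (with $\alpha=1$) as a convex quadratic program and apply the Karush--Kuhn--Tucker (KKT) conditions. Write $\bH = \bfB^T\bfB + \lambda\bOmega$ and $\bfc = \bfB^T\bfy$. Since $\bfB^T\bfB$ and $\bOmega$ are both positive semidefinite, the objective $Q(\gamma) = (\bfy-\bfB\gamma)^T(\bfy-\bfB\gamma) + \lambda\gamma^T\bOmega\gamma$ is convex and the constraint $\bA\gamma\le 0$ is linear, so the KKT conditions are necessary \emph{and} sufficient for optimality. Introducing a multiplier $\mu\ge 0$, stationarity reads $\bH\hat\gamma - \bfc + \tfrac12\bA^T\mu = 0$, together with primal feasibility $\bA\hat\gamma\le 0$, dual feasibility $\mu\ge 0$, and complementary slackness $\mu_i(\bA\hat\gamma)_i = 0$. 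All subsequent reasoning is an analysis of which constraints are active.

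For the no-tie case the argument is immediate: $\hat\gamma_1<\cdots<\hat\gamma_J$ gives $\bA\hat\gamma<0$ strictly, so every constraint is inactive and complementary slackness forces $\mu = 0$. Stationarity then collapses to $\bH\hat\gamma = \bfc$, i.e.\ $\hat\gamma = (\bfB^T\bfB+\lambda\bOmega)^{-1}\bfB^T\bfy$, matching the unconstrained fit \eqref{eq:sol_ridge}.

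For the tied case I would first record the structural fact that $\hat\gamma = \bfG^T\hat\beta$, i.e.\ the block-expansion $\bfG^T$ reconstructs the full coefficient vector from its distinct values $\hat\beta$; this is merely a restatement of the given tie pattern. The crucial step is to left-multiply the stationarity equation by $\bfG$, yielding $\bfG\bH\bfG^T\hat\beta - \bfG\bfc + \tfrac12\bfG\bA^T\mu = 0$, and then to prove $\bfG\bA^T\mu = 0$. Indeed, by complementary slackness $\mu$ is supported on the active constraints, which are exactly the within-block equalities $\hat\gamma_i = \hat\gamma_{i+1}$; the corresponding row of $\bA$ is $e_i^T - e_{i+1}^T$, and because columns $i$ and $i+1$ of $\bfG$ coincide whenever $i,i+1$ lie in the same tied block (both equal the indicator of that block's $\one^T$ row), we get $\bfG(e_i-e_{i+1})=0$ for every active $i$. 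Hence $\bfG\bA^T\mu = \sum_{i\,\mathrm{active}}\mu_i\,\bfG(e_i-e_{i+1}) = 0$, and stationarity reduces to $\bfG\bH\bfG^T\hat\beta = \bfG\bfc$.

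It then remains to invert $\bfG\bH\bfG^T = (\bfB\bfG^T)^T(\bfB\bfG^T) + \lambda\bfG\bOmega\bfG^T$. Since $\bfG^T$ has full column rank (its columns are linearly independent block indicators) and $\bfB$ has full column rank, $\bfB\bfG^T$ has full column rank, so $(\bfB\bfG^T)^T(\bfB\bfG^T)\succ 0$; adding $\lambda\bfG\bOmega\bfG^T\succeq 0$ preserves positive definiteness even when $\lambda=0$. Therefore $\hat\beta = (\bfG\bfB^T\bfB\bfG^T + \lambda\bfG\bOmega\bfG^T)^{-1}\bfG\bfB^T\bfy$ and $\hat\gamma = \bfG^T\hat\beta$, as claimed. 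The hard part is the identity $\bfG\bA^T\mu = 0$: everything hinges on correctly matching the support of the active multipliers to the within-block constraints and verifying that $\bfG$ annihilates exactly those constraint rows. Once this bookkeeping between the tie pattern, the active set, and the block structure of $\bfG$ is pinned down, the remaining steps are routine linear algebra.
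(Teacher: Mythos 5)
Your proof is correct and follows essentially the same route as the paper's: both rest on the KKT conditions, identify the active constraints with the within-block ties, and exploit the fact that $\bfG^T$ parametrizes the null space of the active constraint rows. The only cosmetic difference is that the paper (citing Lawson--Hanson) passes through the equivalent equality-constrained problem $\bA_\cE\gamma = 0$ and then substitutes $\gamma = \bfG^T\beta$, whereas you reduce the stationarity equation directly by verifying $\bfG\bA^T\mu = 0$ --- the same algebraic fact in transposed form.
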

With the solution given in Theorem~\ref{thm:solution}, we can explicitly compare the mean square error (MSE) between the monotone cubic spline and the classical cubic spline. Theorem~\ref{thm:mse} implies that the monotone cubic spline can achieve a better MSE when the noise level is large, which would be further validated in the simulations of Section~\ref{sec:monobspl_sim}.

\begin{theorem}
Suppose observations $\{(x_i, y_i)\}_{i=1}^n$ are generated from $y = f(x)+\varepsilon, \varepsilon\sim N(0, \sigma^2)$.
Let $\bfB$ with entries $\bfB_{ij}=B_j(x_i)$ be the evaluated B-spline matrix and denote $\bff = [f(x_1),\ldots,f(x_n)]^T$.
Consider the MSE of the monotone cubic spline $\hat\bfy = \bfB\hat\gamma$, where $\hat\gamma$ is the solution to Problem \eqref{eq:matrss_cubic_smooth_spline_monotone} with $\lambda=0$, 
and the MSE of the cubic spline $\hat\bfy^\ls = \bfB\hat\gamma^\ls$, where $\hat\gamma^\ls$ is the solution to Problem \eqref{eq:matrss_cubic_smooth_spline} with $\lambda=0$,
$$
\MSE(\hat\bfy) =\bbE \Vert \bB\hat\gamma-\bff\Vert_2^2
\,,\qquad \MSE(\hat\bfy^\ls) = \bbE\Vert\bB\hat\gamma^\ls-\bff\Vert_2^2\,.
$$
    If $\sigma^2\ge \frac{\bff^T(\bH-\bH_g)\bff}{J-g}$, where $\bH=\bfB(\bfB^T\bfB)^{-1}\bfB^T, \bH_g = \bB\bG^T(\bfG\bfB^T\bfB\bfG^T)^{-1}\bfG\bfB^T$ and $\bG$ of size $g\times J$ is defined in Equation~\eqref{eq:md_g}, the monotone cubic spline can achieve a better MSE since $\bfH-\bfH_g$ is a positive semidefinite matrix.
\label{thm:mse}
\end{theorem}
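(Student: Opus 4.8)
The plan is to recognize that, with $\lambda=0$, both competing estimators are orthogonal projections of $\bfy$, after which the result reduces to a bias--variance bookkeeping exercise. First I would invoke Theorem~\ref{thm:solution}: when the monotone solution has the tie pattern encoded by $\bfG$, the fitted values are $\hat\bfy=\bfB\hat\gamma=\bfB\bfG^T(\bfG\bfB^T\bfB\bfG^T)^{-1}\bfG\bfB^T\bfy=\bfH_g\bfy$, while the unconstrained fit is $\hat\bfy^\ls=\bfB(\bfB^T\bfB)^{-1}\bfB^T\bfy=\bfH\bfy$. Writing $\bfC=\bfB\bfG^T$, I note $\bfH_g=\bfC(\bfC^T\bfC)^{-1}\bfC^T$ is the orthogonal projector onto $\mathrm{col}(\bfB\bfG^T)$ and $\bfH$ the projector onto $\mathrm{col}(\bfB)$; both are symmetric and idempotent, with $\rank(\bfH)=J$ and $\rank(\bfH_g)=g$ (the block form \eqref{eq:md_g} shows $\bfG$ has full row rank $g$, so $\bfB\bfG^T$ has full column rank $g$ when $\bfB$ does).

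Next I would apply the standard MSE decomposition for a projection estimator. Substituting $\bfy=\bff+\boldsymbol{\varepsilon}$ with $\bbE[\boldsymbol{\varepsilon}]=0$ and $\bbE[\boldsymbol{\varepsilon}\boldsymbol{\varepsilon}^T]=\sigma^2\bfI$, and using idempotency and symmetry of any projector $\bfP$, the cross term vanishes and
\begin{equation*}
\bbE\Vert\bfP\bfy-\bff\Vert_2^2=\bff^T(\bfI-\bfP)\bff+\sigma^2\tr(\bfP)=\bff^T(\bfI-\bfP)\bff+\sigma^2\rank(\bfP)\,.
\end{equation*}
Applying this with $\bfP=\bfH_g$ and $\bfP=\bfH$ gives $\MSE(\hat\bfy)=\bff^T(\bfI-\bfH_g)\bff+\sigma^2 g$ and $\MSE(\hat\bfy^\ls)=\bff^T(\bfI-\bfH)\bff+\sigma^2 J$. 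Their difference is
\begin{equation*}
\MSE(\hat\bfy^\ls)-\MSE(\hat\bfy)=\sigma^2(J-g)-\bff^T(\bfH-\bfH_g)\bff\,,
\end{equation*}
which is nonnegative precisely when $\sigma^2\ge \bff^T(\bfH-\bfH_g)\bff/(J-g)$, the stated threshold (and $J>g$ whenever ties occur, so the division is legitimate).

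It then remains to confirm that $\bfH-\bfH_g$ is positive semidefinite, which also certifies that the right-hand side of the threshold is a genuine (nonnegative) variance floor. Since $\mathrm{col}(\bfB\bfG^T)\subseteq\mathrm{col}(\bfB)$, the nested-range identity $\bfH\bfH_g=\bfH_g=\bfH_g\bfH$ holds; expanding $(\bfH-\bfH_g)^2$ shows $\bfH-\bfH_g$ is itself symmetric and idempotent, hence an orthogonal projector and therefore PSD. This closes the argument. The one conceptual subtlety I would flag explicitly, rather than a computational obstacle, is that $\bfG$ is data-dependent in practice: the tie pattern of $\hat\gamma$ is random, so the clean projection representation and the exact trace-as-rank variance are to be read conditionally on the active tie structure (equivalently, treating $\bfG$ as fixed). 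The inequality is thus a statement about the fit for a given realized collapsing pattern, and I would state this framing at the outset so that the bias--variance step is unambiguous.
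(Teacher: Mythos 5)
Your proposal is correct and follows essentially the same route as the paper's proof: both invoke the projection representation $\hat\bfy=\bfH_g\bfy$ from Theorem~\ref{thm:solution}, decompose each MSE as squared bias plus trace of the variance (yielding $\bff^T(\bfI-\bfH_g)\bff+g\sigma^2$ versus $\bff^T(\bfI-\bfH)\bff+J\sigma^2$), compare to obtain the threshold on $\sigma^2$, and establish positive semidefiniteness of $\bfH-\bfH_g$ by showing it is idempotent. Your explicit statement of the nested-range identity $\bfH\bfH_g=\bfH_g=\bfH_g\bfH$ and of the conditioning on the realized tie pattern $\bfG$ only makes precise what the paper uses implicitly.
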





\subsection{Selection of Parameters}\label{sec:monobspl_paras}

The tuning parameters of cubic splines include the number and placement of the knots. However, selecting the placement and number of knots can be a combinatorially complex task. A simple but adaptive way is to only determine the number of knots and places the knots at appropriate quantiles of the predictor variables \parencite{hastieGeneralizedAdditiveModels1990}. 
Specifically, we choose $K$ interior knots as the $j/(K+1), j=0, 1,\ldots,K, K+1$ quantile of the predictor variable, where $j=0$ and $j=K+1$ represent two boundary points. Since the number of interior knots $K$ and the number of basis functions $J$ satisfy $J=K+4$, where $4$ comes from the order of cubic spline, it turns out to select the number of basis functions.

In addition to the popular cross-validation (CV), there are other widely used criteria for model selection, which can be quickly calculated, such as Akaike information criterion (AIC), Bayesian information criterion (BIC), and generalized cross-validation (GCV),
\begin{align*}
  \text{AIC} &= n\log\sum_{i=1}^n(y_i-\hat f(x_i))^2 + 2\df\,,\\
  \text{BIC} &= n\log\sum_{i=1}^n(y_i-\hat f(x_i))^2 + \df\log n\,,\\
  \text{GCV} &=\frac{\sum_{i=1}^n(y_i-\hat f(x_i))^2}{(1-\df/n)^2}\,.
\end{align*}
All of them involve the degrees of freedom ($\df$). 
For monotone cubic splines, the degree of freedom can be derived based on the results of \textcite{chenDegreesFreedomProjection2020}, and the proof is given in the \supp.
\begin{proposition}
 \label{coro:df_mono_cubic}
The degrees of freedom for the monotone cubic B-spline $\hat\bfy=\bfB\hat\gamma$ is
\begin{equation}
  \df = \bbE[U_\bfy]\,,
  \label{eq:df_cubic_mono_spl}
\end{equation}
where $U_\bfy$ (depends on $\bfy$) is the number of unique coefficients in $\hat\gamma$.
\end{proposition}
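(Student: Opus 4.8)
The plan is to derive the result from Stein's identity, reducing the degrees of freedom to the expected divergence of the fitting map and then identifying that divergence with the number of distinct coefficient values. Recall that under the Gaussian model $\bfy\sim N(\bff,\sigma^2\bfI)$ the effective degrees of freedom is $\df=\sigma^{-2}\sum_{i=1}^n\Cov(\hat y_i,y_i)$, and that whenever $\bfy\mapsto\hat\bfy$ is weakly differentiable with integrable partial derivatives, Stein's lemma gives $\df=\bbE\big[\sum_{i=1}^n\partial\hat y_i/\partial y_i\big]=\bbE[\operatorname{div}\hat\bfy]$. So the first step is to establish this representation, for which it suffices to check mild regularity of the map $\bfy\mapsto\hat\bfy=\bfB\hat\gamma(\bfy)$.

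Second, I would describe the fit geometrically. With $\lambda=0$, Problem~\eqref{eq:matrss_cubic_smooth_spline_monotone} minimizes $\Vert\bfy-\bfB\gamma\Vert_2^2$ over the polyhedral cone $\{\gamma:\bfA\gamma\le 0\}$, so $\hat\bfy$ is the Euclidean projection of $\bfy$ onto the convex polyhedral cone $\cK=\{\bfB\gamma:\bfA\gamma\le 0\}\subset\bbR^n$, with uniqueness following from $\bfB$ having full column rank. Projection onto a closed convex set is nonexpansive, hence $1$-Lipschitz and weakly differentiable, which legitimizes the Stein representation above. Moreover the projection is piecewise linear: the relative interiors of the faces of $\cK$ partition $\bbR^n$, up to a set of Lebesgue measure zero, into cells on which the tie pattern of $\hat\gamma$ is locally constant. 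On such a cell Theorem~\ref{thm:solution} gives the closed form $\hat\bfy=\bfH_g\bfy$ with $\bfH_g=\bfB\bfG^T(\bfG\bfB^T\bfB\bfG^T)^{-1}\bfG\bfB^T$, where $\bfG$ is the $g\times J$ grouping matrix of \eqref{eq:md_g} encoding the ties and $g=U_\bfy$.

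Third, I would compute the divergence cell by cell. Since $\bfH_g$ is the orthogonal projection onto the column space of $\bfB\bfG^T$, it is symmetric and idempotent, and because $\bfB\bfG^T$ has full column rank $g$ we get $\operatorname{div}\hat\bfy=\tr(\bfH_g)=\rank(\bfB\bfG^T)=g=U_\bfy$ on that cell. Taking expectations then yields $\df=\bbE[U_\bfy]$, which is the claim. Note this step is exactly where $\lambda=0$ matters: only then is $\bfH_g$ a genuine projection with integer trace equal to the face dimension.

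The hard part will not be any single computation but the almost-everywhere local-linearity argument. One must verify that the set of $\bfy$ at which $\hat\bfy$ lands on a boundary between faces (where the projection fails to be differentiable and the tie structure is ambiguous) has Lebesgue measure zero, so that it contributes nothing to $\bbE[\operatorname{div}\hat\bfy]$, and that on the complementary cells the projection truly coincides with the linear map $\bfH_g$ attached to the face containing $\hat\bfy$. This is precisely the content of the projection-based degrees-of-freedom theory, so I would invoke \textcite{chenDegreesFreedomProjection2020} to supply the measure-zero and divergence-on-faces facts rather than reprove them; the residual work is to confirm that their regularity hypotheses apply to the cone $\cK$ and to match their face dimension with the count $U_\bfy$ of distinct coefficient values.
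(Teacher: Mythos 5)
Your proposal is correct, and it proves the proposition by a genuinely different route than the paper, even though both ultimately lean on \textcite{chenDegreesFreedomProjection2020}. The paper never exhibits the local linear form of the fit: it introduces $(\xi,\theta)=(\gamma,\bfB\gamma)$, stacks the constraints $\bfB\xi-\theta\le 0$, $-\bfB\xi+\theta\le 0$, $\bfA\xi\le 0$ into the constrained-projection format of Chen et al., and then applies their Theorem~3.2 verbatim: the active index set has cardinality $n+m_\bfy$ (with $m_\bfy$ the number of tied adjacent pairs), the relevant rank equals $J$, and their divergence formula collapses to $J-m_\bfy=U_\bfy$; the regularity and measure-zero issues are absorbed into that theorem's hypotheses. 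You instead go through Stein's identity explicitly, recognize $\hat\bfy$ as the Euclidean projection onto the polyhedral cone $\cK=\{\bfB\gamma:\bfA\gamma\le 0\}$, and compute the divergence on each linearity cell as the trace of the idempotent matrix $\bfH_g=\bfB\bfG^T(\bfG\bfB^T\bfB\bfG^T)^{-1}\bfG\bfB^T$ supplied by Theorem~\ref{thm:solution}, so that $\operatorname{div}\hat\bfy=\rank(\bfB\bfG^T)=g=U_\bfy$ pointwise; Chen et al.\ are invoked only for the almost-everywhere local linearity and the negligibility of cell boundaries. What your route buys is transparency: the identity $\df=\bbE[U_\bfy]$ is traced to a concrete rank computation that reuses the paper's own solution characterization, and it makes visible exactly where $\lambda=0$ matters (only then is the local map a genuine orthogonal projection with integer trace equal to a face dimension). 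What the paper's route buys is brevity: the index-set bookkeeping of Chen's Theorem~3.2 settles the dimension count and the regularity questions in a single stroke, with no case-by-case geometry. The one point to keep straight in your write-up is that it is the preimages under the projection of the relative interiors of the faces of $\cK$ (not the faces themselves) that partition $\bbR^n$ up to a Lebesgue-null set; this is the standard polyhedral decomposition and is exactly what the cited theory certifies, so it is a matter of phrasing rather than a gap.
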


On the other hand, the smoothing splines avoid the knot selection problem entirely by taking all unique $x_{i}$'s as the knots and controlling the complexity only by the regularization parameter $\lambda$. Actually, in practice, it is unnecessary to use all unique $x_i$'s, and any reasonable thinning strategy can save in computations and have a negligible effect on the fitness \parencite{hastieElementsStatisticalLearning2009}. In other words, for smoothing splines, we only need to tune the regularization parameter $\lambda$ and treat the number of basis functions as fixed.
Practically, the GCV principle is usually used to find the best $\lambda$, which can alleviate the potential high computational burden of CV. Thus, we also use the GCV criterion to determine the parameter of the monotone smoothing splines.

\section{Two Algorithms}\label{sec:two_alg}
This section introduces and compares two algorithms for fitting monotone cubic B-splines: 
\begin{itemize}
    \item Algorithm \ref{alg:point_opt}: the optimization (abbreviated as OPT hereafter) approach based on existing toolboxes;
    \item Algorithm \ref{alg:point_mlp}: the Multi-Layer Perceptrons (MLP) generator, which takes advantage of the flexibility of neural networks.
\end{itemize}

For simplicity, we focus on the increasing case $\alpha=1$, but it is straightforward to apply the results of the increasing case to the decreasing case $\alpha=-1$. The optimization problem in Equation \eqref{eq:matrss_cubic_smooth_spline_monotone} is a classical convex second-order cone problem by rewriting
\begin{align}
    \min &\; z\\
    \subto\; & \left\Vert \begin{bmatrix}
        \bfy - \bfB\gamma \\
        \sqrt{\lambda} \bfL^T\gamma
    \end{bmatrix}\right\Vert_2 \le z\label{eq:cone}\,,\\
    & \bA\gamma \le 0\,,\label{eq:mono_constraint}
\end{align}
where $\bfOmega = \bfL\bfL^T$ is the Cholesky's decomposition. Inequality \eqref{eq:cone} implies a cone
$$
\cQ = \{(z, u)\in \IR\times \IR^{n+J}\mid z\ge \Vert u\Vert_2\}\,,
$$
so we can adopt many mature optimization toolboxes to solve such a problem, such as \textcite{domahidiECOSSOCPSolver2013}'s ECOS (Embedded Conic Solver) and \textcite{grantCVXMatlabSoftware2014}'s disciplined convex programming system CVX. 
\begin{algorithm}
    \caption{Point Estimate: OPT solution}
    \label{alg:point_opt}
    \begin{algorithmic}[1]
    \REQUIRE Dataset $\bZ = \{(x_i, y_i)\}_{i=1}^n$, penalty parameter $\lambda$.
    \STATE Feed the problem into the existing optimization toolbox ECOS (or others)
    \RETURN $\hat\gamma$
    \end{algorithmic}
\end{algorithm}

Recall that without the monotonicity constraint \eqref{eq:mono_constraint}, the solution is expressed in Equation \eqref{eq:sol_ridge},
which is a function of $\bfy$ and $\lambda$ since usually $\bfB$ and $\bfOmega$ are treated as given. Furthermore, if we let $\bfy$ be given, then $\hat\gamma$ is a function in the penalty parameter $\lambda$. Then for monotone splines with the monotonicity constraint, a natural question is whether we can find a function of $\lambda$ to provide the solution for each $\lambda$. If we find the formula $G(\lambda)$, we can obtain the solution at a new  $\lambda$ by evaluating the function $G$ at $\lambda$ instead of re-running the optimization program by specifying the penalty parameter $\lambda$.

Inspired by \textcite{shinGenerativeMultiplepurposeSampler2022}'s Generative Multiple-purpose Sampler (GMS) (with the Generative Bootstrap Sampler (GBS) as a special case for bootstrap), we take a new viewpoint at the constrained solution by explicitly treating $\hat\gamma$ as a function of $\bfB$, $\bfy$, and the penalty parameter $\lambda$, denoted by $G(\bfB, \bfy, \lambda)$, which is also required to be a monotonic vector to fulfill the monotonicity constraint. Usually, the basis matrix $\bfB$ is fixed, so we ignore it in the functional argument of $G$, that is, $G(\bfy, \lambda)$. For an estimate of the coefficient $\gamma$, $\bfy$ is also fixed in this section, so it can be further written as $G(\lambda)$, but we will let $\bfy$ be random to consider the confidence band in the next section. Recently, the neural network has become a powerful tool for function representation, so we adopt the Multi-Layer Perceptrons (MLP) to construct our family of functions, that is, $\cG = \{G_\phi: \IR^{n+1}\rightarrow \IR^J,\phi\in \Phi\}$, where $\Phi$ represents the space of parameters that characterize the function family and
\begin{align*}
G_\phi(\bfy, \lambda) &= \texttt{sort}\circ \texttt{MLP}_\phi(\bfy, \lambda)=\texttt{sort}\circ L \circ g_K\circ\cdots \circ g_1(\bfy, \lambda)\,,
\end{align*}
where $g_k:\IR^{n_k}\rightarrow \IR^{n_{k+1}}$ is the feed-forward mapping with an activation function $\sigma:\IR\rightarrow\IR$, which operates element-wise when the input is a vector or a matrix,
$g_k(x) = \sigma(\bfw_kx+\bfb_k)\,,$
and $L:\IR^{n_{K+1}}\rightarrow \IR^J$ is a linear function mapping the final hidden layer $g_K\circ \cdots g_1(x)$ to the $J$-dimensional output space of $g$, and $\texttt{sort}:\IR^J\rightarrow \IR^J$ returns the vector in an ascending order.
\begin{remark}
    Generally, an MLP is trained by back-propagation, which requires the functions to be differentiable. However, \texttt{sort} is not a standard function and is not differentiable. On the other hand, for a vector $v$ of length $m$, the \texttt{sort} operation can be written as
    $\texttt{sort}(v) = \bfP v\,,$
    where $\bfP$ is an $m\times m$ permutation matrix. Practically, the deep learning frameworks, such as \texttt{PyTorch}\footnote{\texttt{\url{https://pytorch.org/docs/stable/notes/autograd.html}}} and \texttt{Flux}\footnote{\texttt{\url{https://fluxml.ai/Zygote.jl/latest/adjoints/}}}, would define
    $\nabla \texttt{sort}(v) = \bfP\,.$
    Besides, some researchers discuss the differentiable variants of the sort operation, such as \textcite{blondelFastDifferentiableSorting2020} and \textcite{groverStochasticOptimizationSorting2019}.
\end{remark}
In other words, we want to take advantage of the flexibility of neural networks to construct a generator $G(\lambda)$ to approximate the solution for each $\lambda$. Then the target function becomes
$$
\hat G=\argmin_{G\in \cG} \Vert \bfy -\bfB G(\bfy, \lambda) \Vert_2^2 + \lambda \Vert\bfL G(\bfy, \lambda)\Vert_2^2\,, \quad \forall \lambda\in [\lambda_l,\lambda_u]\,.
$$
 We consider a less ambitious but more robust and practically almost equivalent formulation by integrating $\lambda$ out,
\begin{equation}
\hat G = \argmin_{G\in \cG}\bbE_\lambda \left[\Vert \bfy -\bfB G(\bfy, \lambda) \Vert_2^2 + \lambda \Vert\bfL G(\bfy, \lambda)\Vert_2^2\right]\,,  
\label{eq:g_point_est}
\end{equation}
where $\lambda\sim U([\lambda_l,\lambda_u])$. Practically, we generate Monte Carlo samples to approximate it,
$$
\hat G = \argmin_{G\in \cG}\frac 1M \sum_{i=1}^M \left[\Vert \bfy -\bfB G(\bfy, \lambda_i) \Vert_2^2 + \lambda \Vert\bfL G(\bfy, \lambda_i)\Vert_2^2\right]\,.
$$
The MLP generator $G(\lambda) \triangleq G(\lambda\mid\bfy)$ is summarized in Algorithm \ref{alg:point_mlp}.
\begin{algorithm}
    \caption{Point Estimate: MLP Generator $G(\lambda\mid \bfy)$}
    \label{alg:point_mlp}
    \begin{algorithmic}[1]
    \REQUIRE Dataset $\bZ = \{(x_i, y_i)\}_{i=1}^n$, batch size $M$.
    \WHILE{not converged}
    \STATE Sample $\{\lambda_i\}_{i=1}^M$ where each $\lambda_i\sim U[\lambda_l, \lambda_u]$.
    \STATE Define the loss
    $$
    \cL = \frac{1}{M} \sum_{i=1}^M\left\{\Vert\bfy - \bfB G(\bfy, \lambda_i)\Vert_2^2 + \lambda_i\Vert\bfL G(\bfy, \lambda_i)\Vert_2^2\right\}\,.
    $$
    \STATE Update neural network $G$ to minimize $\cL$.
    \ENDWHILE
    \end{algorithmic}
\end{algorithm}

Figure \ref{fig:demo_mlp_generator} shows a demo using MLP to fit the data generated from a cubic curve with noise $\sigma=0.2$. The left panel displays the training loss $\bbE_\lambda[\cL(\lambda)]$, together with the losses evaluated at the boundary of tuning parameters $\lambda\in[\lambda_l, \lambda_u]$, $\cL(\lambda_l)$ and $\cL(\lambda_u)$. The right panel shows that for each $\lambda$, the solid fitted curve obtained from the OPT solution and the dashed curves obtained from the MLP generator coincide quite well, which indicates the MLP generator achieves a pretty good approximation.
\begin{figure}
    \centering
    \begin{subfigure}{0.475\textwidth}
    \includegraphics[width=\textwidth]{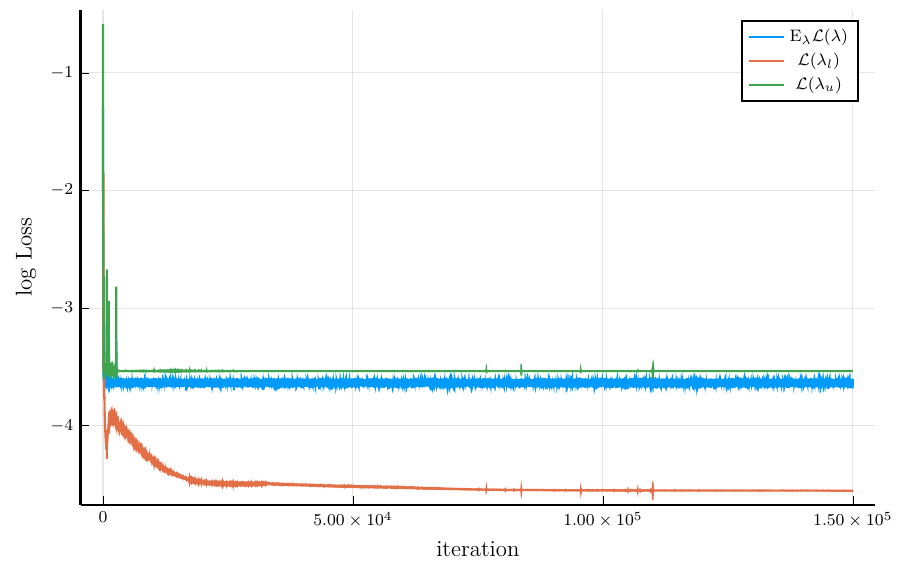}
    \end{subfigure}%
    \begin{subfigure}{0.525\textwidth}
    \includegraphics[width=\textwidth]{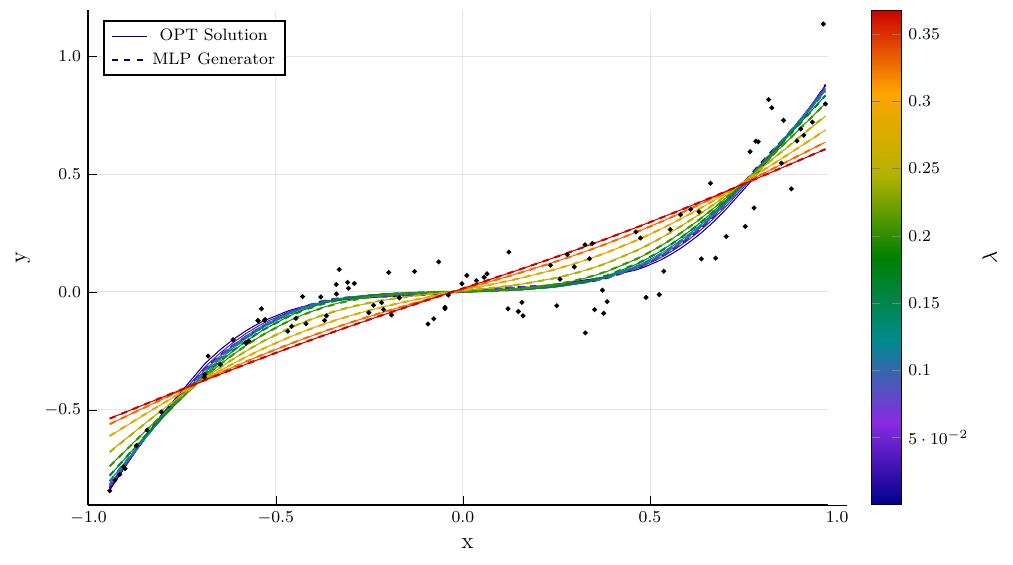}
    \end{subfigure}%
    \caption[Demo of MLP Generator]{Demo of MLP Generator on a cubic curve with noise 0.2. (a) Training loss $\bbE_\lambda[\cL(\lambda)]$ and two losses evaluated at the boundary of the penalty parameters $\lambda\in[\lambda_l, \lambda_u]$, $\cL(\lambda_l)$ and $\cL(\lambda_u)$. (b) The fitted curves via OPT solution (solid curves) and MLP generator (dashed curves) for each $\lambda$.}
    \label{fig:demo_mlp_generator}
\end{figure}


Since the MLP generator solution from Algorithm \ref{alg:point_mlp} is actually an approximation to the OPT solution from Algorithm \ref{alg:point_opt}, we compare these two solutions to measure the performance of the MLP generator. Firstly, we consider the difference between these two solutions, $\Vert \bfB G(\bfy, \lambda) - \bfB \hat\gamma(\lambda)\Vert$.
A relative one would be more informative, which alleviates the magnitude effect of the curve itself,
$$
\text{Relative Gap} = \frac{\Vert \bfB G(\bfy, \lambda)-\bfB\hat\gamma(\lambda)\Vert_2^2 }{\Vert \bfB\hat\gamma(\lambda)\Vert_2^2}\,.
$$
We also compare their fitness to the noise observation $\bfy$, $\Vert \bfy - \bfB G(\bfy, \lambda)\Vert$ and $\Vert \bfy - \bfB \hat\gamma(\lambda)\Vert$,
$$
\text{Fitness Ratio} = \frac{\Vert \bfy-\bfB G(\bfy,\lambda)\Vert_2^2}{\Vert \bfy-\bfB\hat\gamma(\lambda)\Vert_2^2}\,.
$$

We conduct 5 repeated experiments on four different curves and three different noise levels $\sigma=0.1, 0.2, 0.5$. The data are generated from 
\begin{equation}
    \begin{split}
    x_i&\sim U[-1, 1]\,,\\
y_i&=f_j(x_i)+N(0, \sigma^2), i=1,\ldots,n=100\,,\\
f_1(x) &= \frac{\exp(5x)}{1+\exp(5x)}\triangleq S(5x)\,, f_2(x) = e^x\,, f_3(x)=x^3\,, f_4(x)=\sin\left(\frac{\pi}{2}x\right)\,.
    \end{split}
    \label{eq:mlp_data}
\end{equation}
We choose the studied region $[\lambda_l,\lambda_u]$ of penalty parameter as $\lambda\in [\exp(-8), \exp(-2)]$, which is wide enough to contain the minimizer of the cross-validation error 
(see Figure S1b in the \supp). 
Table~\ref{tab:mlp_acc} summarizes the mean relative gap and mean fitness ratio, together with their standard errors, among 5 repeated experiments. Both relative gap and fitness ratio are measured at 10 even-spaced $\lambda$ in $[\lambda_l, \lambda_u]$, and Table~\ref{tab:mlp_acc} reports the values at $\lambda_l,\lambda_u$, and the average (column ``Avg.'') over 10 $\lambda$'s.
\begin{table}
  \caption[Accuracy of MLP Generator]{Mean Relative Gap and Mean Fitness Ratio (with standard error in parentheses) between MLP Generator solution and optimization solution among 5 repetitions. The ``Avg.'' represents the average measurement over 10 even-spaced $\lambda$ in $[\lambda_l, \lambda_u]$.}
  \label{tab:mlp_acc}
  \resizebox{\textwidth}{!}{
    \begin{tabular}{cccccccc}
\toprule
\multirow{2}{*}{noise} & \multirow{2}{*}{curve}&\multicolumn{3}{c}{Relative Gap}&\multicolumn{3}{c}{Fitness Ratio}\tabularnewline
\cmidrule(lr){3-5}
\cmidrule(lr){6-8}
&&$\lambda_l$&$\lambda_u$&Avg.&$\lambda_l$&$\lambda_u$&Avg.\tabularnewline
\midrule
\multirow{4}{*}{$\sigma = 0.1$}&$S(5x)$& 4.97e-04 (5.7e-04)& 9.84e-07 (1.0e-06)& 9.77e-05 (2.5e-04)& 1.03e+00 (1.7e-02)& 1.00e+00 (4.2e-03)& 1.01e+00 (1.1e-02)\tabularnewline
&$e^x$& 1.19e-04 (8.5e-05)& 7.19e-07 (1.5e-07)& 2.18e-05 (4.7e-05)& 1.04e+00 (2.3e-02)& 1.01e+00 (3.5e-03)& 1.01e+00 (1.4e-02)\tabularnewline
&$x^3$& 3.72e-04 (2.0e-04)& 2.72e-06 (1.2e-06)& 5.09e-05 (1.2e-04)& 1.01e+00 (6.4e-03)& 9.95e-01 (1.9e-03)& 1.00e+00 (4.4e-03)\tabularnewline
&$\sin(\pi x/2)$& 1.04e-03 (8.2e-04)& 2.68e-06 (2.2e-06)& 1.73e-04 (4.2e-04)& 1.09e+00 (6.5e-02)& 1.01e+00 (8.6e-03)& 1.02e+00 (3.4e-02)\tabularnewline
\midrule
\multirow{4}{*}{$\sigma = 0.2$}&$S(5x)$& 3.09e-04 (2.5e-04)& 3.05e-06 (2.7e-06)& 6.39e-05 (1.3e-04)& 1.02e+00 (8.0e-03)& 1.00e+00 (2.4e-03)& 1.00e+00 (5.9e-03)\tabularnewline
&$e^x$& 1.01e-04 (2.5e-05)& 5.52e-07 (2.6e-07)& 1.61e-05 (3.2e-05)& 1.02e+00 (7.3e-03)& 1.00e+00 (1.4e-03)& 1.00e+00 (6.6e-03)\tabularnewline
&$x^3$& 6.47e-04 (8.0e-04)& 2.18e-06 (1.5e-06)& 1.09e-04 (3.4e-04)& 1.01e+00 (6.8e-03)& 9.99e-01 (1.8e-03)& 1.00e+00 (3.6e-03)\tabularnewline
&$\sin(\pi x/2)$& 8.81e-04 (7.4e-04)& 1.87e-06 (1.7e-06)& 1.42e-04 (3.6e-04)& 1.02e+00 (1.7e-02)& 1.00e+00 (2.2e-03)& 1.00e+00 (9.2e-03)\tabularnewline
\midrule
\multirow{4}{*}{$\sigma = 0.5$}&$S(5x)$& 7.67e-04 (4.9e-04)& 2.76e-05 (4.6e-05)& 1.37e-04 (2.8e-04)& 1.01e+00 (5.1e-03)& 1.00e+00 (2.3e-03)& 1.00e+00 (2.8e-03)\tabularnewline
&$e^x$& 2.97e-04 (1.5e-04)& 6.73e-07 (5.9e-07)& 4.56e-05 (1.0e-04)& 1.01e+00 (4.9e-03)& 1.00e+00 (3.6e-04)& 1.00e+00 (3.8e-03)\tabularnewline
&$x^3$& 7.02e-04 (2.8e-04)& 2.86e-06 (1.2e-06)& 9.72e-05 (2.2e-04)& 1.00e+00 (4.6e-04)& 1.00e+00 (3.8e-04)& 1.00e+00 (1.1e-03)\tabularnewline
&$\sin(\pi x/2)$& 1.03e-03 (1.5e-03)& 3.40e-06 (4.9e-06)& 1.84e-04 (5.9e-04)& 1.01e+00 (6.8e-03)& 1.00e+00 (5.4e-04)& 1.00e+00 (3.7e-03)\tabularnewline
\bottomrule
\end{tabular}
}
\end{table}
We can find that the fitness ratios are pretty close to 1 and the relative gaps are at quite low values, both of which indicate that the fitting from the MLP generator is a good approximation to the OPT solution.

To demonstrate the efficiency of the two algorithms, we compare the running time of the OPT approach and the MLP generator, which are shown in Figure \ref{fig:run_time}. We take a cubic curve with noise level $\sigma=0.2$, and vary the sample size from $\{50, 100, 200, 500, 1000, 2000, 5000\}$. The number of iterations 50000 in the training step of the MLP generator is large enough to guarantee convergence. Figure \ref{fig:fitness_ratio_cubic} displays the resulting average fitness ratio, which is close to 1.0 for each studied sample size $n$. In other words, the training of the MLP generator has been sufficient to achieve a good approximation. The MLP training runs on an \texttt{Nvidia-A100-SXM4-40GB} GPU using \texttt{PyTorch}, then the evaluation of the trained MLP and the OPT approach run on the same machine using an \texttt{AMD-EPYC-7742} CPU.

\begin{figure}
    \centering
    \begin{subfigure}{0.5\textwidth}
        \includegraphics[width=\textwidth]{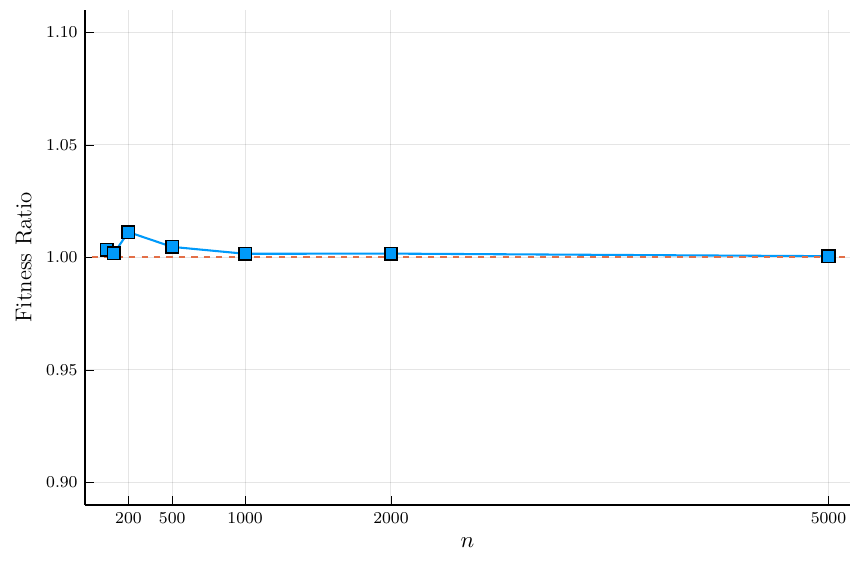}
        \caption{}
        \label{fig:fitness_ratio_cubic}        
    \end{subfigure}%
    \begin{subfigure}{0.5\textwidth}
        \includegraphics[width=\textwidth]{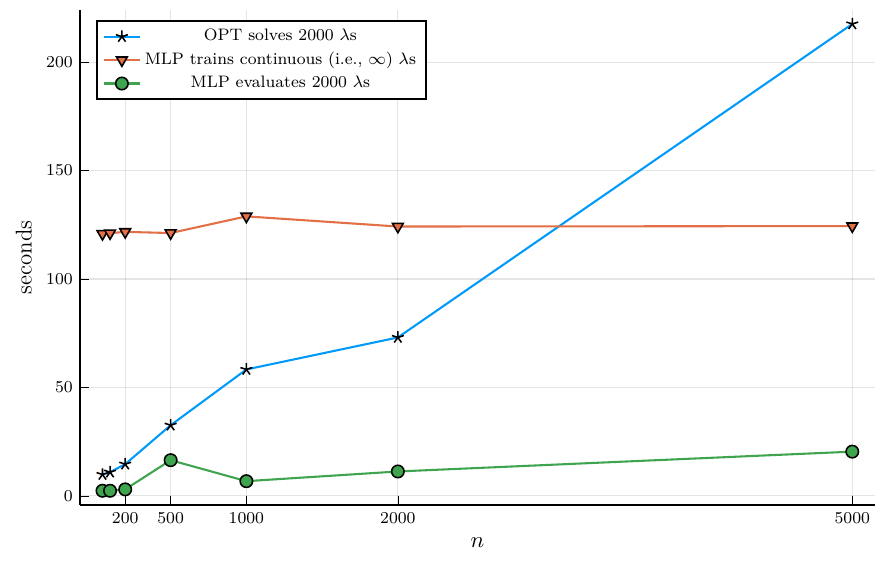}
        \caption{}
        \label{fig:run_time}
    \end{subfigure}
    \caption{(a) Average fitness ratio for the trained MLP generator. (b) Running time of the MLP generator (training and evaluation) and the OPT approach.}
\end{figure}
Figure \ref{fig:run_time} shows the running time for the OPT approach solving 2000 optimization problems with 2000 different $\lambda$'s, the training step of the MLP generator, and the evaluation step of the trained MLP step on the same 2000 different $\lambda$'s. In practice, to save computational time, we only count the time cost of running 10 $\lambda$'s, then multiply 200 to get the time for 2000 $\lambda$'s since each optimization problem takes a similar time cost.
The running time of the OPT approach increases nearly linearly along the sample size $n$. In contrast, the training time of the MLP generator is not affected by the sample size, and the evaluation, which just plugs a $\lambda$ into the trained MLP generator $\hat G$, is much cheaper. More importantly, the trained MLP generator works for continuous $\lambda$ located in the range $[\lambda_l, \lambda_u]$, while the OPT approach needs to run one optimization problem for each candidate penalty parameter $\lambda$. Since each optimization problem costs similar (if not the same) time, then it is expected to take $k$ times the running time shown in Figure \ref{fig:run_time} to solve $2000k$ optimization problems. On the other hand, the MLP generator can evaluate those parameters $\lambda$'s in a much shorter time. Thus, the MLP generator can save time by avoiding repeating to run the optimization problems. The more evaluations, the more time can be saved.

In practice, we might not evaluate so many $\lambda$'s in an interval $[\lambda_l, \lambda_u]$, but the next section will need to run so many, and even more, optimization problems to calculate the confidence band by bootstrap samples. 

\section{Confidence Band}\label{sec:conf_band}

Another issue is how reliable the fitted monotone curve is, thus we shall investigate the confidence band of the fitted curve in this section.
Without the monotonicity constraint, the confidence band of smoothing splines $\hat f = \bfB\hat\gamma$ can be explicitly derived since both $\hat\gamma$ and its variance-covariance matrix $\hat\bSigma$ can be explicitly calculated (see Chapter 15.5 of \textcite{ramsayFunctionalDataAnalysis2005}). Then the standard error of a prediction $\hat y_0 = \bfb(x_0)^T\hat\gamma$ is $se(\hat y_0) = \bfb(x_0)^T\hat\bSigma\bfb(x_0)$.
It follows that the 95\% confidence interval can be estimated as
$$
(\hat y_0 - 1.96\times se(\hat y_0), \hat y_0 + 1.96\times se(\hat y_0))\,.
$$
Then the (point-wise) confidence band is formed as the set of the confidence interval at each $x$ point. 



With the monotonicity constraint, the estimation of confidence bands becomes more difficult. Fortunately, we can resort to the bootstrap approach to estimate the confidence bands. Generally, there are two types of bootstrap. One is the nonparametric bootstrap as summarized in Algorithm \ref{alg:ci_opt_nonpara}, where the essential is to generate bootstrap sample $\{\bfx^\star, \bfy^\star\}$ by sampling the original sample $\{\bfx, \bfy\}$ with replacement. Usually, it is time-consuming to conduct a nonparametric bootstrap. \textcite{shinGenerativeMultiplepurposeSampler2022} developed a Generative Bootstrap Sampler (GBS) to reduce the computational time by avoiding the repeated solving procedure for each bootstrap sample. However, the smoothing splines cannot fit into the GBS framework (see the discussion in the \supp), not to say our monotone splines with the monotonicity constraint.


\begin{algorithm}
    \caption{Confidence Band: Nonparametric Bootstrap}
    \label{alg:ci_opt_nonpara}
    \begin{algorithmic}[1]
    \REQUIRE Dataset $\bZ = \{(x_i, y_i)\}_{i=1}^n$, significance level $\alpha$.
    \FOR{$b$ from 1 to $B$}
    \STATE Sample $n$ points with replacement from $\bZ$, obtain bootstrap dataset $\bZ^\star$
    \STATE Fit a monotone B-spline $\hat f^{(b)}(x)$ for $\bZ^\star$.
    \ENDFOR
    \STATE Form a $1-\alpha$ pointwise confidence band from the percentiles at each $x$, i.e., the $\alpha/2$ and $1-\alpha/2$ quantiles of $\{\hat f^{(b)}(x)\}_{b=1}^B$.
    \end{algorithmic}
\end{algorithm}

Another type of bootstrap is the parametric bootstrap, as summarized in Algorithm \ref{alg:ci_opt}. Suppose we have obtained a fitting $\hat \bfy$ with error $\bfe = \bfy - \hat\bfy$, the bootstrap sample is constructed by
$(\bfx, \hat\bfy + \bfe^\star)$, where $\bfe^\star \sim N(0, \hat\sigma^2\bfI)$ and $\hat\sigma^2$ is the sample variance of $\bfe$.
\begin{algorithm}
    \caption{Confidence Band: Parametric Bootstrap}
    \label{alg:ci_opt}
    \begin{algorithmic}[1]
    \REQUIRE Dataset $\bZ = \{(x_i, y_i)\}_{i=1}^n$. Bootstrap repetitions $B$. Penalty parameter $\lambda$.
    \STATE Conduct Algorithm \ref{alg:point_opt} on $(\bfx, \bfy)$, and obtain $\hat\bfy$.
    \STATE Calculate $\bfe = \bfy- \hat\bfy$.
    \FOR{$b$ from 1 to $B$}
    \STATE Conduct Algorithm \ref{alg:point_opt} on parametric bootstrap sample $(\bfx, \hat\bfy+\bfe^\star)$, and obtain fitting $\hat f^{(b)}(x) = \bfb(x)^T\hat\gamma^{(b)}$
    \ENDFOR
    \STATE Construct $1-\alpha$ pointwise percentile bootstrap confidence band at each $x$, i.e., the $\alpha/2$ and $1-\alpha/2$ quantiles of $\{\hat f^{(b)}(x)\}_{b=1}^B$.
    \end{algorithmic}
\end{algorithm}

Note that both Algorithms \ref{alg:ci_opt_nonpara} and \ref{alg:ci_opt} require a specified $\lambda$, so we need to repeat those algorithms for each $\lambda$ if we want to investigate the effect of the penalty parameter $\lambda\in[\lambda_l,\lambda_u]$. On the other hand, the MLP generator in the previous section can be further developed to estimate the confidence bands. Recall that we can train an MLP generator $G(\lambda)$ to be the solution $\hat\gamma$ given $\{\bfx,\bfy\}$. Now we want to extend it to be $G(\bfy^\star, \lambda)$, which would (approximately) be the solution for a bootstrap sample $(\bfx, \hat\bfy +\bfe^\star)$ and $\lambda$. 

We replace the fixed $\bfy$ in Equation \eqref{eq:g_point_est} with the random $\hat\bfy(\lambda) +\bfe$, where the randomness comes from $\bfe\sim N(0, \hat\sigma^2\bfI)$ and $\hat\sigma^2$ is the sample variance of $\bfy-\hat\bfy(\lambda)$. Take the expectation $\bbE_{\bfe\mid \lambda}$ to integrate out $\bfe$,
\begin{equation}
\hat G = \argmin_{G\in \cG} \bbE_\lambda\bbE_{\bfe\mid \lambda} \left[\Vert \hat\bfy(\lambda) +\bfe - \bfB G(\hat\bfy(\lambda) +\bfe, \lambda)\Vert_2^2 + \lambda \Vert\bfL G(\hat\bfy(\lambda)+\bfe, \lambda)\Vert_2^2\right]\,.
\label{eq:g_ci_band}
\end{equation}
In practice, both expectations $\bbE_\lambda$ and $\bbE_{\bfe\mid \lambda}$ can be approximated by their Monte Carlo estimates, as summarized in Algorithm \ref{alg:ci_mlp}.
\begin{algorithm}
    \caption{Confidence Band: MLP Generator $G(\bfy, \lambda)$}
    \label{alg:ci_mlp}
    \begin{algorithmic}[1]
    \REQUIRE Dataset $\bZ = \{(x_i, y_i)\}_{i=1}^n$, batch size $M$, bootstrap repetition $B$, pre-trained $G(\bfy, \lambda)$ from Algorithm \ref{alg:point_mlp}.
    \COMMENT{\texttt{Training Step}}
    \STATE Calculate prediction $\hat\bfy(\lambda) = \bfB G(\bfy, \lambda)$, and the standard error $\hat\sigma(\lambda) = \text{std}(\bfy - \hat \bfy(\lambda))$.
    \WHILE{not converged}
    \FOR{$j$ from 1 to $M$}
    \STATE Sample $\lambda_j\sim U[\lambda_l, \lambda_u]$.
    \STATE Calculate the standard error $\hat\sigma(\lambda_j) = \text{std}(\bfy - \hat \bfy(\lambda_j))$.
    \STATE Sample $\{\bfe_i\}_{i=1}^M$ where each $\bfe_i\sim N(0, \hat\sigma^2\bfI)$.
    \STATE Evaluate 
    $$
    l_j = \frac 1M\sum_{i=1}^M\left[\Vert \hat\bfy(\lambda_j) +\bfe_i - \bfB G(\hat\bfy(\lambda_j) +\bfe_i, \lambda_j)\Vert_2^2 + \lambda_j \Vert\bfL G(\hat\bfy(\lambda_j)+\bfe_i, \lambda_j)\Vert_2^2\right]\,.
    $$
    \ENDFOR
    \STATE Define the loss $\cL = \frac 1M\sum_{j=1}^M l_j$.
    \STATE Update neural network $G$ to minimize $\cL$.
    \ENDWHILE
    \REQUIRE Set of penalty parameters $\Lambda$, significance level $\alpha$. \COMMENT{\texttt{Evaluation Step}}
    \FOR{$\lambda$ in $\Lambda$}
    \FOR{$b$ from 1 to $B$}
    \STATE Evaluate $\hat\gamma^{(r)} = G(\hat\bfy+\bfe^\star,\lambda)$, then the fitting spline is $\hat f^{(b)} =\bfb(x)^T\hat\gamma^{(b)}$.
    \ENDFOR 
    \STATE Construct $1-\alpha$ pointwise percentile bootstrap confidence band at each $x$, i.e., the $\alpha/2$ and $1-\alpha/2$ quantiles of $\{\hat f^{(b)}(x)\}_{b=1}^B$.
    \ENDFOR
    \end{algorithmic}
\end{algorithm}

To compare the confidence band estimated from the parametric bootstrap with the OPT approach (Algorithm~\ref{alg:ci_opt}) and the confidence band obtained from the MLP generator (Algorithm~\ref{alg:ci_mlp}), we consider the Jaccard index.
The Jaccard index measures the similarity between two finite sets $A$ and $B$. Specifically, it is defined as the size of the intersection divided by the size of the union,
$$
\cJ(A, B) = \frac{\vert A\cap B\vert}{\vert A\cup B\vert}\,.
$$
We propose the Jaccard index for two confidence intervals $\CI_1$ and $\CI_2$ to measure the similarity (overlap) 
$$
\cJ(\CI_1, \CI_2) = \frac{\vert \CI_1\cap \CI_2\vert}{\vert \CI_1\cup \CI_2\vert}\,,
$$
where the size of confidence interval $\vert \cdot\vert$ is defined as the width of the interval. Furthermore, we define the Jaccard index for two confidence bands $\CB_j = \{\CI_j(x_i),i=1,\ldots,n\}, j=1,2$ as the average of Jaccard index of confidence intervals at each $x_i$, 
$$
\cJ(\CB_1, \CB_2) = \frac 1n \sum_{i=1}^n \cJ(\CI_1(x_i), \CI_2(x_i)) = 
\frac{1}{n}
\sum_{i=1}^{n}
\frac{\vert \CI_1(x_i)\cap \CI_2(x_i)\vert}{\vert \CI_1(x_i) \cup \CI_2(x_i)\vert}\,.
$$
If $n$ is large enough and $x_i$ is uniformly distributed, then the Jaccard index for confidence bands can be interpreted as the proportion of the overlap area.


\begin{figure}
    \centering
    \begin{subfigure}{0.33\textwidth}
    \includegraphics[width=\textwidth]{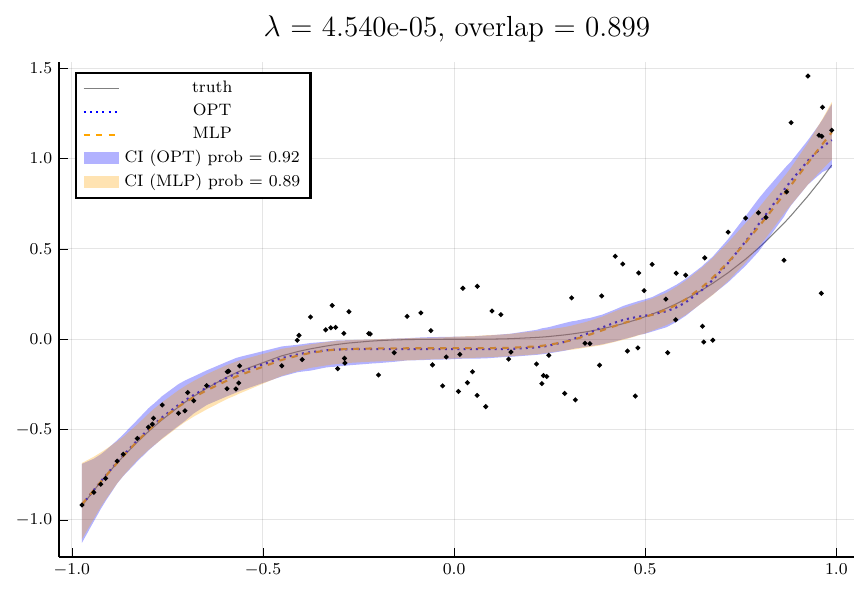}
    \end{subfigure}%
    \begin{subfigure}{0.33\textwidth}
    \includegraphics[width=\textwidth]{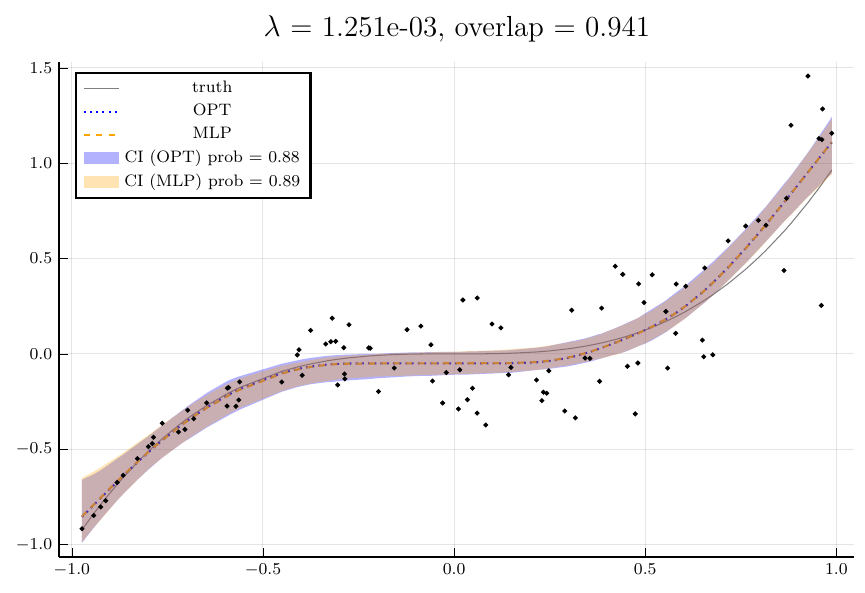}
    \end{subfigure}%
    \begin{subfigure}{0.33\textwidth}
    \includegraphics[width=\textwidth]{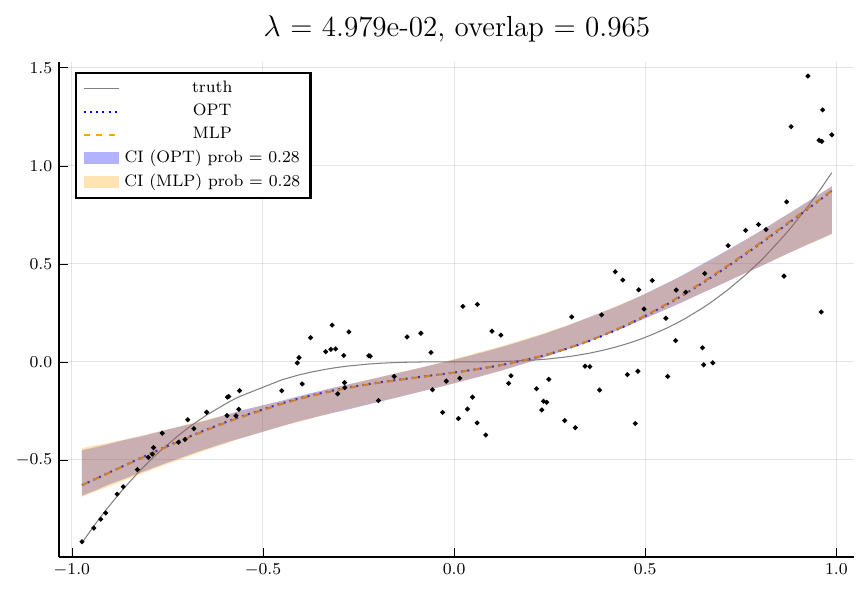}
    \end{subfigure}
    \caption[Demo of MLP Confidence Band cubic]{Demo of 95\% confidence bands of a cubic curve with noise $\sigma=0.2$ under three different penalty parameters. The blue shaded band denotes the confidence band by the OPT approach, and the orange shaded band represents the confidence band by the MLP generator. ``overlap'' in the title of each subfigure gives the average Jaccard index between these two confidence bands. In the legend, the ``prob'' is the coverage probability of the corresponding estimated confidence band. The fitted curve by the OPT approach and the MLP generator, together with the truth, are also shown.}
    \label{fig:demo_mlp_CI}
\end{figure}


Figure \ref{fig:demo_mlp_CI} displays the confidence bands by the OPT approach and the MLP generator for a cubic curve with noise $\sigma=0.2$ under three different penalty parameters. In each setting, although the fitness to the truth becomes worse along the increasing penalty parameter $\lambda$, two confidence bands always overlap quite well and all average Jaccard indexes are close to 1.0. Since the MLP generator can be viewed as an approximation for the OPT approach, Figure~\ref{fig:demo_mlp_CI} implies that the approximation of MLP generator for confidence bands is pretty good.

Now we perform repeated experiments to demonstrate the performance of the MLP generator.
Figure \ref{fig:jaccard_vs_lambda} shows the Jaccard index along the penalty parameter $\lambda$ for the same four curves investigated in Section~\ref{sec:two_alg}. The simulation data generating scheme is also the same, which is defined in Equation~\eqref{eq:mlp_data}.
\begin{figure}
    \centering
    \includegraphics[width=\textwidth]{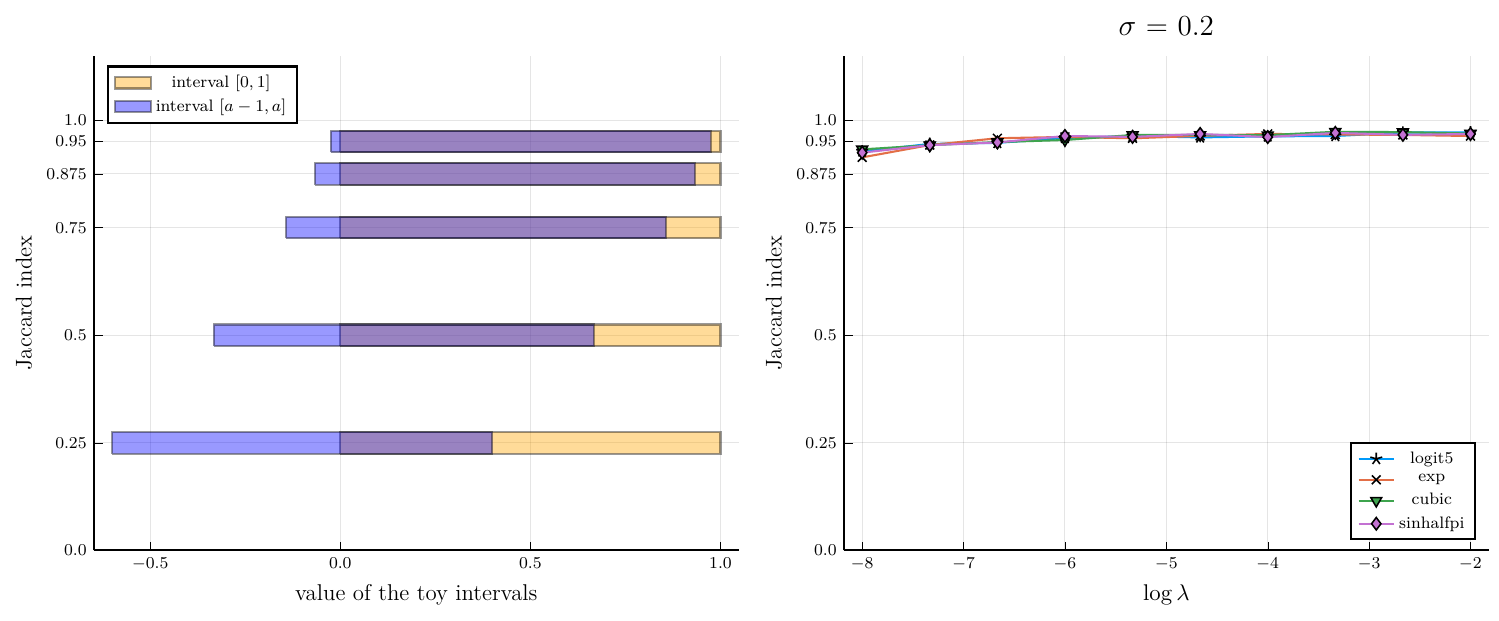}
    \caption{(Left) The visual representation of overlapped intervals at several Jaccard indexes. (Right) Average Jaccard index of confidence bands obtained from MLP generator and OPT approach along tuning parameters for four monotone curves among 5 repetitions. The ticks on the $y$-axis of the left subfigure and the right subfigure are kept on the same level.}
    \label{fig:jaccard_vs_lambda}
\end{figure}
The left panel visualizes the overlap extent of the Jaccard index using a toy example. Suppose we have two unit intervals, $[0, 1]$ and $[a-1, a], a\in [0, 1]$, denoted by the orange color and the blue color, respectively. Then the Jaccard index is $\frac{a}{2-a}\,.$
Intuitively, the Jaccard index larger than 0.9 can be considered an excellent overlap, and a larger than 0.75 would be a good overlap.
Although in practice, these two intervals are not always equal in size,
it provides insights into how close two confidence intervals are given a Jaccard index. The right panel shows the average Jaccard index along the penalty parameter $\lambda$ among five repetitions for four curves under the noise level $\sigma = 0.2$. Overall, all Jaccard indexes are larger than 0.875, and in most cases, they can achieve 0.95. It is slightly smaller for the small penalty parameter $\lambda$. Thus, the MLP generator can achieve pretty good approximations to the confidence band obtained by the OPT approach.

It is necessary to note that we are more concerned about the approximation accuracy of confidence bands instead of the coverage probability. Let $\hat\theta$ be a scalar point estimate and $q$ be the set of estimates based on bootstrap samples, and $q_\alpha$ be the $\alpha$ quantile of $q$. In addition to the classical percentile CI $(q_{0.025}, q_{0.975})$, there are many variants of bootstrap confidence intervals for better coverage probability, see more discussion in the \supp, where we also show the coincidence of those two confidence bands by comparing their coverage probability.
Although we focus on the classical percentile CI, the comparisons can be seamlessly moved to other bootstrap CIs.

The Jaccard index and coverage probability when $\sigma=0.1$ and $\sigma=0.5$ are displayed in Figures 
S1 and S2 of the \supp. 
Both show good coincidences between the OPT approach and the MLP generator.

Now we check the running time of the MLP generator $G(\bfy, \lambda)$ for the confidence band. Firstly, we ensure the training iterations are enough for convergence, and Figure \ref{fig:ci_fitness_ratio_cubic} reports the average fitness ratio and Jaccard index, both of which are good enough.
As in the previous point estimate of MLP generator $G(\lambda)$ in Figure~\ref{fig:run_time}, Figure \ref{fig:ci_run_time} shows that the running time of the training step of MLP does not increase along the sample size $n$, but the OPT approach would take a longer time for a larger sample size $n$. Note that the number of bootstrap samples is 2000. If we increase the number of bootstrap samples, say $2000k$, then it would roughly take $k$ times the running time for the OPT approach. While for the MLP generator, the training step is not affected by the number of bootstrap samples, and it only needs more evaluation time, which is much cheaper. So, the more bootstrap samples, the more time we can save.
\begin{figure}
    \centering
    \begin{subfigure}{0.5\textwidth}
        \includegraphics[width=\textwidth]{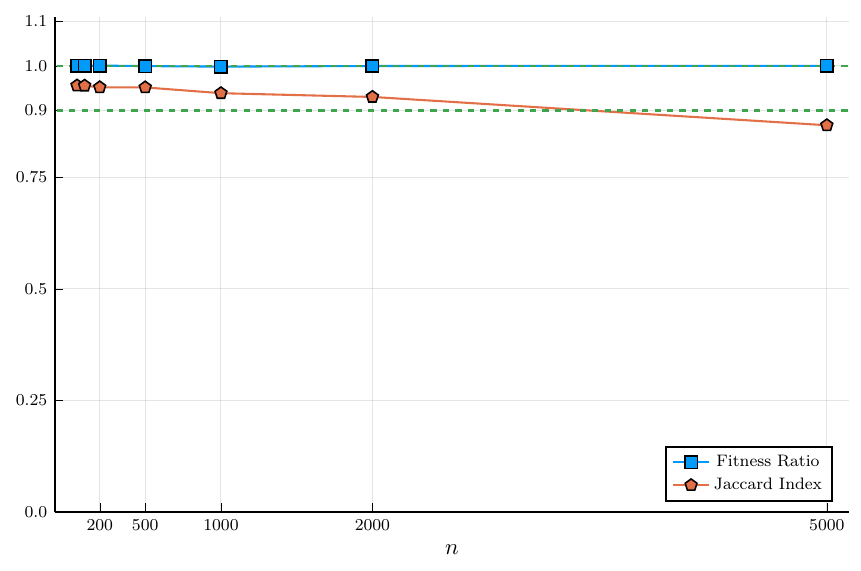}
        \caption{}
        \label{fig:ci_fitness_ratio_cubic}        
    \end{subfigure}%
    \begin{subfigure}{0.5\textwidth}
        \includegraphics[width=\textwidth]{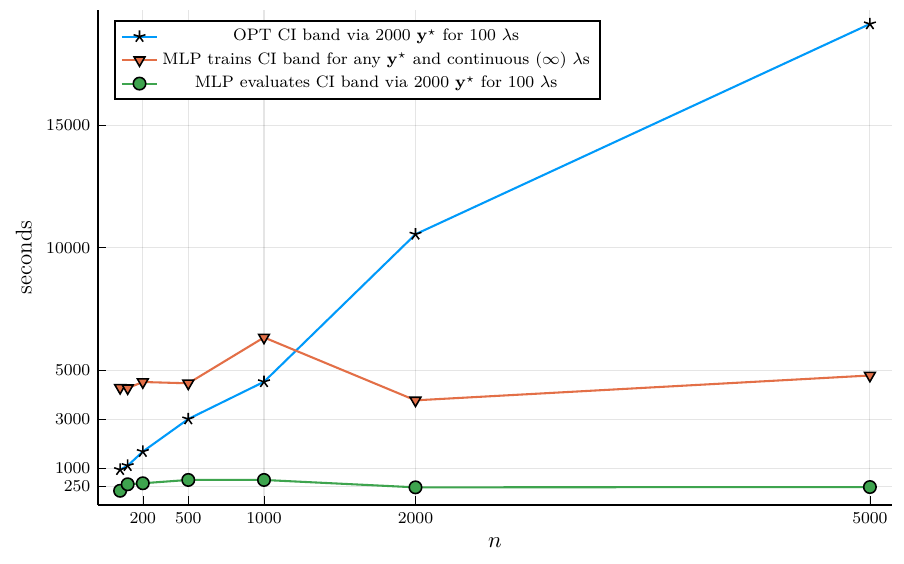}
        \caption{}
        \label{fig:ci_run_time}
    \end{subfigure}
    \caption{(a) Average fitness ratio and Jaccard index for the trained MLP generator to show that the number of iterations in training is enough. (b) Running time of the MLP generator (training and evaluation) and the OPT approach for the confidence band. }
\end{figure}

\section{Simulations}\label{sec:monobspl_sim}

In this section, we conduct several simulations to compare the performance of the proposed monotone splines with several competitors\footnote{The source code which can reproduce the results: \texttt{\url{https://github.com/szcf-weiya/MonotoneSplines.jl}}}:
\begin{itemize}
\item \textcite{heMonotoneBsplineSmoothing1998}'s monotone quadratic spline (MQS), with implementation in R package \texttt{COBS} given by \textcite{ngFastEfficientImplementation2007}.
\item Quadratic spline (QS): the corresponding unconstrained version of MQS.
\item Isotonic regression, using the implementation in R package \texttt{Iso::pava} by the Pool-Adjacent-Violators Algorithm (PAVA).
\item Two different strategies for combing the isotonic regression with smoothing techniques,
\begin{itemize}
  \item IS: isotonic regression followed by smoothing,
  \item SI: smoothing followed by isotonic regression,
\end{itemize}
which have been proved to be asymptotically equivalent in some sense \parencite{mammenEstimatingSmoothMonotone1991}.
\item The locally estimated scatterplot smoothing (LOESS), which will be used as the smoothing step in SI and IS.
\item Cubic spline (CS): the ordinary cubic spline, where the number of knots is selected by 2-fold cross-validation. 
\item Monotone cubic spline (MCS): the number of knots is set as the same of the corresponding cubic spline.
\item Smoothing spline (SS): the smoothness penalty parameter is determined by the generalized cross-validation (GCV) principle.
\item Monotone smoothing spline (MSS): the smoothness penalty parameter is the same as the one in the corresponding smoothing spline.
\item \textcite{murrayFastFlexibleMethods2016}'s monotone polynomial fitting (MonoPoly).
\item \textcite{cannonMonmlpMultilayerPerceptron2017}'s Monotone Multi-Layer Perceptron (MONMLP) with two hidden layers, where 32 nodes in the first hidden layer and 2 nodes in the second hidden layer. 
\item \textcite{navarro-garciaConstrainedSmoothingOutofrange2023}'s constrained penalized splines (\texttt{cpsplines}).
\item \textcite{groeneboomConfidenceIntervalsMonotone2023}'s smoothed least squares estimator (SLSE), which can also be viewed as a kind of IS (isotonic regression followed by smoothing), but it adopted the kernel smoothing technique.
\end{itemize}

We consider the following five types of monotone curves, where the logistic and growth curves were used in \textcite{heMonotoneBsplineSmoothing1998}, and the error function curve was illustrated in \textcite{pappOptimizationModelsShapeconstrained2011}, \textcite{pappShapeConstrainedEstimationUsing2014} and \textcite{navarro-garciaConstrainedSmoothingOutofrange2023}.
\begin{itemize}
  \item Logistic curve: $f(x) = e^x/(1+e^x), x\in (-5, 5)$.
  \item Growth curve: $f(x) = 1/(1-0.42\log(x)), x\in(0, 10)$.
  \item Cubic (polynomial) curve: $f(x) = x^3, x\in (-1,1)$.
  \item Step curve: $f(x) = \sum_{i=1}^TI(x > t_i), x\in(-1,1)$ with $T$ random points $t_i\in (-1, 1)$.
  \item Error function curve: $f(x) = 5+\sum_{i=1}^4\mathrm{erf}(15i(x-i/5)), x\in [0, 1]$, where $\mathrm{erf}(x) = \frac{2}{\sqrt\pi}\int_0^xe^{-t^2}dt$ is the error function.
\end{itemize}

For each curve, we generate $n=100$ points $\{(x_i,y_i),i=1,\ldots,n\}$ from
$$
y_i = f(x_i) + \varepsilon_i\,,
$$
where $\varepsilon_i$'s are independently sampled from $N(0, \sigma^2)$. 

Figure~\ref{fig:demo_logit_erf} provides two demo figures to illustrate the behaviors of selected methods on the logistic and error function curves. Some methods might overfit and return a wiggly fitting, such as the cubic spline on the logistic curve in Figure~\ref{fig:demo_logit}; some methods result in less smooth and step-like fitting, like the isotonic regression in Figure~\ref{fig:demo_logit}; others might underfit by imposing too strong constraints, like LOESS on the error curve function in Figure~\ref{fig:demo_erf}. Roughly, most fitting curves are relatively close to the truth. 
\begin{figure}
    \centering
    \begin{subfigure}{0.5\textwidth}
    \includegraphics[width=\textwidth]{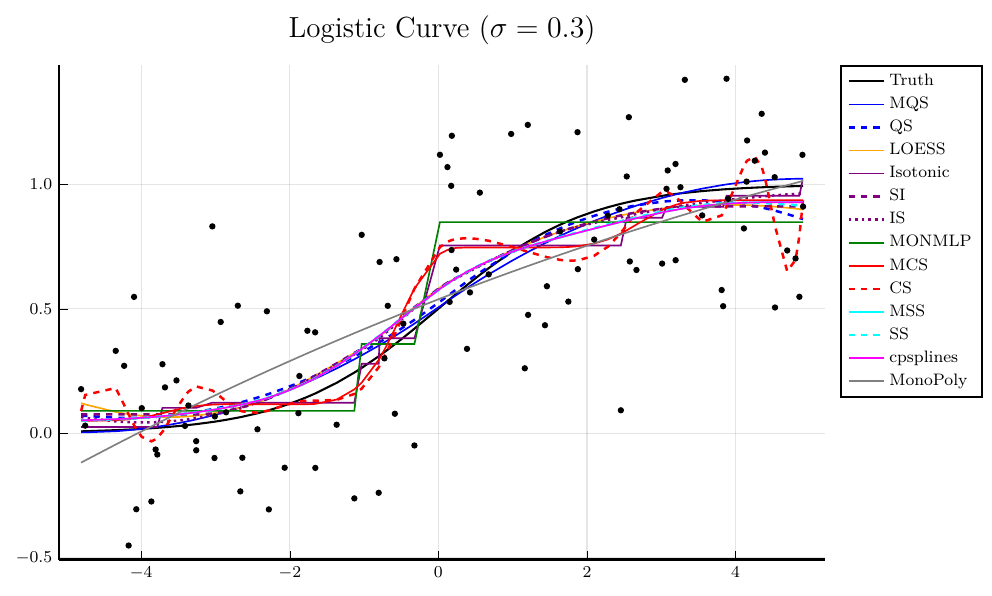}
    \caption{}
    \label{fig:demo_logit}
    \end{subfigure}%
    \begin{subfigure}{0.5\textwidth}
    \includegraphics[width=\textwidth]{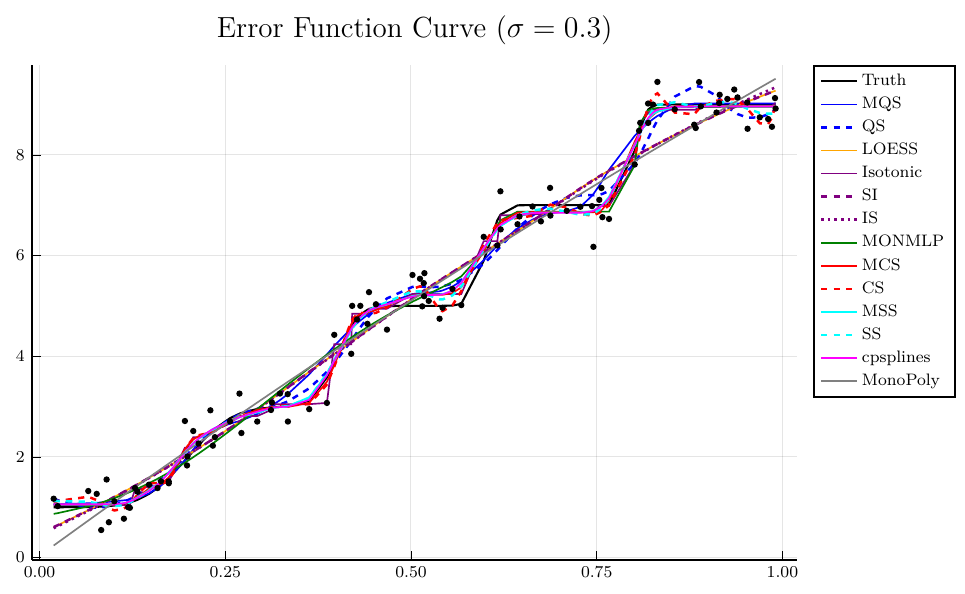}
    \caption{}
    \label{fig:demo_erf}
    \end{subfigure}
    \caption[Demo of Fitting on Logistic and Growth Curve]{Demo of fitting of different methods on the logistic and error function curves.}
    \label{fig:demo_logit_erf}
\end{figure}

To comprehensively compare their performance, we conduct repeated experiments to measure their average performance.
We adopt the $L_p, p\in \{1,2, \infty\}$ distance between the predicted values $\hat f(x_i)$ and the underlying true values $f(x_i)$,
$$
L_p =
\begin{cases}
    \left(\sum_{i=1}^n\vert \hat f(x_i) - f(x_i)\vert^p\right)^{1/p} & p = 1,2\\
    \max_{i=1}^n\vert \hat f(x_i) - f(x_i)\vert & p =\infty
\end{cases}\,.
$$
The $L_p$ distances have been used in \textcite{pappOptimizationModelsShapeconstrained2011} and \textcite{navarro-garciaConstrainedSmoothingOutofrange2023}, and the scaled $\frac{1}{\sqrt{n}}L_2$ is equivalent to the root mean squared error (RMSE) in \textcite{heMonotoneBsplineSmoothing1998}.

Based on 100 repetitive experiments, Table~\ref {tab:erf} reports the mean (scaled) $L_p, p\in\{1,2,\infty\}$ distances, together with the standard errors, on the error function curve. To account for the randomness in experiments, both the smallest one and the ones whose errors are no more than one standard error above the error of the smallest one (referred to as \emph{one-standard-error range}) are highlighted in bold. The rank of the mean distances are also noted as superscripts. Our proposed MSS and \textcite{navarro-garciaConstrainedSmoothingOutofrange2023}'s \texttt{cpsplines} are quite close and outperform others for all distances. When $\sigma = 0.15$, MSS ranks first for all distances, but \texttt{cpsplines} is within the one-standard-error range of MSS; on the other hand, when $\sigma=0.3$, \texttt{cpsplines} ranks first for $L_1$ and $L_2$ distances, but MSS is within the one-standard-error range of \texttt{cpsplines}, while MSS is the top-1 in $L_\infty$ and \texttt{cpsplines} is still within the one-standard-error range of MSS. MCS is also not bad, whose ranks are always within the top 6.

\begin{table}
  \caption{Average (scaled) $L_p$ distances, $p\in \{1,2,\infty\}$, over 100 experiments on the error function curve, together with the standard error of the average in parentheses. Both the smallest one and the ones whose errors are no more than one standard error above the error of the smallest one are highlighted in bold. The superscripts indicate the rank of methods.}
  \label{tab:erf}
\resizebox{\textwidth}{!}{
  \begin{tabular}{ccccc}
\toprule
Noise $\sigma$ & Method&$\frac 1n L_1$&$\frac{1}{\sqrt n}L_2$&$L_\infty$\tabularnewline
\midrule
\multirow{14}{*}{0.15}&Cubic Spline (CS)& 8.53e-02 (3.0e-03)\textsuperscript{6}& 1.09e-01 (4.0e-03)\textsuperscript{5}& 3.29e-01 (1.2e-02)\textsuperscript{5}\tabularnewline
&Monotone CS (MCS)& 7.95e-02 (4.9e-03)\textsuperscript{5}& 1.09e-01 (6.5e-03)\textsuperscript{6}& 3.88e-01 (1.8e-02)\textsuperscript{6}\tabularnewline
&Smoothing Spline (SS)& 6.26e-02 (9.0e-04)\textsuperscript{3}& 8.03e-02 (1.1e-03)\textsuperscript{3}& \textbf{2.47e-01} (6.1e-03)\textsuperscript{2}\tabularnewline
&Montone SS (MSS)& \textbf{5.27e-02} (8.9e-04)\textsuperscript{1}& \textbf{7.02e-02} (1.1e-03)\textsuperscript{1}& \textbf{2.42e-01} (7.5e-03)\textsuperscript{1}\tabularnewline
&Quadratic Spline (QS)& 1.20e-01 (2.9e-03)\textsuperscript{9}& 1.65e-01 (3.9e-03)\textsuperscript{9}& 5.52e-01 (1.5e-02)\textsuperscript{8}\tabularnewline
&\textcite{heMonotoneBsplineSmoothing1998}: MQS& 9.59e-02 (1.8e-03)\textsuperscript{7}& 1.45e-01 (2.8e-03)\textsuperscript{7}& 5.94e-01 (1.8e-02)\textsuperscript{9}\tabularnewline
&LOESS& 2.97e-01 (2.1e-03)\textsuperscript{10}& 3.54e-01 (2.1e-03)\textsuperscript{10}& 7.86e-01 (5.1e-03)\textsuperscript{11}\tabularnewline
&Isotonic& 6.33e-02 (8.2e-04)\textsuperscript{4}& 8.40e-02 (1.0e-03)\textsuperscript{4}& 2.85e-01 (6.2e-03)\textsuperscript{4}\tabularnewline
&\textcite{mammenEstimatingSmoothMonotone1991}: SI (LOESS+Isotonic)& 2.97e-01 (2.1e-03)\textsuperscript{11}& 3.54e-01 (2.1e-03)\textsuperscript{11}& 7.86e-01 (5.1e-03)\textsuperscript{12}\tabularnewline
&\textcite{mammenEstimatingSmoothMonotone1991}: IS (Isotonic+LOESS)& 2.97e-01 (2.1e-03)\textsuperscript{12}& 3.54e-01 (2.1e-03)\textsuperscript{12}& 7.84e-01 (5.0e-03)\textsuperscript{10}\tabularnewline
&\textcite{murrayFastFlexibleMethods2016}: MonoPoly& 3.25e-01 (2.1e-03)\textsuperscript{13}& 3.89e-01 (2.1e-03)\textsuperscript{13}& 8.57e-01 (7.0e-03)\textsuperscript{13}\tabularnewline
&\textcite{cannonMonmlpMultilayerPerceptron2017}: MONMLP& 1.20e-01 (8.3e-03)\textsuperscript{8}& 1.60e-01 (1.0e-02)\textsuperscript{8}& 4.76e-01 (2.6e-02)\textsuperscript{7}\tabularnewline
&\textcite{navarro-garciaConstrainedSmoothingOutofrange2023}: cpsplines& \textbf{5.32e-02} (8.2e-04)\textsuperscript{2}& 7.15e-02 (1.1e-03)\textsuperscript{2}& \textbf{2.47e-01} (6.6e-03)\textsuperscript{3}\tabularnewline
&\textcite{groeneboomConfidenceIntervalsMonotone2023}: SLSE& 3.33e-01 (2.3e-03)\textsuperscript{14}& 4.05e-01 (2.6e-03)\textsuperscript{14}& 1.03e+00 (1.3e-02)\textsuperscript{14}\tabularnewline
\midrule
\multirow{14}{*}{0.3}&Cubic Spline (CS)& 1.47e-01 (3.7e-03)\textsuperscript{6}& 1.87e-01 (4.6e-03)\textsuperscript{6}& 5.50e-01 (1.5e-02)\textsuperscript{6}\tabularnewline
&Monotone CS (MCS)& 1.25e-01 (5.1e-03)\textsuperscript{5}& 1.65e-01 (6.1e-03)\textsuperscript{5}& 5.11e-01 (1.7e-02)\textsuperscript{4}\tabularnewline
&Smoothing Spline (SS)& 1.13e-01 (1.9e-03)\textsuperscript{4}& 1.43e-01 (2.1e-03)\textsuperscript{3}& 4.13e-01 (1.2e-02)\textsuperscript{3}\tabularnewline
&Montone SS (MSS)& \textbf{9.78e-02} (1.8e-03)\textsuperscript{2}& \textbf{1.26e-01} (1.9e-03)\textsuperscript{2}& \textbf{3.94e-01} (1.1e-02)\textsuperscript{1}\tabularnewline
&Quadratic Spline (QS)& 1.73e-01 (3.8e-03)\textsuperscript{9}& 2.24e-01 (5.0e-03)\textsuperscript{9}& 6.49e-01 (1.6e-02)\textsuperscript{9}\tabularnewline
&\textcite{heMonotoneBsplineSmoothing1998}: MQS& 1.50e-01 (3.2e-03)\textsuperscript{7}& 2.00e-01 (3.8e-03)\textsuperscript{7}& 6.42e-01 (1.6e-02)\textsuperscript{8}\tabularnewline
&LOESS& 3.00e-01 (2.1e-03)\textsuperscript{10}& 3.59e-01 (2.0e-03)\textsuperscript{10}& 7.89e-01 (5.5e-03)\textsuperscript{11}\tabularnewline
&Isotonic& 1.12e-01 (1.7e-03)\textsuperscript{3}& 1.49e-01 (1.8e-03)\textsuperscript{4}& 5.22e-01 (1.2e-02)\textsuperscript{5}\tabularnewline
&\textcite{mammenEstimatingSmoothMonotone1991}: SI (LOESS+Isotonic)& 3.00e-01 (2.1e-03)\textsuperscript{11}& 3.59e-01 (2.0e-03)\textsuperscript{11}& 7.89e-01 (5.5e-03)\textsuperscript{12}\tabularnewline
&\textcite{mammenEstimatingSmoothMonotone1991}: IS (Isotonic+LOESS)& 3.01e-01 (2.0e-03)\textsuperscript{12}& 3.60e-01 (2.1e-03)\textsuperscript{12}& 7.87e-01 (5.3e-03)\textsuperscript{10}\tabularnewline
&\textcite{murrayFastFlexibleMethods2016}: MonoPoly& 3.34e-01 (2.2e-03)\textsuperscript{13}& 3.99e-01 (2.1e-03)\textsuperscript{13}& 8.69e-01 (6.6e-03)\textsuperscript{13}\tabularnewline
&\textcite{cannonMonmlpMultilayerPerceptron2017}: MONMLP& 1.65e-01 (7.9e-03)\textsuperscript{8}& 2.14e-01 (9.4e-03)\textsuperscript{8}& 6.33e-01 (1.9e-02)\textsuperscript{7}\tabularnewline
&\textcite{navarro-garciaConstrainedSmoothingOutofrange2023}: cpsplines& \textbf{9.76e-02} (1.8e-03)\textsuperscript{1}& \textbf{1.26e-01} (2.0e-03)\textsuperscript{1}& \textbf{3.96e-01} (1.1e-02)\textsuperscript{2}\tabularnewline
&\textcite{groeneboomConfidenceIntervalsMonotone2023}: SLSE& 3.36e-01 (3.3e-03)\textsuperscript{14}& 4.12e-01 (4.1e-03)\textsuperscript{14}& 1.06e+00 (1.9e-02)\textsuperscript{14}\tabularnewline
\bottomrule
\end{tabular}
}
\end{table}

For the logistic curve, the results are summarized in Table~\ref{tab:logit}. Our proposed MSS outperforms others for all distances when the noise is relatively large $\sigma=1.0, 1.5$. When the noise level is small $\sigma = 0.1$, \textcite{mammenEstimatingSmoothMonotone1991}'s SI outperforms others, but MSS is within the top 4 for all distances. The proposed MCS is also not bad, which is always better than the unconstrained cubic spline (CS), and it is within the top 6 in most cases.

\begin{table}
  \caption{Average (scaled) $L_p$ distances, $p\in \{1,2,\infty\}$, over 100 experiments on the logistic curve, together with the standard error of the average in parentheses. Both the smallest one and the ones whose errors are no more than one standard error above the error of the smallest one are highlighted in bold. The superscripts indicate the rank of methods.}
  \label{tab:logit}
\resizebox{\textwidth}{!}{
  \begin{tabular}{ccccc}
\toprule
Noise $\sigma$ & Method&$\frac 1n L_1$&$\frac{1}{\sqrt n}L_2$&$L_\infty$\tabularnewline
\midrule
\multirow{14}{*}{0.1}&Cubic Spline (CS)& 2.32e-02 (6.8e-04)\textsuperscript{9}& 2.96e-02 (8.8e-04)\textsuperscript{9}& 8.04e-02 (4.3e-03)\textsuperscript{11}\tabularnewline
&Monotone CS (MCS)& 2.13e-02 (5.7e-04)\textsuperscript{6}& 2.64e-02 (7.4e-04)\textsuperscript{6}& 6.20e-02 (3.8e-03)\textsuperscript{6}\tabularnewline
&Smoothing Spline (SS)& 2.11e-02 (6.1e-04)\textsuperscript{5}& 2.61e-02 (7.6e-04)\textsuperscript{5}& 6.12e-02 (3.0e-03)\textsuperscript{5}\tabularnewline
&Montone SS (MSS)& 2.01e-02 (5.5e-04)\textsuperscript{3}& 2.49e-02 (6.6e-04)\textsuperscript{3}& 5.74e-02 (2.7e-03)\textsuperscript{4}\tabularnewline
&Quadratic Spline (QS)& 2.23e-02 (6.4e-04)\textsuperscript{7}& 2.73e-02 (7.2e-04)\textsuperscript{7}& 6.56e-02 (2.5e-03)\textsuperscript{8}\tabularnewline
&\textcite{heMonotoneBsplineSmoothing1998}: MQS& 2.44e-02 (7.1e-04)\textsuperscript{10}& 2.98e-02 (8.5e-04)\textsuperscript{10}& 6.46e-02 (3.0e-03)\textsuperscript{7}\tabularnewline
&LOESS& 1.99e-02 (5.3e-04)\textsuperscript{2}& 2.44e-02 (6.1e-04)\textsuperscript{2}& 5.60e-02 (1.9e-03)\textsuperscript{3}\tabularnewline
&Isotonic& 3.14e-02 (4.8e-04)\textsuperscript{13}& 4.11e-02 (5.8e-04)\textsuperscript{13}& 1.28e-01 (4.4e-03)\textsuperscript{13}\tabularnewline
&\textcite{mammenEstimatingSmoothMonotone1991}: SI (LOESS+Isotonic)& \textbf{1.93e-02} (5.5e-04)\textsuperscript{1}& \textbf{2.34e-02} (6.3e-04)\textsuperscript{1}& \textbf{4.82e-02} (1.4e-03)\textsuperscript{1}\tabularnewline
&\textcite{mammenEstimatingSmoothMonotone1991}: IS (Isotonic+LOESS)& 2.10e-02 (5.6e-04)\textsuperscript{4}& 2.55e-02 (6.5e-04)\textsuperscript{4}& 5.40e-02 (1.6e-03)\textsuperscript{2}\tabularnewline
&\textcite{murrayFastFlexibleMethods2016}: MonoPoly& 7.83e-02 (3.6e-04)\textsuperscript{14}& 8.91e-02 (3.4e-04)\textsuperscript{14}& 1.76e-01 (2.7e-03)\textsuperscript{14}\tabularnewline
&\textcite{cannonMonmlpMultilayerPerceptron2017}: MONMLP& 2.59e-02 (7.7e-04)\textsuperscript{12}& 3.34e-02 (1.0e-03)\textsuperscript{12}& 9.20e-02 (4.7e-03)\textsuperscript{12}\tabularnewline
&\textcite{navarro-garciaConstrainedSmoothingOutofrange2023}: cpsplines& 2.48e-02 (4.8e-04)\textsuperscript{11}& 3.14e-02 (5.8e-04)\textsuperscript{11}& 8.02e-02 (2.8e-03)\textsuperscript{10}\tabularnewline
&\textcite{groeneboomConfidenceIntervalsMonotone2023}: SLSE& 2.28e-02 (7.6e-04)\textsuperscript{8}& 2.82e-02 (9.1e-04)\textsuperscript{8}& 6.86e-02 (2.6e-03)\textsuperscript{9}\tabularnewline
\midrule
\multirow{14}{*}{1.0}&Cubic Spline (CS)& 2.34e-01 (1.2e-02)\textsuperscript{14}& 2.90e-01 (1.5e-02)\textsuperscript{14}& 8.04e-01 (6.0e-02)\textsuperscript{13}\tabularnewline
&Monotone CS (MCS)& 1.61e-01 (5.8e-03)\textsuperscript{6}& 1.96e-01 (6.7e-03)\textsuperscript{5}& 4.76e-01 (3.1e-02)\textsuperscript{9}\tabularnewline
&Smoothing Spline (SS)& 1.58e-01 (5.9e-03)\textsuperscript{4}& 1.92e-01 (7.1e-03)\textsuperscript{4}& 4.18e-01 (2.2e-02)\textsuperscript{4}\tabularnewline
&Montone SS (MSS)& \textbf{1.44e-01} (5.1e-03)\textsuperscript{1}& \textbf{1.74e-01} (5.7e-03)\textsuperscript{1}& \textbf{3.59e-01} (1.6e-02)\textsuperscript{1}\tabularnewline
&Quadratic Spline (QS)& 1.63e-01 (5.3e-03)\textsuperscript{7}& 2.01e-01 (6.1e-03)\textsuperscript{7}& 4.44e-01 (1.7e-02)\textsuperscript{5}\tabularnewline
&\textcite{heMonotoneBsplineSmoothing1998}: MQS& 1.69e-01 (6.4e-03)\textsuperscript{9}& 2.08e-01 (7.2e-03)\textsuperscript{9}& 4.45e-01 (2.0e-02)\textsuperscript{6}\tabularnewline
&LOESS& 1.72e-01 (5.3e-03)\textsuperscript{11}& 2.12e-01 (6.5e-03)\textsuperscript{10}& 4.83e-01 (2.2e-02)\textsuperscript{10}\tabularnewline
&Isotonic& 1.89e-01 (5.6e-03)\textsuperscript{13}& 2.50e-01 (7.2e-03)\textsuperscript{13}& 8.83e-01 (4.8e-02)\textsuperscript{14}\tabularnewline
&\textcite{mammenEstimatingSmoothMonotone1991}: SI (LOESS+Isotonic)& 1.57e-01 (5.6e-03)\textsuperscript{3}& 1.88e-01 (6.5e-03)\textsuperscript{3}& 3.84e-01 (1.7e-02)\textsuperscript{2}\tabularnewline
&\textcite{mammenEstimatingSmoothMonotone1991}: IS (Isotonic+LOESS)& 1.65e-01 (5.7e-03)\textsuperscript{8}& 2.04e-01 (6.8e-03)\textsuperscript{8}& 4.72e-01 (2.1e-02)\textsuperscript{8}\tabularnewline
&\textcite{murrayFastFlexibleMethods2016}: MonoPoly& 1.61e-01 (5.2e-03)\textsuperscript{5}& 1.98e-01 (5.9e-03)\textsuperscript{6}& 4.47e-01 (1.9e-02)\textsuperscript{7}\tabularnewline
&\textcite{cannonMonmlpMultilayerPerceptron2017}: MONMLP& 1.83e-01 (5.4e-03)\textsuperscript{12}& 2.29e-01 (6.8e-03)\textsuperscript{12}& 5.54e-01 (2.9e-02)\textsuperscript{12}\tabularnewline
&\textcite{navarro-garciaConstrainedSmoothingOutofrange2023}: cpsplines& 1.53e-01 (8.3e-03)\textsuperscript{2}& 1.85e-01 (9.8e-03)\textsuperscript{2}& 3.92e-01 (3.2e-02)\textsuperscript{3}\tabularnewline
&\textcite{groeneboomConfidenceIntervalsMonotone2023}: SLSE& 1.69e-01 (6.3e-03)\textsuperscript{10}& 2.13e-01 (8.3e-03)\textsuperscript{11}& 5.45e-01 (3.1e-02)\textsuperscript{11}\tabularnewline
\midrule
\multirow{14}{*}{1.5}&Cubic Spline (CS)& 3.52e-01 (1.7e-02)\textsuperscript{14}& 4.39e-01 (2.2e-02)\textsuperscript{14}& 1.16e+00 (8.0e-02)\textsuperscript{13}\tabularnewline
&Monotone CS (MCS)& 2.22e-01 (7.5e-03)\textsuperscript{5}& 2.73e-01 (9.6e-03)\textsuperscript{5}& 7.44e-01 (5.3e-02)\textsuperscript{12}\tabularnewline
&Smoothing Spline (SS)& 2.40e-01 (1.2e-02)\textsuperscript{11}& 2.92e-01 (1.4e-02)\textsuperscript{10}& 6.31e-01 (4.1e-02)\textsuperscript{7}\tabularnewline
&Montone SS (MSS)& \textbf{2.04e-01} (7.9e-03)\textsuperscript{1}& \textbf{2.46e-01} (9.4e-03)\textsuperscript{1}& \textbf{5.15e-01} (2.9e-02)\textsuperscript{1}\tabularnewline
&Quadratic Spline (QS)& \textbf{2.11e-01} (7.1e-03)\textsuperscript{2}& \textbf{2.54e-01} (8.5e-03)\textsuperscript{2}& \textbf{5.40e-01} (2.5e-02)\textsuperscript{2}\tabularnewline
&\textcite{heMonotoneBsplineSmoothing1998}: MQS& 2.34e-01 (1.0e-02)\textsuperscript{9}& 2.85e-01 (1.2e-02)\textsuperscript{8}& 5.90e-01 (3.3e-02)\textsuperscript{5}\tabularnewline
&LOESS& 2.46e-01 (8.1e-03)\textsuperscript{12}& 2.98e-01 (9.9e-03)\textsuperscript{11}& 6.63e-01 (3.1e-02)\textsuperscript{9}\tabularnewline
&Isotonic& 2.74e-01 (7.8e-03)\textsuperscript{13}& 3.74e-01 (1.1e-02)\textsuperscript{13}& 1.49e+00 (8.1e-02)\textsuperscript{14}\tabularnewline
&\textcite{mammenEstimatingSmoothMonotone1991}: SI (LOESS+Isotonic)& 2.17e-01 (7.1e-03)\textsuperscript{4}& 2.59e-01 (8.5e-03)\textsuperscript{4}& 5.47e-01 (2.8e-02)\textsuperscript{3}\tabularnewline
&\textcite{mammenEstimatingSmoothMonotone1991}: IS (Isotonic+LOESS)& 2.39e-01 (7.9e-03)\textsuperscript{10}& 3.00e-01 (1.0e-02)\textsuperscript{12}& 7.34e-01 (3.6e-02)\textsuperscript{11}\tabularnewline
&\textcite{murrayFastFlexibleMethods2016}: MonoPoly& \textbf{2.12e-01} (6.3e-03)\textsuperscript{3}& 2.57e-01 (7.8e-03)\textsuperscript{3}& 5.73e-01 (2.8e-02)\textsuperscript{4}\tabularnewline
&\textcite{cannonMonmlpMultilayerPerceptron2017}: MONMLP& 2.33e-01 (7.5e-03)\textsuperscript{8}& 2.82e-01 (8.3e-03)\textsuperscript{7}& 6.35e-01 (3.5e-02)\textsuperscript{8}\tabularnewline
&\textcite{navarro-garciaConstrainedSmoothingOutofrange2023}: cpsplines& 2.29e-01 (1.7e-02)\textsuperscript{6}& 2.79e-01 (2.0e-02)\textsuperscript{6}& 6.17e-01 (6.1e-02)\textsuperscript{6}\tabularnewline
&\textcite{groeneboomConfidenceIntervalsMonotone2023}: SLSE& 2.32e-01 (8.3e-03)\textsuperscript{7}& 2.87e-01 (1.1e-02)\textsuperscript{9}& 7.16e-01 (4.4e-02)\textsuperscript{10}\tabularnewline
\bottomrule
\end{tabular}
}
\end{table}



The results for other three curves (cubic, step, and growth) are given in Table S1-S3 of the \supp. The findings are quite similar: our proposed approaches might not be the best in all cases, but they always demonstrate comparable performance with top ranks, particularly when the noise level is relatively large.

\section{Real Data Application}\label{sec:app}
If the Big Bang is considered the starting point of the universe as a whole, star formation can be seen as the opposite process -- the current contraction of gas on a local scale. Both, however, remain enigmatic \parencite{liMagneticFieldsMolecular2021}. It has been proposed that the alignment between magnetic fields (B-field) and the gas angular momentum (AM), BAM alignment, may be a significant factor in star formation \parencite{wangPolarizationHolesIndicator2023}. However, BAM alignment is not directly observable. We \parencite{wangPolarizationHolesIndicator2023} recently proposed that different BAM alignments will result in distinct B-field morphologies, from which we can determine whether nature has any preference for BAM alignment. Using numerical magnetohydrodynamic simulations, we have shown that the closer the BAM alignment is to parallelism, the faster the B-field dispersion increases with density. 

B-field morphologies, again, cannot be directly observed, but their orientation dispersion can. The observation tool used to investigate B-field dispersion is the polarization fraction ($P$) of the thermal dust emission. The higher the dispersion, the lower the $P$. In other words, $P$ will decrease with increasing density ($N$), a phenomenon known as ``polarization holes'' \parencite{liLinkMagneticFields2014a}. The closer the BAM alignment is to being parallel, the ``deeper'' the polarization hole \parencite{wangPolarizationHolesIndicator2023}. An example of a polarization hole is given in Figure~\ref{fig:ph}; the holes' monotonically decreasing nature can be modeled with our proposed monotone B-spline approach.

Given observed cloud data via telescopes, one specific task is to unveil the mystery of polarization holes by constructing proper astrophysical models. For an astrophysical model, we can generate different polarization-hole patterns under different parameters, then determine the best model parameters by finding the closest generated polarization-hole pattern to the observed cloud data. Specifically, Figure~\ref{fig:ph} displayed 256 paired points $\{(N_i, P_i)\}_{i=1}^n$ for one polarization-hole pattern
and the fittings by our proposed monotone splines and the corresponding unconstrained splines. Both monotone fitting approaches can avoid wiggly overfitting of their corresponding unconstrained splines, and they give a more robust summarization of the paired data, which is quite important in the downstream analysis for determining the best polarization-hole pattern from many patterns. 
\begin{figure}
    \centering
    \includegraphics{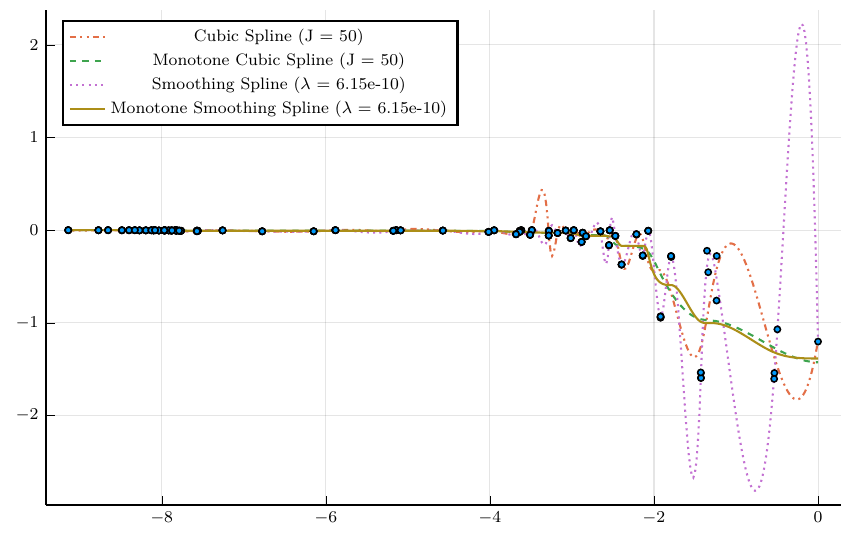}
    \caption{Fittings on a polarization-hole pattern. The $x,y$-axes denote the cloud column density $(N)$ and the polarization fraction ($P$) in the logarithm scale, respectively. The monotone splines and its corresponding unconstrained splines use the same parameters. The number of basis functions $J$ is selected to be large enough to summarize the relationship. The penalty parameter $\lambda$ for the smoothing spline is the default value selected by the generalized cross-validation principle from the function \texttt{smooth.spline} in R software.}
    \label{fig:ph}
\end{figure}

\section{Discussions}\label{sec:monobspl_discuss}

We propose monotone splines, including monotone cubic B-splines and monotone smoothing splines, by imposing the monotonic coefficients constraint. This constraint is a sufficient but not necessary condition for the splines to be monotonic. We discuss different conditions for the monotonicity, and investigate the estimation error and characterize the solutions.
To fit the monotone splines, we propose the MLP generator as an alternative to existing optimization toolboxes. The MLP generator can help save time in bootstrap tasks.

Extending the splines for univariate data to multidimensional data is a future potential research direction. For multidimensional data, the tensor product basis functions are defined for multidimensional splines \parencite{hastieElementsStatisticalLearning2009}, but the dimension of the basis grows exponentially, which is a manifestation of the curse of dimensionality. For computational and conceptual complexity, there are some restricted classes of multidimensional splines, such as additive splines. Specifically, the additive splines assume $f\in\IR^d$ has the form $f(X)=\alpha+f_1(X_1)+\cdots +f_d(X_d)$, where the functions $f_j$ are univariate splines. On the other hand, \textcite{dengIsotonicRegressionMultidimensional2020} proposed multiple isotonic regressions with the monotonicity defined on graphs, where $a < b$ if vertex $a$ is a descendant of vertex $b$ on a graph. Although our proposed monotone splines seem difficult to extend to general multidimensional splines, it would be promising to incorporate the monotonicity into additive splines, where the monotonicity of multidimensional functions can be defined analogously to \textcite{chipmanMBARTMultidimensionalMonotone2022}. 

\section*{Acknowledgement}
The main results in this article are developed from Lijun Wang's Ph.D. thesis when he was at the Chinese University of Hong Kong under the supervision of Xiandan Fan. Lijun Wang was supported by the Hong Kong Ph.D. Fellowship
Scheme from the University Grant Committee. Xiaodan Fan was supported by two grants from the Research Grants Council (14303819, C4012-20E) of the Hong Kong SAR, China. Jun S. Liu was supported by the NSF grant DMS-2015411 and NIH R01 HG011485-01.

\section*{Supplementary Material}

The \supp{} contains technical proofs of propositions and theorems, additional simulation results, and some further discussions.

\printbibliography

\appendix
\newpage

\begin{abstract}
    This supplementary material contains technical proofs of propositions and theorems, additional simulation results, and some further discussions.
\end{abstract}

\section{Basic Property of B-splines}

The basic properties of B-spline can be summarized in Proposition \ref{prop:bspl}, which are adapted from Exercise 5.2 of \textcite{hastieElementsStatisticalLearning2009}. 
\begin{proposition}[\cite{hastieElementsStatisticalLearning2009}]\label{prop:bspl}
Suppose $B_{i, M}(x)$ is an order-$M$ B-spline, then
\begin{enumerate}[label=(\roman*)]
    \item $B_{i, M}(x)=0$ for $x\not\in [\tau_i, \tau_{i+M}]$, i.e., the support is at most $M+1$ knots.
    \item $B_{i, M}(x)>0$ for $x\in (\tau_i,\tau_{i+M})$, i.e., B-splines are positive in the interior of the support.
    \item $\sum_{i=1}^{K+M}B_{i,M}(x)=1\,,\forall x\in [\xi_0, \xi_{K+1}]$. Let $\bB$ be an $n\times (K+M)$ matrix, where the $i$-th column is the evaluated $B_{i,M}(x)$ at $n$ points, then $\bB\one_{K+M}=\one_n$.
\end{enumerate}
\end{proposition}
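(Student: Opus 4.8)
The plan is to prove all three properties simultaneously by induction on the spline order $m$, exploiting the two-term recursion $B_{i,m}(x) = \frac{x-\tau_i}{\tau_{i+m-1}-\tau_i}B_{i,m-1}(x) + \frac{\tau_{i+m}-x}{\tau_{i+m}-\tau_{i+1}}B_{i+1,m-1}(x)$ that defines the basis. In each case the order-$1$ functions $B_{i,1}$, which are indicators of the half-open intervals $[\tau_i,\tau_{i+1})$, furnish the base case, and the inductive step reduces the order-$m$ claim to the order-$(m-1)$ claim for the two parent functions $B_{i,m-1}$ and $B_{i+1,m-1}$. Throughout I would adopt the usual convention that any recursion coefficient with a vanishing denominator (arising from coincident knots in the augmented sequence \eqref{eq:def_aug}) is set to zero; this is consistent because the associated parent basis function also vanishes identically there.

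For property (i), the base case is immediate since $B_{i,1}$ is supported on $[\tau_i,\tau_{i+1}]$. For the inductive step, the hypothesis gives $\operatorname{supp} B_{i,m-1}\subseteq[\tau_i,\tau_{i+m-1}]$ and $\operatorname{supp} B_{i+1,m-1}\subseteq[\tau_{i+1},\tau_{i+m}]$; since the recursion writes $B_{i,m}$ as a linear combination of these two functions, its support lies in the union $[\tau_i,\tau_{i+m}]$, spanning exactly $M+1=m+1$ knots. For property (ii), I would again induct: on the open interval $(\tau_i,\tau_{i+m})$ both recursion coefficients are nonnegative and both parents are nonnegative by (i) and (ii), so it suffices to exhibit one strictly positive term at each point. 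If $x\in(\tau_i,\tau_{i+m-1})$ the first coefficient is positive and $B_{i,m-1}(x)>0$ by hypothesis; if $x\in(\tau_{i+1},\tau_{i+m})$ the second term is positive by the same reasoning. These two open subintervals cover $(\tau_i,\tau_{i+m})$ except possibly at a single interior knot, where positivity can be checked directly from the recursion.

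For property (iii) the base case $m=1$ holds because, for each $x\in[\xi_0,\xi_{K+1}]$, exactly one indicator $B_{i,1}(x)$ equals $1$ and the rest vanish. For the inductive step I would substitute the recursion into $\sum_i B_{i,m}(x)$ and reindex the second sum by $i\mapsto i-1$, so that every term carries the factor $B_{i,m-1}(x)$; the coefficient of $B_{i,m-1}(x)$ then collapses to $\frac{x-\tau_i}{\tau_{i+m-1}-\tau_i}+\frac{\tau_{i+m-1}-x}{\tau_{i+m-1}-\tau_i}=1$, yielding $\sum_i B_{i,m}(x)=\sum_i B_{i,m-1}(x)=1$ by the induction hypothesis. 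The matrix identity $\bB\one_{K+M}=\one_n$ is then just property (iii) read off at the $n$ sample points $x_1,\dots,x_n\in[\xi_0,\xi_{K+1}]$, each row of $\bB$ summing to $1$.

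The hard part will be the bookkeeping of boundary terms in the reindexing of (iii): shifting the summation index introduces stray terms at the two ends of the range, and the clean telescoping of coefficients holds only once these are shown to drop out. They do so precisely because the augmented knot layout in \eqref{eq:def_aug} places enough repeated knots at the boundaries that the extreme parent functions either vanish on $[\xi_0,\xi_{K+1}]$ by property (i) or carry a zero coefficient under the convention above. Verifying this edge behaviour carefully, together with the coincident-knot conventions invoked in (i) and (ii), is where essentially all the care in the argument resides; the interior algebra is routine.
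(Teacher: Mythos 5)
Your proposal is correct and follows essentially the same route as the paper: induction on the spline order via the two-term recursion for all three parts, with the base case given by the order-one indicators, a case split on which parent function is strictly positive for (ii), and the reindexing/telescoping argument for (iii) in which the stray boundary terms vanish because of the repeated boundary knots in the augmented sequence \eqref{eq:def_aug}. The only cosmetic difference is that the paper writes out the positivity check as an explicit five-case table (including the interior knot points you flag), but the substance is identical.
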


\subsection{(i)}

\begin{proof}
Firstly,
$$
B_{i,1}(x) = \begin{cases}
1 & x\in [\tau_i, \tau_{i+1})\\
0 & \text{othewise}
\end{cases}
$$
then when $x\not \in [\tau_i,\tau_{i+1}]$, we have $B_{i,1}(x)=0$. 

Next, suppose when $m=k$, we have $B_{i,k}(x)=0$ for $x\not\in [\tau_i,\tau_{i+k}]$, then when $m=k+1$,
$$
B_{i,k+1}(x) = \frac{x-\tau_i}{\tau_{i+k}-\tau_i} B_{i,k}(x) + \frac{\tau_{i+k+1}-x}{\tau_{i+k+1}-\tau_{i+1}}B_{i+1,k}(x)\,,
$$
by the assumption, 
\begin{align*}
    x\not\in [\tau_i, \tau_{i+k}],&\quad B_{i,k}(x)=0\\
    x\not\in [\tau_{i+1}, \tau_{i+k+1}],&\quad B_{i+1,k}(x)=0\,,
\end{align*}
then if $x\in[\tau_0,\tau_i)$ or $x\in (\tau_{i+k+1},\tau_{K+2M}]$, $B_{i,k}(x)=B_{i+1,k}(x)=0$, thus if $x\not\in [\tau_i, \tau_{i+k+1}]$, $B_{i,k+1}(x)=0$.
By induction, the proof is complete.

\end{proof}

\subsection{(ii)}

\begin{proof}
First of all,
$$
B_{i,1}(x) = \begin{cases}
1 & x\in [\tau_i, \tau_{i+1})\\
0 & \text{othewise}
\end{cases}
$$
then when $x\in (\tau_i,\tau_{i+1})$, we have $B_{i,1}(x)> 0$. 

Next, suppose when $m=k$, we have $B_{i,k}(x)> 0$ for $x\in (\tau_i,\tau_{i+k})$. Consider when $m=k+1$,
$$
B_{i,k+1}(x) = \frac{x-\tau_i}{\tau_{i+k}-\tau_i} B_{i,k}(x) + \frac{\tau_{i+k+1}-x}{\tau_{i+k+1}-\tau_{i+1}}B_{i+1,k}(x)\,,
$$
by the assumption
\begin{align*}
    x\in (\tau_i, \tau_{i+k}),&\quad B_{i,k}(x) > 0\\
    x\in (\tau_{i+1}, \tau_{i+k+1}),&\quad B_{i+1,k}(x) > 0\,,
\end{align*}
and by the conclusion from (i), we have
$$
\begin{cases}
\dfrac{x-\tau_i}{\tau_{i+k}-\tau_i} > 0\,,\quad B_{i,k}(x) > 0\,,\quad \quad \dfrac{\tau_{i+k+1}-x}{\tau_{i+k+1}-\tau_{i+1}}>0\,,\quad B_{i+1,k}(x)= 0 & \text{if }x\in (\tau_i, \tau_{i+1})\\
\dfrac{x-\tau_i}{\tau_{i+k}-\tau_i} > 0\,,\quad B_{i,k}(x) > 0\,,\quad \quad \dfrac{\tau_{i+k+1}-x}{\tau_{i+k+1}-\tau_{i+1}}>0\,,\quad B_{i+1,k}(x)\ge 0 &\text{if } x = \tau_{i+1}\\
\dfrac{x-\tau_i}{\tau_{i+k}-\tau_i} > 0\,,\quad B_{i,k}(x) > 0\,,\quad \dfrac{\tau_{i+k+1}-x}{\tau_{i+k+1}-\tau_{i+1}} > 0\,,\quad B_{i+1,k}(x) > 0 & \text{if }x\in (\tau_{i+1}, \tau_{i+k})\\
\dfrac{x-\tau_i}{\tau_{i+k}-\tau_i} >0\,,\quad B_{i,k}(x) \ge 0\,,\quad \dfrac{\tau_{i+k+1}-x}{\tau_{i+k+1}-\tau_{i+1}}>0\,,\quad B_{i+1,k}(x) > 0 & \text{if }x = \tau_{i+k}\\
\dfrac{x-\tau_i}{\tau_{i+k}-\tau_i} >0\,,\quad B_{i,k}(x) = 0\,,\quad \dfrac{\tau_{i+k+1}-x}{\tau_{i+k+1}-\tau_{i+1}}>0\,,\quad B_{i+1,k}(x) > 0 & \text{if }x\in (\tau_{i+k}, \tau_{i+k+1})
\end{cases}
$$
then if $x\in (\tau_i, \tau_{i+k+1})$, $B_{i, k+1}(x) > 0$.

By induction, if $x\in (\tau_i,\tau_{i+M})$, $B_{i,M}(x) > 0$.
\end{proof}

\subsection{(iii)}

\begin{proof}
Firstly, when order is 1,
$$
\sum_{i=1}^{K+2M-1}B_{i,1}(x)=1\,.
$$

Next, suppose when order is $m$, we have
$$
\sum_{i=1}^{K+2M-m} B_{i,m}(x)=1\,,
$$
then consider order is $m+1$, where $m+1\le M$,
\begin{align*}
    \sum_{i=1}^{K+2M-m-1}B_{i,m+1}(x) &= \sum_{i=1}^{K+2M-m-1}\left[\frac{x-\tau_i}{\tau_{i+m}-\tau_i} B_{i,m}(x) + \frac{\tau_{i+m+1}-x}{\tau_{i+m+1}-\tau_{i+1}}B_{i+1,m}(x)\right]\\
    &=\left[ \sum_{i=1}^{K+2M-m}\frac{x-\tau_i}{\tau_{i+m}-\tau_i} B_{i,m}(x) - \frac{x-\tau_{K+2M-m}}{\tau_{K+2M}-\tau_{K+2M-m}}B_{K+2M-m,m}(x) \right] + \\
    &\qquad \left[ \sum_{j=2}^{K+2M-m} \frac{\tau_{j+m}-x}{\tau_{j+m}-\tau_{j}}B_{j,m}(x) \right]\\
    &=\left[ \sum_{i=1}^{K+2M-m}\frac{x-\tau_i}{\tau_{i+m}-\tau_i} B_{i,m}(x) - \frac{x-\tau_{K+2M-m}}{\tau_{K+2M}-\tau_{K+2M-m}}B_{K+2M-m,m}(x) \right] + \\
    &\qquad \left[ \sum_{j=1}^{K+2M-m} \frac{\tau_{j+m}-x}{\tau_{j+m}-\tau_{j}}B_{j,m}(x) - \frac{\tau_{m+1}-x}{\tau_{m+1}-\tau_1}B_{1,m}(x)\right]\,,
\end{align*}
since $m+1\le M$, then $\tau_{m+1}\le \tau_M, \tau_{K+2M-m}\ge \tau_{K+M+1}$, it follows that $\tau_{m+1}=\tau_1, \tau_{K+2M}=\tau_{K+2M-m}$, then 
$$
B_{1,m}(x) = 0, \quad B_{K+2M-m, m}(x) = 0\,,
$$
thus
\begin{align*}
\sum_{i=1}^{K+2M-m-1}B_{i,m+1}(x) &= \sum_{i=1}^{K+2M-m}\frac{x-\tau_i}{\tau_{i+m}-\tau_i} B_{i,m}(x) + \sum_{j=1}^{K+2M-m} \frac{\tau_{j+m}-x}{\tau_{j+m}-\tau_{j}}B_{j,m}(x)\\ 
&= \sum_{i=1}^{K+2M-m}\left[\frac{x-\tau_i}{\tau_{i+m}-\tau_i} + \frac{\tau_{i+m}-x}{\tau_{i+m}-\tau_{i}}\right] B_{i,m}(x) \\
&= \sum_{i=1}^{K+2M-m}B_{i,m}(x)=1\,.
\end{align*}
Thus, by induction, $\sum_{i=1}^{K+M}B_{i, M}(x)=1$.
\end{proof}

\section{GBS not work for splines}

In GBS (or GMS), the preliminary assumption is that the loss function can be written as the sum of the loss of each observation:
$$
\hat\theta = \argmin_{\theta} L_\bfy(\theta), \qquad \text{with}\quad L_\bfy(\theta)\equiv \frac 1n \sum_{i=1}^n \ell(\theta;y_i)\,,
$$
where $\ell(\cdot)$ is a suitable loss function.
Even though a penalty $u(\theta)$ can be imposed, but it should be independent of the observation $\bfy$,
$$
L_\bfy(\theta, \lambda) = \frac 1n \sum_{i=1}^n \ell(\theta;y_i) + \lambda u(\theta) = \frac 1n\sum_{i=1}^n\left\{\ell(\theta;y_i)+\lambda u(\theta)\right\} \triangleq \frac 1n \sum_{i=1}^n \tilde\ell (\theta; y_i,\lambda)\,,
$$
then again the total loss function can be decomposed as sum of ``new'' individual loss $\tilde\ell(\theta;y_i,\lambda)$, which absorbs the penalty term.

Then the nonparametric bootstrap can be written as solving
$$
\hat\theta^\star = \argmin_{\theta}\frac 1n \sum_{i=1}^n w_i\ell(\theta;y_i)\,,
$$
where $\bfw=(w_1,\ldots, w_n)$ is sampled from multi-nominal distribution $\text{Multinom}(n, \one_n/n)$, i.e., $w_i$ counts the times that point $i$ is observed among $n$ trials and each point can be observed with equal probabilities $p_1=\cdots=p_n=1/n$. 

The loss function for the smoothing spline is
$$
L(\bfy, \bfx, \lambda) = \Vert \bfy -\bfB(\bfx)\gamma\Vert_2^2 +\lambda \gamma^T\bfOmega(\bfx)\gamma = \sum_{i=1}^n (y_i-\bfb^T_i\gamma)^2+\lambda \gamma^T\bfOmega(\bfx)\gamma\,.
$$
If $\lambda = 0$, we can treat $(\bfb_i, y_i)$ as a unit, where $\bfb_i$ denotes the $i$-th row of $\bB$, then the loss function can be written as sum of individual losses,
$$
L(\bfy, \bfx, \lambda = 0) = \sum_{i=1}^n (y_i-\bfb^T_i\gamma)^2\,,
$$
then the loss for the bootstrap sample $\{\bfx^\star, \bfy^\star\}$ is
$$
L(\bfy^\star, \bfx^\star, \lambda = 0) = \sum_{i=1}^n w_i (y_i-\bfb^T_i\gamma)^2\,,
$$
which indicates the basis matrix for the bootstrap sample is
$$
\tilde \bfB = [\underbrace{\bfb_1, \ldots,\bfb_1}_{nw_1},\underbrace{\bfb_2, \ldots,\bfb_2}_{nw_2},\ldots,
\underbrace{\bfb_n, \ldots,\bfb_n}_{nw_n}]^T\,.
$$
However, $\tilde\bfB$ cannot be a basis matrix for $\bfx^\star$ since it is no longer a lower 4-banded matrix as $\bfB$. When $\lambda > 0$, the loss cannot even be decomposed as the sum of individual losses since the smoothness penalty involves $\bfx$ itself, so the GMS framework is not suitable for smoothing splines.

\section{Coverage Probability}

In addition to the classical percentile CI $(q_{0.025}, q_{0.975})$, there are many variants of bootstrap confidence intervals for better coverage probability, such as
\begin{itemize}
    \item the bias-corrected percentile CI $(2\hat\theta-q_{0.975}, 2\hat\theta-q_{0.025})$ \parencite{efronNonparametricStandardErrors1981}
    \item \textcite{efronBetterBootstrapConfidence1987}'s $\mathrm{BC}_a$ introduced an "acceleration constant" $a$, and if $a=0$, it reduces to the above bias-corrected percentile CI
    \item the calibrated CI via double bootstrap \parencite{martinDoubleBootstrap1992}
    \item the studentized CI via double bootstrap \parencite{hallTheoreticalComparisonBootstrap1988}
\end{itemize}
Although we focus on the classical percentile CI, the comparisons can be seamlessly moved to other bootstrap CIs.

The accuracy would be affected by the size of the studied range of penalty parameters. But practically, we are more concerned about a range that contains the minimizer of the cross-validation error (or some other criteria). Here the selected range of $\lambda$ is wide enough to contain the minimizer of the cross-validation error, as shown in Figure \ref{fig:cv_minimizer}. 
\begin{figure}
    \centering
    \begin{subfigure}{0.5\textwidth}
    \includegraphics[width=\textwidth]{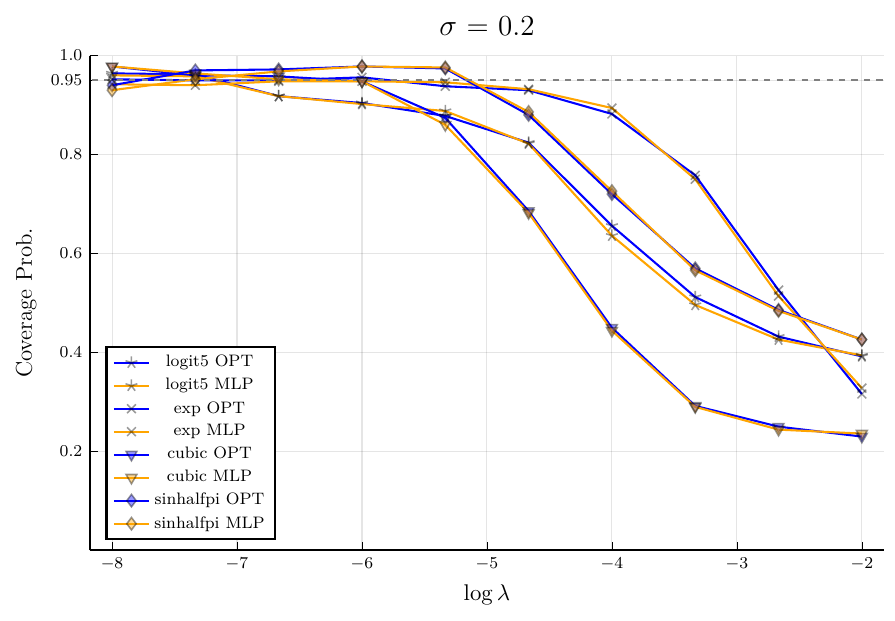}
    \caption{}
    \label{fig:covp_vs_lambda}
    \end{subfigure}%
    \begin{subfigure}{0.5\textwidth}
    \includegraphics[width=\textwidth]{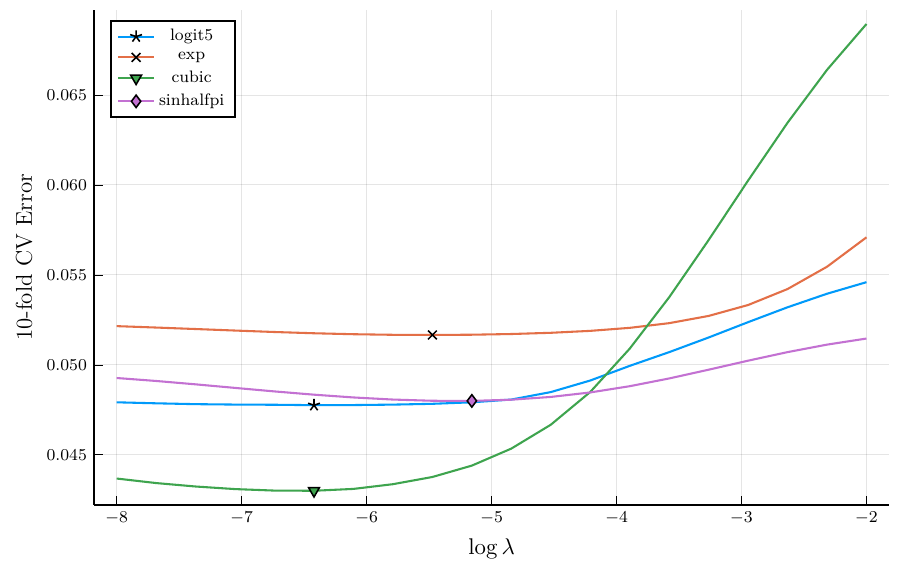}
    \caption{}
    \label{fig:cv_minimizer}
    \end{subfigure}
    \caption{(a) Coverage probability along the penalty parameter $\lambda$ among 5 repetitions. (b) 10-fold cross-validation error for each curve in one specific repetition.}
\end{figure}
Although a better coverage probability is not our direct goal, the coincidence of coverage probabilities from the OPT approach and the MLP generator would be another measurement for checking the approximation performance. Figure \ref{fig:covp_vs_lambda} displays the average coverage probabilities among five repetitions for each curve. Firstly, the blue and orange colors represent the coverage probabilities based on the OPT approach and the MLP generator. Different curves are denoted by different symbols. The coverage probabilities of the blue curve and the orange curve are pretty close, which indicates that the MLP generator can indeed achieve a good approximation to the OPT solution. The dashed horizontal line denotes the coverage probability $1-\alpha$, where $\alpha=0.05$ is the nominal significance level. Roughly, the coverage probabilities are close to 0.95 when $\log\lambda < -5$, and then they decrease. In other words, the coverage probability of the confidence band becomes worse with increasing $\lambda$. This can be explained by the cross-validation (CV) error curve. The minimizers of CV error curves are roughly on the left side of $\log\lambda=-5$. With a larger penalty parameter, the smoothing penalty would cause the final fitting to underfit. In an extreme case, it becomes a straight line, so it is not surprising that the coverage probability can not retain around 0.95. On the other hand, the CV error curve indicates that the selected range of $\lambda$ is wide enough to consider the situations of underfitting and overfitting.

\section{Proof of Proposition 1}


The first derivative of B-spline function $\sum\gamma_jB_{j,m}(x)$ turns out to be a spline of one order lower, and it can be calculated by differencing the coefficients.
\begin{lemma}[\cite{deboorPracticalGuideSplines1978}]
The first derivative of a spline function $\sum_{j=s}^r\gamma_jB_{j,m}(x)$ is
\begin{equation}
D\left(\sum_{j=r}^s\gamma_jB_{j,m}(x)\right) = (m-1)\sum_{j=r}^{s+1}\frac{\gamma_j-\gamma_{j-1}}{\tau_{j+m-1}-\tau_j}B_{j,m-1}(x)\,,\label{eq:bspl_diff}
\end{equation}
where $\gamma_{r-1}\triangleq 0$ and $\gamma_{s+1} \triangleq 0$.
\end{lemma}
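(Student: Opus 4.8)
The plan is to reduce the statement to the derivative of a \emph{single} B-spline and then recombine by summation by parts. Concretely, I would first establish the elementary identity
\[
B_{j,m}'(x) = (m-1)\left[\frac{B_{j,m-1}(x)}{\tau_{j+m-1}-\tau_j} - \frac{B_{j+1,m-1}(x)}{\tau_{j+m}-\tau_{j+1}}\right],
\]
which expresses the derivative of an order-$m$ basis function as a difference of two order-$(m-1)$ basis functions. Granting this, linearity gives $D\bigl(\sum_j \gamma_j B_{j,m}\bigr)=\sum_j \gamma_j B_{j,m}'$, and substituting the identity splits the total into two families of terms. Reindexing the second family by $k=j+1$ aligns it with the first, so the two sums merge into a single sum over $B_{j,m-1}$ whose coefficient is $(m-1)(\gamma_j-\gamma_{j-1})/(\tau_{j+m-1}-\tau_j)$; the two endpoint terms are absorbed exactly by the conventions $\gamma_{r-1}\triangleq 0$ and $\gamma_{s+1}\triangleq 0$, yielding \eqref{eq:bspl_diff}. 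By the support property in Proposition~\ref{prop:bspl}(i) each sum is finite, so no convergence or tail issues arise.

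To prove the single-basis identity I would induct on the order $m$, using the recursion that defines $B_{j,m}$ from $B_{j,m-1}$ and $B_{j+1,m-1}$ in Section~\ref{sec:monobspl_pre}. The base case $m=2$ is immediate: $B_{j,2}$ is the piecewise-linear hat function, whose derivative equals $+(\tau_{j+1}-\tau_j)^{-1}$ on $[\tau_j,\tau_{j+1})$ and $-(\tau_{j+2}-\tau_{j+1})^{-1}$ on $[\tau_{j+1},\tau_{j+2})$, i.e.\ precisely $B_{j,1}/(\tau_{j+1}-\tau_j)-B_{j+1,1}/(\tau_{j+2}-\tau_{j+1})$. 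For the inductive step I would differentiate the recursion with the product rule, replace $B_{j,m-1}'$ and $B_{j+1,m-1}'$ by the inductive hypothesis (which introduces order-$(m-2)$ basis functions), and then re-collapse the resulting $(x-\tau_\bullet)$- and $(\tau_\bullet-x)$-weighted $B_{\cdot,m-2}$ terms back into order-$(m-1)$ basis functions by invoking the recursion a second time.

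The main obstacle is exactly this inductive simplification: one must verify that, after substituting the hypothesis, the coefficients of each surviving $B_{\cdot,m-1}$ recombine to the claimed two-term form, which requires careful bookkeeping of the several linear weights and knot-gap denominators and is the step most prone to index errors. A secondary point is handling coincident knots, where a denominator such as $\tau_{j+m-1}-\tau_j$ may vanish; I would adopt the standard convention that any term is omitted whenever its basis factor is identically zero, which is legitimate by Proposition~\ref{prop:bspl}(i). Once the single-basis identity is secured, the summation-by-parts step collapsing $\sum_j\gamma_j B_{j,m}'$ into \eqref{eq:bspl_diff} is routine reindexing.
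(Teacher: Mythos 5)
Your proposal is correct, and it takes a genuinely different route from the paper for the simple reason that the paper offers no proof of this lemma at all: it is stated as a known result cited from \textcite{deboorPracticalGuideSplines1978}, and the single-basis derivative identity you begin with is likewise invoked without proof in the paper's subsequent corollary. Your argument supplies exactly what the citation conceals: a base case at $m=2$ (hat functions), an inductive step obtained by differentiating the defining recursion and re-collapsing via the recursion (the bookkeeping does close: the coefficients of $B_{j,m-2}$ and $B_{j+2,m-2}$ match termwise, and the two $B_{j+1,m-2}$ cross terms cancel because $(x-\tau_j)+(\tau_{j+m-1}-x)$ and $(x-\tau_{j+1})+(\tau_{j+m}-x)$ collapse to the corresponding knot gaps), and finally the reindexing step in which the conventions $\gamma_{r-1}=\gamma_{s+1}=0$ absorb the two endpoint terms. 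That last manipulation is essentially the same one the paper performs when deriving its corollary for the full sum $\sum_{j=1}^J$, except there the boundary terms vanish because of coincident boundary knots rather than by convention; so your proof also retroactively justifies the identity on which the corollary's proof relies. What your route buys is a self-contained supplement; what the paper's route buys is brevity. The only caveat worth making explicit is that for small $m$ the derivative exists only away from the knots (a hat function has a kink at its middle knot), so the identity should be read on each open knot interval; this caveat attaches to the lemma as stated, not to any defect in your argument.
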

As a consequence, we can obtain the derivative of the linear combination of all $J$ B-spline basis functions. 
\begin{corollary}
The first derivative of a spline function $\sum_{j=1}^J\gamma_jB_{j,m}(x)$ is
\begin{equation}
D\left(\sum_{j=1}^J\gamma_jB_{j,m}(x)\right) = (m-1)\sum_{j=2}^{J}\frac{\gamma_j-\gamma_{j-1}}{\tau_{j+m-1}-\tau_j}B_{j,m-1}(x)\,.\label{eq:bspl_diff2}
\end{equation}
\end{corollary}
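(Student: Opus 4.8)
The plan is to read the Corollary as the special case of the preceding Lemma (de Boor's differentiation formula) in which the sum runs over the whole collection of basis functions, and then to discard the two boundary terms that the Lemma produces by a support argument.

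First I would apply the Lemma with lower index $1$ and upper index $J$, using the boundary conventions $\gamma_0\triangleq 0$ and $\gamma_{J+1}\triangleq 0$. This gives directly
\[
D\Big(\sum_{j=1}^J\gamma_jB_{j,m}(x)\Big)
=(m-1)\sum_{j=1}^{J+1}\frac{\gamma_j-\gamma_{j-1}}{\tau_{j+m-1}-\tau_j}B_{j,m-1}(x)\,,
\]
so that the asserted identity differs from this one only in that the extreme summands $j=1$ and $j=J+1$ have been removed. Thus the entire task reduces to showing that those two summands contribute nothing on the interval $[\xi_0,\xi_{K+1}]$.

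Next I would invoke Proposition~\ref{prop:bspl}(i), by which the lower-order basis function $B_{j,m-1}$ is supported on $[\tau_j,\tau_{j+m-1}]$. For $j=1$ this support is $[\tau_1,\tau_m]$, which lies to the left of $\xi_0$ because the knot augmentation \eqref{eq:def_aug} forces $\tau_m\le\tau_M\le\xi_0$; for $j=J+1=K+M+1$ the support is $[\tau_{J+1},\tau_{J+m}]$, which lies to the right of $\xi_{K+1}$ because \eqref{eq:def_aug} forces $\tau_{J+1}=\tau_{K+M+1}\ge\xi_{K+1}$. Consequently both omitted terms vanish identically on $[\xi_0,\xi_{K+1}]$, and what remains is precisely $\sum_{j=2}^J$, which is the claim. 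Note that the coefficients themselves ($\gamma_1-\gamma_0=\gamma_1$ and $\gamma_{J+1}-\gamma_J=-\gamma_J$) need not be zero, so it is genuinely the placement of the supports, not the boundary convention on the $\gamma$'s, that kills these terms.

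I do not expect a real obstacle here, since the argument is a direct specialization of the Lemma followed by bookkeeping on supports; the only point requiring a little care is the behaviour at the two endpoints $x=\xi_0$ and $x=\xi_{K+1}$, where the discarded basis functions sit at the very edge of their supports. In the cubic setting of interest we have $m-1=3\ge 2$, so $B_{1,m-1}$ and $B_{J+1,m-1}$ are continuous and vanish at the endpoints of their supports, and hence no boundary value survives even at $\xi_0$ and $\xi_{K+1}$. The main thing to get right is matching the index shifts in the Lemma to the augmented-knot labels, so that the two terms identified as lying outside $[\xi_0,\xi_{K+1}]$ are exactly $j=1$ and $j=J+1$.
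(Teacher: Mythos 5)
Your overall route is the paper's: specialize the differentiation formula to the full sum $j=1,\dots,J$ and discard the two extreme terms $j=1$ and $j=J+1$. The paper, however, discards them for a different reason than you do: its proof explicitly takes the boundary knots to be coincident, $\tau_1=\cdots=\tau_m=\xi_0$ and $\xi_{K+1}=\tau_{J+1}=\cdots=\tau_{J+m}$, so that $B_{1,m-1}$ and $B_{J+1,m-1}$ have degenerate supports and are identically zero (equivalently, the two terms have zero denominators $\tau_m-\tau_1$ and $\tau_{J+m}-\tau_{J+1}$ and are null by the usual $0/0$ convention). Your support-location argument is sound on the left: even when $\tau_m=\xi_0$, a B-spline vanishes at the right endpoint of its support under the half-open convention, so the $j=1$ term contributes nothing on $[\xi_0,\xi_{K+1}]$.

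The step that can fail is your blanket claim that order-$(m-1)$ B-splines with $m-1\ge 2$ are continuous and vanish at the endpoints of their supports: this is false at a knot whose multiplicity equals the order. In a configuration permitted by the augmentation \eqref{eq:def_aug}, namely $\xi_{K+1}=\tau_{J+1}=\tau_{J+2}=\tau_{J+3}<\tau_{J+4}$ in the cubic case, one has $B_{J+1,3}(\xi_{K+1})=1$ while the denominator $\tau_{J+4}-\tau_{J+1}$ is positive, so the discarded $j=J+1$ term equals $-3\gamma_J/(\tau_{J+4}-\tau_{J+1})\neq 0$ at $x=\xi_{K+1}$; your argument therefore does not cover every knot placement allowed by \eqref{eq:def_aug} at the right endpoint, a point the paper later uses (it evaluates $f'$ at $\xi_{K+1}$). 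Under the standard fully coincident boundary knots assumed in the paper's proof, your conclusion is correct, but then the reason is simply that the two discarded basis functions are identically zero, not continuity at support endpoints. State that convention explicitly (or require the multiplicity of $\xi_{K+1}$ among $\tau_{J+1},\dots,\tau_{J+m-1+1}$ to be below the order $m-1$), and your proof is complete and coincides with the paper's.
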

\begin{proof}
\begin{align}
    B'_{j,m}(x) = (m-1)\left(\frac{-B_{j+1,m-1}(x)}{\tau_{j+m}-\tau_{j+1}} + \frac{B_{j,m-1}(x)}{\tau_{j+m-1}-\tau_j}\right)
\end{align}
then
\begin{align*}
    &D\left(\sum_{j=1}^J\gamma_jB_{j,m}(x)\right) \\
    &= \alpha_1\frac{B_{1,m-1}(x)}{\tau_m-\tau_1} + \frac{(\alpha_2-\alpha_1)B_{2,m-1}(x)}{\tau_{m+1}-\tau_2} + \cdots + \frac{(\alpha_J-\alpha_{J-1})B_{J,m-1}(x)}{\tau_{J+m-1}-\tau_J} - \frac{\alpha_JB_{J+1,m-1}(x)}{\tau_{J+m}-\tau_{J+1}}
\end{align*}
Note that
$$
\tau_1 =\tau_2 = \cdots=\tau_m = \xi_0 < \xi_1 <\cdots < \xi_K < \xi_{K+1} = \tau_{J+1} = \cdots =\tau_{J+m}\,,
$$
so the first term and the last term are zero, thus
$$
D\left(\sum_{j=1}^J\gamma_jB_{j,m}(x)\right) = \sum_{j=2}^J\frac{\gamma_j-\gamma_{j-1}}{\tau_{j+m-1}-\tau_j}B_{j,m-1}(x)\,.
$$
\end{proof}
Note that the limits of summation in Equations~\eqref{eq:bspl_diff}~\eqref{eq:bspl_diff2} are different. Compared to Equation~\eqref{eq:bspl_diff}, two boundary terms become to zero in Equation~\eqref{eq:bspl_diff2} due to $\tau_1=\tau_m$ and $\tau_{J+1}=\tau_{J+m}$.

For quadratic splines $m=3$, since $B_{j,2}(t)$ reduces to linear functions, then the nonnegative (or nonpositive) first derivative constraints in the whole domain can be reduced to the constraints on the knots. But for cubic splines $m=4$, we cannot characterize monotonicity as linear constraints at the knots in the same way. 

\subsection{Sufficient condition}\label{sec:proof_suff_cond}
If $\bA\gamma \le 0$, then the first derivative is larger than zero, thus $\bA\gamma\le 0$ is a sufficient condition.

\subsection{Necessary condition}
    If $f(x)$ is non-decreasing, then $f'(x)\ge 0$. Evaluate $f'(x)$ on $\{\xi_i\}_{i=0}^{K+1}$ and write in matrix form, we have $\bB^{(1)}\bD^{-1}\bA\gamma\le 0$, which would be a necessary condition.

\subsection{Sufficient and necessary condition}

Note that the sufficient and necessary condition for nondecreasing spline function is
$$
f'(x) \ge \min_{x\in[\xi_0,\xi_{K+1}]} f'(x) = \min_{x\in\{\xi_0, \xi_{K+1}\}\cup\{x:f''(x)=0\}} f'(x)\,.
$$
Now we find the roots of $f''(x)=0$ in the intervals $[\xi_0, \xi_{K+1}]$.

First of all, the second derivative of a spline function $f(x)=\sum_{j=1}^J\gamma_jB_{j,m}(x)$ is 
\begin{align*}
    D^2\left(\sum_{j=1}^J\gamma_jB_{j, m}(x)\right) &= (m-1)D\left(\sum_{j=2}^J\frac{\gamma_j-\gamma_{j-1}}{\tau_{j+m-1}-\tau_j}B_{j,m-1}(x)\right)\\
&=(m-1)\sum_{j=2}^J\frac{\gamma_j-\gamma_{j-1}}{h_j^{(m-1)}}D(B_{j,3}(x))\\
&=(m-1)\sum_{j=2}^J\frac{\gamma_j-\gamma_{j-1}}{h_j^{(m-1)}}(m-2)\left(\frac{-B_{j+1,m-2}(x)}{\tau_{j+m-1}-\tau_{j+1}}+\frac{B_{j, m-2}(x)}{\tau_{j+m-2}-\tau_j}\right)\,,
\end{align*}
where $h_i^{(j)}\triangleq \tau_{i+j}-\tau_i$. Now for cubic splines $m=4$, we have
\begin{align*}
&=6\sum_{j=2}^J\frac{\gamma_j-\gamma_{j-1}}{h_j^{(3)}}\left(\frac{-B_{j+1,2}(x)}{h_{j+1}^{(2)}} + \frac{B_{j,2}(x)}{h_{j}^{(2)}}\right)\\
&=6\left\{\sum_{j=2}^J\frac{\gamma_j-\gamma_{j-1}}{h_j^{(2)}h_j^{(3)}}B_{j,2}(x) - \sum_{j=2}^J\frac{\gamma_j-\gamma_{j-1}}{h_j^{(3)}h_{j+1}^{(2)}}B_{j+1,2}(x)\right\}\\
&=6\left\{\sum_{j=2}^J\frac{\gamma_j-\gamma_{j-1}}{h_j^{(2)}h_j^{(3)}}B_{j,2}(x) - \sum_{j=2}^{J-1}\frac{\gamma_j-\gamma_{j-1}}{h_j^{(3)}h_{j+1}^{(2)}}B_{j+1,2}(x)\right\}\\
&=6\left\{\sum_{j=2}^J\frac{\gamma_j-\gamma_{j-1}}{h_j^{(2)}h_j^{(3)}}B_{j,2}(x) - \sum_{k=3}^{J}\frac{\gamma_{k-1}-\gamma_{k-2}}{h_{k-1}^{(3)}h_{k}^{(2)}}B_{k,2}(x)\right\}\\
&=6\left\{\frac{\gamma_2-\gamma_1}{h_2^{(2)}h_2^{(3)}} + \sum_{j=3}^J\left[\frac{\gamma_j-\gamma_{j-1}}{h_j^{(2)}h_j^{(3)}}-\frac{\gamma_{j-1}-\gamma_{j-2}}{h_j^{(2)}h_{j-1}^{(3)}}\right]B_{j,2}(x)\right\}\\
&=6\sum_{j=3}^J\left[\frac{\gamma_j-\gamma_{j-1}}{h_j^{(2)}h_j^{(3)}}-\frac{\gamma_{j-1}-\gamma_{j-2}}{h_j^{(2)}h_{j-1}^{(3)}}\right]B_{j,2}(x)\\
&=6\sum_{j=3}^J\frac{1}{h_j^{(2)}}\left[\frac{\gamma_j-\gamma_{j-1}}{h_j^{(3)}}-\frac{\gamma_{j-1}-\gamma_{j-2}}{h_{j-1}^{(3)}}\right]B_{j,2}(x)\\
&\triangleq 6\sum_{j=3}^JA_{j}B_{j,2}(x)\,.
\end{align*}
Note that
\begin{align*}
    B_{i, 2}(x) &= \frac{x-\tau_i}{\tau_{i+1} -\tau_i} B_{i, 1}(x) + \frac{\tau_{i+2}-x}{\tau_{i+2}-\tau_{i+1}}B_{i, i+1}(x)\\
    &=\begin{cases}
        \dfrac{x - \tau_i}{\tau_{i+1}-\tau_i} & \tau_i\in [\tau_i, \tau_{i+1})\\
        \dfrac{\tau_{i+2}-x}{\tau_{i+2}-\tau_{i+1}} & \tau_i\in [\tau_{i+1}, \tau_{i+2})
    \end{cases}\,,
\end{align*}
then in the interval $x\in [\tau_i, \tau_{i+1}], i = 4,\ldots,J$, to have
\begin{equation}
f''(x) = 6\sum_{j\in\{i, i-1\}}A_jB_{j,2}(x)=A_i\frac{x-\tau_i}{h_i} + A_{i-1}\frac{\tau_{i+1}-x}{h_i} = 0\,,  \label{eq:bspl_2nd_deriv}
\end{equation}
we obtain
$$
x = \frac{A_i}{A_i-A_{i-1}}\tau_i +\frac{-A_{i-1}}{A_i-A_{i-1}}\tau_{i+1}\,,
$$
the minimizer lies in $[\tau_i, \tau_{i+1}]$ only when both $A_i$ and $-A_{i-1}$ are positive.

\subsection{Another sufficient and necessary condition}
Since $f''(x)$ is a linear combination of linear functions in each interval, then $f''(x)\ge 0$ if and only if $f''(\tau_i)\ge 0$. And it follows that $f'(x)\ge f'(\xi_0)$, then if $f$ is a non-decreasing function, we must have $f'(\xi_0) \ge 0$.
Thus, when $f''(x)\ge 0$, we have the following necessary condition,
$$
\begin{cases}
f''(\tau_{i+1}) = A_i & \ge 0\,, i=3,\ldots,J\,,\\
f'(\xi_0) & \ge 0\,.
\end{cases}
$$
Note that
\begin{align*}
    f'(\tau_i) &=3\sum_{j\in\{i-1,i-2\}}\frac{\gamma_j-\gamma_{j-1}}{\tau_{j+3}-\tau_j}B_{j,3}(x) \\
&= 3\left[\frac{\gamma_{i-1}-\gamma_{i-2}}{\tau_{i+2}-\tau_{i-1}}\cdot\frac{\tau_{i}-\tau_{i-1}}{\tau_{i+1}-\tau_{i-1}} + \frac{\gamma_{i-2}-\gamma_{i-3}}{\tau_{i+1}-\tau_{i-2}}\cdot\frac{\tau_{i+1}-\tau_{i}}{\tau_{i+1}-\tau_{i-1}}\right]\\
&= 3\left[\frac{h_{i-1}(\gamma_{i-1}-\gamma_{i-2})}{h_{i-1}^{(3)}h_{i-1}^{(2)}}+\frac{h_{i}(\gamma_{i-2}-\gamma_{i-3})}{h_{i-2}^{(3)}h_{i-1}^{(2)}}\right]\,,
\end{align*}
we have 
$$
f'(\xi_0) = f'(\tau_4) = 3\left[\frac{h_{3}(\gamma_{3}-\gamma_{2})}{h_{3}^{(3)}h_{3}^{(2)}}+\frac{h_{4}(\gamma_{2}-\gamma_{1})}{h_{2}^{(3)}h_{3}^{(2)}}\right] = 3\frac{\gamma_2-\gamma_1}{h_4}\,.
$$
then the condition can be written as
$$
\begin{cases}
    \dfrac{\gamma_j-\gamma_{j-1}}{h_j^{(3)}}-\dfrac{\gamma_{j-1}-\gamma_{j-2}}{h_{j-1}^{(3)}}\ge 0\;\;\forall j=3,\ldots,J\,,\\
    \gamma_2-\gamma_1\ge 0\,.
\end{cases}
$$
In matrix form, it becomes $\bfC\gamma \ge 0$ with
$$
\bfC = 
\begin{bmatrix}
    -1 & 1 & 0 & 0 & \cdots& 0 & 0 & 0\\
    \frac{1}{h_2^{(3)}} & \;\;-\left(\frac{1}{h_2^{(3)}} +\frac{1}{h_3^{(3)}}\right) & \frac{1}{h_3^{(3)}} & 0 & \cdots& 0 & 0 & 0\\
    0 & \frac{1}{h_3^{(3)}} & -\left(\frac{1}{h_3^{(3)}}+\frac{1}{h_4^{(3)}}\right) & \frac{1}{h_4^{(3)}} & \cdots& 0 & 0 & 0\\
    \vdots & \vdots & \vdots & \vdots &\ddots & -\left(\frac{1}{h_{J-2}^{(3)}} + \frac{1}{h_{J-1}^{(3)}}\right) & \frac{1}{h_{J-1}^{(3)}} & 0\\
    0 & 0 & 0 & 0 & \cdots & \frac{1}{h_{J-1}^{(3)}} & -\left(\frac{1}{h_{J-1}^{(3)}} + \frac{1}{h_{J}^{(3)}}\right) & \frac{1}{h_{J}^{(3)}}
\end{bmatrix}\,.
$$
By Gaussian elimination on rows of $\bfC$, 
\begin{enumerate}
    \item multiply 2nd row by $h_2^{(3)}$ and plus the 1st row;
    \item multiply 3rd row by $h_3^{(3)}$ and plus the (updated) 2nd row;
    \item multiply 4th row by $h_4^{(3)}$ and plus the (updated) 3rd row;
    \item ...
\end{enumerate}
$\bfC\gamma\ge 0$ becomes $\bfC'\gamma\ge 0$ with
$$
\bfC' = \begin{bmatrix}
    -1 & 1 & 0 & 0 & \cdots & 0 & 0 & 0\\
    0 & -\frac{h_2^{(3)}}{h_3^{(3)}} & \frac{h_2^{(3)}}{h_3^{(3)}} & 0 & \cdots & 0 & 0 & 0\\
    0 & 0 & -\frac{h_3^{(3)}}{h_4^{(3)}} & \frac{h_3^{(3)}}{h_4^{(3)}} & \cdots & 0 & 0 & 0\\
    \vdots & \vdots & \vdots & \vdots &\ddots & - \frac{h_{J-2}^{(3)}}{h_{J-1}^{(3)}} & \frac{h_{J-2}^{(3)}}{h_{J-1}^{(3)}} & 0\\
    0 & 0 & 0 & 0 & 0 & 0 & - \frac{h_{J-1}^{(3)}}{h_{J}^{(3)}} & \frac{h_{J-1}^{(3)}}{h_{J}^{(3)}}
\end{bmatrix}\,,
$$
then normalize the $j$-th row by multiplying $\frac{h_{j+1}^{(3)}}{h_{j}^{(3)}}$, $j=2,\ldots, J-1$, and denote the resulting matrix as $\bfC''$. Now the condition becomes $\bfC''\gamma\ge 0$.

Note that $\bfC'' = -\bfA$ in Section~\ref{sec:proof_suff_cond}, so we reach the same condition. In other words, $\bfA\gamma\le 0$ is also necessary for $f$ to be non-decreasing when $f''(x)\ge 0$.

\section{Proof of Theorem 1}

\subsection{No error: $\sigma^2 = 0$}
\begin{proof}
Denote
\begin{align*}
    \hat\gamma = \argmin_{\bA\gamma\le 0}\Vert \bfy-\bfB\gamma\Vert_2^2\,,
\end{align*}
and let $\iso(\gamma)$ be the isotonic fitting to $\gamma$, i.e.,
$$
\iso(\gamma)  = \argmin_{\bA x\le 0} \Vert \gamma - x\Vert_2^2\,.
$$
Since $\bA\iso(\gamma)\le 0$, that is, $\iso(\gamma)$ is a feasible point for 
$$
\min_{\gamma:\bA\gamma\le 0}\Vert \bfy-\bfB\gamma\Vert_2\,,
$$
then
\begin{equation}
\Vert \bfy - \bfB\hat\gamma\Vert_2 \le \Vert \bfy-\bfB\iso(\gamma)\Vert_2\,.    
\label{eq:error_suff_vs_iso}
\end{equation}
Particularly, when $\bfe = 0$, we have $\bfy = \bfB\gamma$, \begin{equation}
\Vert \bfB\gamma - \bfB\hat\gamma\Vert_2\le\Vert \bfB\gamma-\bfB\tilde\gamma\Vert_2\,.
\label{ieq:1}
\end{equation}
Note that
\begin{equation}
\Vert\bfB \gamma - \bfB\tilde\gamma\Vert_2^2 = (\gamma - \tilde\gamma)^T\bfB^T\bfB(\gamma - \tilde\gamma)\le \lambda_1\Vert \gamma - \tilde\gamma\Vert_2^2\,,
\label{ieq:2}
\end{equation}
where $\lambda_1$ is the maximum eigenvalues of $\bfB^T\bfB$.

Since $\bfB$ is non-negative matrix, and so is $\bfB^T\bfB$, then the largest eigenvalue satisfies (see Theorem 8.1.22 of \textcite{hornMatrixAnalysis2012})
\begin{align*}
\min_j\sum_i (\bfB^T\bfB)_{ij} & \le \lambda_1\le \max_i \sum_{j}(\bfB^T\bfB)_{ij} \,,\\
\min_j (\one^T\bfB)_j=\min_j(\one^T\bfB^T\bfB)_j & \le \lambda_1\le \max_i (\bfB^T\bfB\one)_i = \max_i(\bfB^T\one)_i\,,
\end{align*}
where
$$
(\bfB^T\one)_j=\sum_{i=1}^n B_j(x_i)\,.
$$
Since there are at most $\frac{n}{J-3}(1+\eta_1)$ points in each interval and each cubic B-spline basis function are nonzero in at most four regions, then
\begin{equation}\label{eq:bound_sum_bji}
\sum_{i=1}^n B_j(x_i) \le \frac{n}{J-3}(1+\eta_1)\cdot 4\max_x B_j(x) \le \frac{n}{J-3}(1+\eta_1)\cdot 4\,.    
\end{equation}

The first derivative is
$$
f'(x) = 3\sum_{j=2}^J\frac{\gamma_j-\gamma_{j-1}}{\tau_{j+3}-\tau_j}B_{j,3}(x)\,,
$$
where
$$
    B_{j,3}(x) =\begin{cases}
        \dfrac{x-\tau_j}{\tau_{j+2}-\tau_j}\cdot \dfrac{x-\tau_j}{\tau_{j+1}-\tau_j} & x\in [\tau_j, \tau_{j+1})\\
        \dfrac{x-\tau_j}{\tau_{j+2}-\tau_j}\cdot \dfrac{\tau_{j+2}-x}{\tau_{j+2}-\tau_{j+1}} + \dfrac{\tau_{j+3}-x}{\tau_{j+3}-\tau_{j+1}}\cdot \dfrac{x-\tau_{j+1}}{\tau_{j+2}-\tau_{j+1}} & x\in[\tau_{j+1}, \tau_{j+2})\\
        \dfrac{\tau_{j+3}-x}{\tau_{j+3}-\tau_{j+1}}\cdot\dfrac{\tau_{j+3}-x}{\tau_{j+3}-\tau_{j+2}} & x\in [\tau_{j+2}, \tau_{j+3})
    \end{cases}\,.
$$
Thus for $x\in [\tau_i, \tau_{i+1})$, only 3 basis functions are nonzero,
\begin{align}
\frac 13f'(x)
&= \sum_{i=j,j+1,j+2}\frac{\gamma_j-\gamma_{j-1}}{\tau_{j+3}-\tau_j}B_{j,3}(x)\notag\\
&=\frac{\gamma_i-\gamma_{i-1}}{h_i^{(3)}}B_{i,3}(x) + \frac{\gamma_{i-1}-\gamma_{i-2}}{h_{i-1}^{(3)}}B_{i-1,3}(x) + \frac{\gamma_{i-2}-\gamma_{i-3}}{h_{i-2}^{(3)}}B_{i-2,3}(x)\label{eq:fp_i}\,.
\end{align}

Note that
\begin{equation}
B_{i-1,3}(x) = \frac{(x-\tau_{i-1})(\tau_{i+1}-x)}{h_{i-1}^{(2)}h_i} + \dfrac{(\tau_{i+2}-x)(x-\tau_i)}{h_i^{(2)}h_{i}}\\
\triangleq \frac{1}{h_i}[C_1(x) +C_2(x)]\,,\label{eq:b_i_1}
\end{equation}
where
\begin{align*}
C_1(x)&=\frac{(x-\tau_{i-1})(\tau_{i+1}-x)}{h_{i-1}^{(2)}}=\frac{(x-\tau_i+\tau_i-\tau_{i-1})(\tau_{i+1}-x)}{h_{i-1}^{(2)}} \\
&= \frac{(x-\tau_i)(\tau_{i+1}-x)}{h_{i-1}^{(2)}} +\frac{h_{i-1}(\tau_{i+1}-x)}{h_{i-1}^{(2)}}\\
C_2(x)&=\frac{(\tau_{i+2}-x)(x-\tau_i)}{h_i^{(2)}} = \frac{(\tau_{i+2}-\tau_{i+1}+\tau_{i+1}-x)(x-\tau_i)}{h_i^{(2)}} \\
&= \frac{h_{i+1}(x-\tau_i)}{h_i^{(2)}} + \frac{(\tau_{i+1}-x)(x-\tau_i)}{h_{i}^{(2)}}\,.
\end{align*}
Since $x\in [\tau_i, \tau_{i+1})$, then $\tau_{i+1}-x>0, x-\tau_i\ge 0$, then we have
\begin{equation}\label{eq:bound_b_i_1}
    \begin{split}
    C_1(x) \ge \frac{(x-\tau_i)(\tau_{i+1}-x)}{h_{i-1}^{(2)}}\triangleq c_1(x)\,,\\
C_2(x) \ge \frac{(\tau_{i+1}-x)(x-\tau_i)}{h_i^{(2)}}\triangleq c_2(x)\,.        
    \end{split}
\end{equation}
The first derivatives at the knots are
\begin{align*}
f'(\tau_i) &= \frac{h_{i-1}(\gamma_{i-1}-\gamma_{i-2})}{h_{i-1}^{(3)}h_{i-1}^{(2)}} + \frac{h_i(\gamma_{i-2}-\gamma_{i-3})}{h_{i-2}^{(3)}h_{i-1}^{(2)}}\,,\\
f'(\tau_{i+1}) &= \frac{h_i(\gamma_i-\gamma_{i-1})}{h_i^{(3)}h_i^{(2)}} + \frac{h_{i+1}(\gamma_{i-1}-\gamma_{i-2})}{h_{i-1}^{(3)}h_i^{(2)}}\,.
\end{align*}
Particularly, since $h_3=0, h_4=h_2^{(3)}=h_3^{(2)}; h_{J+1}=0, h_J = h_J^{(2)}=h_J^{(3)}$, the first derivatives at two boundary knots are
\begin{align*}
    f'(\tau_4) &= \frac{h_4(\gamma_2-\gamma_1)}{h_2^{(3)}h_3^{(2)}} = \frac{\gamma_2-\gamma_1}{h_4}\,,\\
    f'(\tau_{J+1}) &= \frac{h_J(\gamma_J-\gamma_{J-1})}{h_J^{(3)}h_J^{(2)}} = \frac{\gamma_J-\gamma_{J-1}}{h_J}\,.
\end{align*}

Since $f'(x)\ge 0$, then we have $f'(\tau_{i})\ge 0, f'(\tau_{i+1})\ge 0$, and hence
$$
\gamma_2-\gamma_1\ge 0\qquad \gamma_J-\gamma_{J-1}\ge 0\,.
$$
Also note that if $\gamma_{i-1}-\gamma_{i-2} < 0$, we must have $\gamma_{i-2}-\gamma_{i-3} > 0$ and $\gamma_i - \gamma_{i-1} > 0$. 
Consider $I = \{i: \gamma_{i-1}-\gamma_{i-2} < 0\}$, then for each $i\in I$, let Equation~\eqref{eq:fp_i} $\ge 0$, we obtain
$$
\gamma_{i-2}-\gamma_{i-1} \le\dfrac{\dfrac{\tau_{i+2}-\tau_{i-1}}{\tau_{i+3}-\tau_i}(\gamma_i-\gamma_{i-1})B_{i,3}(x) + \dfrac{\tau_{i+2}-\tau_{i-1}}{\tau_{i+1}-\tau_{i-2}}(\gamma_{i-2}-\gamma_{i-3})B_{i-2,3}(x)}{B_{i-1,3}(x)}\,,\;\forall x\in [\tau_i, \tau_{i+1})\,,
$$
thus
\begin{equation}
\gamma_{i-2}-\gamma_{i-1}\le \min_{x\in [\tau_i, \tau_{i+1})}\dfrac{(\gamma_i-\gamma_{i-1})\dfrac{h_{i-1}^{(3)}}{h_i^{(3)}}B_{i,3}(x)+(\gamma_{i-2}-\gamma_{i-3})\dfrac{h_{i-1}^{(3)}}{h_{i-2}^{(3)}}B_{i-2,3}(x)}{B_{i-1,3}(x)}\,.
\label{eq:neg_gamma}
\end{equation}

Note that for two general functions, if $g(x)\ge h(x)$, then we also have $\min g(x) \ge \min h(x)$. Thus, replace $B_{i-3,3}(x)$ with its lower bound \eqref{eq:bound_b_i_1} in the denominator of Inequality~\eqref{eq:neg_gamma}
\begin{equation}
\gamma_{i-2}-\gamma_{i-1}\le \min_{x\in [\tau_i, \tau_{i+1})}\dfrac{(\gamma_i-\gamma_{i-1})\dfrac{h_{i-1}^{(3)}}{h_i^{(3)}}\dfrac{(x-\tau_i)^2}{h_ih_i^{(2)}}+(\gamma_{i-2}-\gamma_{i-3})\dfrac{h_{i-1}^{(3)}}{h_{i-2}^{(3)}}\dfrac{(\tau_{i+1}-x)^2}{h_{i-1}^{(2)}h_i}}{\dfrac{1}{h_i}\left[\dfrac{1}{h_{i-1}^{(2)}}+\dfrac{1}{h_i^{(2)}}\right](x-\tau_i)(\tau_{i+1}-x)}\,.    
\label{eq:neg_gamma_denom}
\end{equation}
For the numerator of \eqref{eq:neg_gamma_denom}, 
$$
a_1(x-\tau_i)^2 + a_2(\tau_{i+1}-x)^2\ge 2\sqrt{a_1a_2}(x-\tau_i)(\tau_{i+1}-x)\,,
$$
and the equality is obtained when
$$
\sqrt{a_1}(x-\tau_i) = \sqrt{a_2}(\tau_{i+1}-x)\,,
$$
that is 
$$
x = \frac{\sqrt{a_1}}{\sqrt{a_1}+\sqrt{a_2}}\tau_i +\frac{\sqrt{a_2}}{\sqrt{a_1}+\sqrt{a_2}}\tau_{i+1}
$$
which lies in $[\tau_i, \tau_{i+1}]$. Thus, \eqref{eq:neg_gamma_denom} becomes
\begin{align}
\gamma_{i-2}-\gamma_{i-1}&\le \frac{2\sqrt{\dfrac{h_{i-1}^{(3)}}{h_i^{(3)}h_ih_i^{(2)}}\dfrac{h_{i-1}^{(3)}}{h_{i-2}^{(3)}h_{i-1}^{(2)}h_i} } }{\dfrac{1}{h_i}\left[\dfrac{1}{h_{i-1}^{(2)}}+\dfrac{1}{h_i^{(2)}}\right]} = 2\dfrac{\dfrac{h_{i-1}^{(3)}}{\sqrt{h_i^{(3)}h_i^{(2)}h_{i-2}^{(3)}h_{i-1}^{(2)}}}}{\left[\dfrac{1}{h_{i-1}^{(2)}}+\dfrac{1}{h_i^{(2)}}\right]}\sqrt{(\gamma_{i-2}-\gamma_{i-3})(\gamma_{i}-\gamma_{i-1})}\notag\\
&=2\dfrac{h_{i-1}^{(3)}\sqrt{h_{i-1}^{(2)}h_i^{(2)}}}{(h_i^{(2)}+h_{i-1}^{(2)})\sqrt{h_i^{(3)}h_{i-2}^{(3)}}}\sqrt{(\gamma_{i-2}-\gamma_{i-3})(\gamma_i-\gamma_{i-1})}\notag\\
&\triangleq C_i\sqrt{(\gamma_{i-2}-\gamma_{i-3})(\gamma_i-\gamma_{i-1})}\label{eq:gamma_diff_i_2}\,.
\end{align}
By \eqref{eq:bspl_2nd_deriv}, we obtain the second derivative at the knots,
\begin{align*}
    f''(\tau_i) &= A_{i-1} =\frac{\gamma_{i-1}-\gamma_{i-2}}{h_{i-1}^{(3)}} - \frac{\gamma_{i-2}-\gamma_{i-3}}{h_{i-2}^{(3)}}\,,\\
    f''(\tau_{i+1}) &= A_i = \frac{\gamma_i-\gamma_{i-1}}{h_i^{(3)}} - \frac{\gamma_{i-1}-\gamma_{i-2}}{h_{i-1}^{(3)}}\,.
\end{align*}
To have an upper bound on the second derivatives,
\begin{align*}
    \vert f''(\tau_i) \vert &=\frac{\vert\gamma_{i-1}-\gamma_{i-2}\vert}{h_{i-1}^{(3)}} + \frac{\gamma_{i-2}-\gamma_{i-3}}{h_{i-2}^{(3)}}  \le L\,,\\
    \vert f''(\tau_{i+1})\vert &= \frac{\vert\gamma_i-\gamma_{i-1}\vert}{h_i^{(3)}} + \frac{\gamma_{i-1}-\gamma_{i-2}}{h_{i-1}^{(3)}} \le L\,,
\end{align*}
which implies that
$$
\gamma_{i}-\gamma_{i-1} \le L h_{i}^{(3)}\,,\qquad \gamma_{i-2} -\gamma_{i-3} \le Lh_{i-2}^{(3)}\,,\qquad \vert \gamma_{i-1}-\gamma_{i-2}\vert \le Lh_{i-1}^{(3)}\,.
$$
Then \eqref{eq:gamma_diff_i_2} becomes
$$
\gamma_{i-2}-\gamma_{i-1} \le C_iL\sqrt{h_{i}^{(3)}h_{i-2}^{(3)}}\,,
$$
and hence
$$
\gamma_{i-2}-\gamma_{i-1}\le L\cdot \min\left(h_{i-1}^{(3)}, C_i\sqrt{h_i^{(3)}h_{i-2}^{(3)}}\right)\,.
$$
Note that 
\begin{align*}
    h_2^{(3)} &= h_4\le \frac{3}{J-3}(1+\eta_2)\,,\\
    h_3^{(3)} &= h_4^{(2)} \le \frac{3}{J-3}(1+\eta_2)\,,\\
    h_{i-1}^{(3)} &\le \frac{3}{J-3}(1+\eta_2)\,, i\in [5, J-2]\,,\\
    h_{J-1}^{(3)} &= h_{J-1}^{(2)}\le \frac{3}{J-3}(1+\eta_2)\,,\\
    h_J^{(3)} & = h_J  \le \frac{3}{J-3}(1+\eta_2)\,,
\end{align*}
On the other hand, if $\gamma_{i-1}-\gamma_{i-2}\ge 0$, that is,
$$
\gamma_{i-2}-\gamma_{i-1} \le 0\,.
$$
Thus, for all $i\in [3, J+1]$, we have
$$
\gamma_{i-2} - \gamma_{i-1} \le \frac{3L}{J-3}(1+\eta_2)\,.
$$
By Lemma 4 of \textcite{yangContractionUniformConvergence2019}, we have
$$
\Vert \gamma-\iso(\gamma)\Vert_\infty \le \frac{3L}{J-3}(1+\eta_2)\,,
$$
and hence
\begin{equation}
\Vert\gamma -\iso(\gamma)\Vert_2^2\le J\Vert\gamma-\iso(\gamma)\Vert_\infty^2 \le \frac{9JL^2(1+\eta_2)^2}{(J-3)^2}\,.\label{eq:bound_gamma_iso}
\end{equation}
Plug \eqref{eq:bound_sum_bji} and \eqref{eq:bound_gamma_iso} into \eqref{ieq:2}, we achieve
\begin{equation}\label{eq:bound_no_error}
\frac 1n\Vert \bfB\gamma - \bfB\hat\gamma\Vert_2^2\le\frac 1n\Vert \bfB\gamma - \bfB\iso(\gamma)\Vert_2^2\le \frac{36(1+\eta_1)(1+\eta_2)^2L^2J}{(J-3)^3} = O(J^{-2})\,.    
\end{equation}
  
\end{proof}

\subsection{Nonzero error: $\sigma^2 > 0$}\label{sec:proof_bound_with_e}

\begin{proof}
When $\sigma^2 >0$, then $\bfy = \bfB\gamma + \bfe$, then \eqref{eq:error_suff_vs_iso} becomes
\begin{equation}
    \Vert \bfB\gamma+\bfe -\bfB\hat\gamma\Vert_2^2 \le \Vert \bfB\gamma+\bfe - \bfB\iso(\gamma)\Vert_2^2\,,
\end{equation}
It follows that
$$
\Vert \bfB\gamma-\bfB\hat\gamma\Vert_2^2+\Vert\bfe\Vert_2^2 + 2\bfe^T\bfB(\gamma-\hat\gamma)\le \Vert \bfB\gamma-\bfB\iso(\gamma)\Vert_2^2+\Vert\bfe\Vert_2^2 + 2\bfe^T\bfB(\gamma-\iso(\gamma))\,,
$$
that is
\begin{equation}\label{eq:error_ieq_with_e}
\frac 1n\Vert \bfB\gamma-\bfB\hat\gamma\Vert_2^2 \le \frac 1n\Vert \bfB\gamma-\bfB\iso(\gamma)\Vert_2^2 + \frac 2n\bfe^T\bfB(\hat\gamma-\iso(\gamma))\,.    
\end{equation}
Note that $\hat\gamma$ is monotonic, then $\hat\gamma = \iso(\hat\gamma)$. By the contraction of Isotonic regression \parencite{yangContractionUniformConvergence2019}, we have
$$
\Vert\hat\gamma-\iso(\gamma)\Vert_p=\Vert\iso(\hat\gamma)-\iso(\gamma)\Vert_p \le \Vert\hat\gamma-\gamma\Vert_p
$$
for $L_p, p\in [1, \infty]$ norm.
First of all, we have a lower bound for the left-hand side of \eqref{eq:error_ieq_with_e},
$$
\frac 1n\Vert \bfB\gamma-\bfB\hat\gamma\Vert_2^2 \ge \lambda_{\min}\left(\frac 1n\bfB^T\bfB\right)\Vert\gamma-\hat\gamma\Vert_2^2
$$
By \textcite{shenLocalAsymptoticsRegression1998}, there is a constant $c_1 > 0$ such that the minimum eigenvalues of $\frac 1n\bfB^T\bfB$ satisfies
\begin{equation}\label{eq:small_eigenvalue}
\lambda_{\min}\left(\frac 1n\bfB^T\bfB\right) \ge \frac{c_1}{J} \end{equation}
for sufficiently large $n$. On the other hand, consider the upper bound of the right-hand side of \eqref{eq:error_ieq_with_e},
\begin{equation}
\frac 2n\bfe^T\bfB(\hat\gamma-\iso(\gamma))\le 2\max_{1\le j\le J}\frac 1n\bfe^T\bfB_j \Vert\hat\gamma-\iso(\gamma)\Vert_1 \triangleq 2A\Vert \hat\gamma-\iso(\gamma)\Vert_1\,,    
\label{eq:rhs2}
\end{equation}
and by \eqref{eq:bound_no_error},
\begin{equation}
    \frac 1n\Vert \bfB\gamma-\bfB\iso(\gamma)\Vert_2^2 \le c_2J^{-2}\,,
    \label{eq:rhs1}
\end{equation}
where $c_2$ is a constant. 
Combine \eqref{eq:error_ieq_with_e}, \eqref{eq:rhs2}, and \eqref{eq:rhs1}, we have
\begin{equation}
    \frac{c_1}{J}\Vert\delta\Vert_2^2\le c_2J^{-2} + 2A\Vert \delta\Vert_1\,.
\end{equation}
Also note that $\Vert\delta\Vert_2\ge \frac{1}{\sqrt J}\Vert\delta\Vert_1$, it follows that
$$
c_1J^{-2}\Vert \delta\Vert_1^2 \le c_2J^{-2} + 2A\Vert \delta\Vert_1
$$
it follows that
\begin{equation}\label{eq:delta_1norm}
\Vert\delta\Vert_1 \le \frac{A+\sqrt{A^2+c_1c_2J^{-4}}}{c_1J^{-2}}\,.    
\end{equation}
\begin{lemma}
    If $\varepsilon_i$ is assumed to be Gaussian, then it is also sub-Gaussian. It follows that
$$
\bbE\exp(t\varepsilon_i) \le \exp(\sigma^2t^2/2)\quad\forall t\in\IR\,.
$$
We have the well-known tail bound,
$$
\Pr(\vert \varepsilon_i\vert > z) \le 2\exp\left(-\frac{z^2}{2\sigma^2}\right),\;z\ge 0\,.
$$
And note that $\sum_{i=1}^nv_i\varepsilon_i$ is sub-Gaussian with parameter $\Vert v\Vert_2^2\sigma^2$ for $v\in \IR^n$. 
Let $\varepsilon=[\varepsilon_1,\ldots,\varepsilon_n]$, for any collection of vectors $v_i\in \IR^n, i=1,\ldots,J$, we have \parencite{slawskiNonnegativeLeastSquares2013}
$$
\Pr\left(\max_{1\le j\le J}\vert v_j^T\varepsilon\vert > \sigma \max_{1\le j\le J}\Vert v_j\Vert_2(\sqrt{2\log J}+z) \right)\le 2\exp\left(-\frac 12z^2\right), z\ge 0\,.
$$
\end{lemma}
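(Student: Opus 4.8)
The plan is to prove the four assertions in sequence, each a short consequence of the Gaussian moment generating function (MGF) and a Chernoff-type argument. First I would record the MGF identity: for $\varepsilon_i\sim N(0,\sigma^2)$ a direct Gaussian integral gives $\bbE\exp(t\varepsilon_i)=\exp(\sigma^2t^2/2)$ for every $t\in\IR$, so the stated sub-Gaussian MGF bound holds (with equality in this case). The one-variable tail bound then follows by the standard Chernoff method: for $z\ge 0$ and any $t>0$, Markov's inequality applied to $\exp(t\varepsilon_i)$ yields $\Pr(\varepsilon_i>z)\le\exp(-tz+\sigma^2t^2/2)$, and optimizing over $t$ (the minimizer is $t=z/\sigma^2$) gives the one-sided bound $\exp(-z^2/(2\sigma^2))$; applying the same argument to $-\varepsilon_i$ and adding the two bounds produces the two-sided tail bound with the factor $2$.

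Next I would treat the linear combination by exploiting independence of the $\varepsilon_i$: for any $v\in\IR^n$, $\bbE\exp\left(t\sum_{i=1}^n v_i\varepsilon_i\right)=\prod_{i=1}^n\bbE\exp(tv_i\varepsilon_i)\le\prod_{i=1}^n\exp(\sigma^2v_i^2t^2/2)=\exp(\sigma^2\Vert v\Vert_2^2 t^2/2)$, which is exactly the sub-Gaussian MGF with parameter $\Vert v\Vert_2^2\sigma^2$. Consequently the one-variable tail bound of the previous paragraph applies verbatim to each $v_j^T\varepsilon$, with $\sigma$ replaced by $\sigma\Vert v_j\Vert_2$.

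The final maximal inequality is the only step carrying any subtlety, and I would dispatch it by a union bound after choosing the threshold correctly. Write $\bar v=\max_{1\le j\le J}\Vert v_j\Vert_2$ and set $u=\sigma\bar v(\sqrt{2\log J}+z)$. Since each $v_j^T\varepsilon$ is sub-Gaussian with parameter at most $\sigma^2\bar v^2$, the two-sided tail bound gives $\Pr(\vert v_j^T\varepsilon\vert>u)\le 2\exp\left(-u^2/(2\sigma^2\bar v^2)\right)$. The key elementary fact is the inequality $(\sqrt{2\log J}+z)^2\ge 2\log J+z^2$, valid because the cross term is nonnegative for $z\ge 0$; it bounds the exponent by $\log J+z^2/2$, so each individual probability is at most $(2/J)\exp(-z^2/2)$. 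Summing over the $J$ indices cancels the factor $J$ and leaves $2\exp(-z^2/2)$, which is the claimed bound. The main (and essentially only) obstacle is selecting the threshold $u$ so that the union-bound factor $J$ is absorbed exactly by the $\sqrt{2\log J}$ term; once that calibration is in place, everything reduces to the routine MGF/Chernoff estimates above.
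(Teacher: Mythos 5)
Your proposal is correct: the MGF identity, the Chernoff tail bound, the independence-based MGF for $v^T\varepsilon$, and the union bound calibrated with $(\sqrt{2\log J}+z)^2\ge 2\log J+z^2$ all go through, and the maximal factor $J$ is indeed absorbed exactly as you describe. The paper itself gives no proof of this lemma — it simply invokes the maximal inequality by citation to \textcite{slawskiNonnegativeLeastSquares2013} — and your argument is precisely the standard derivation underlying that citation, so you have supplied a correct, self-contained version of what the paper delegates to the reference. One minor point worth making explicit: the step $\bbE\exp\left(t\sum_i v_i\varepsilon_i\right)=\prod_i\bbE\exp(tv_i\varepsilon_i)$ uses independence of the $\varepsilon_i$, which holds in the theorem's i.i.d.\ setting but is not stated in the lemma itself, so you should record that assumption where you use it.
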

By the above tail bounds for sub-Gaussian random variables, take $v_j=\frac 1n\bfB_j, \varepsilon = \bfe$, and note that 
\begin{align*}
    \Vert\bfB_j\Vert_2 = \sqrt{\sum_{i=1}^nB_j^2(x_i)} \le \sqrt{\sum_{i=1}^nB_j(x_i)}\le \sqrt{\frac{4n(1+\eta_1)}{J-3}}\,,
\end{align*}
then
$$
\Vert v_j\Vert_2 \le \sqrt{\frac{4(1+\eta_1)}{n(J-3)}}\,.
$$
Pick $z=M\sqrt{2\log J}$ for $M\ge 0$, we obtain
\begin{equation}
\Pr\left(\max_{1\le j\le J}\left\vert \frac 1n\bfe^T\bfB_j\right\vert > 2\sigma(1+M)\sqrt{(1+\eta_1)}\sqrt{\frac{2\log J}{n(J-3)}}\right)\le 2J^{-M^2}\,.    \label{eq:prob_bound}
\end{equation}
If we take $J = n^{1/3}$, \eqref{eq:prob_bound} becomes
$$
\Pr(A > c_3 \sqrt{\log J}J^{-2})\le 2J^{-M^2}\,.
$$
Apply the probability bound on \eqref{eq:delta_1norm}, then it holds with at least probability $1-2J^{-M^2}$ that
\begin{equation}
\Vert \delta\Vert_1\le \frac{2c_3+o(1)}{c_1}\sqrt{\log J}\,.    
\label{eq:prob_bound_delta_1norm}
\end{equation}
Plug \eqref{eq:prob_bound_delta_1norm} and \eqref{eq:rhs1} into \eqref{eq:error_ieq_with_e}, it holds with at least probability $1-2J^{-M^2}$ that
\begin{align*}
&\frac 1n\Vert\bfB\tilde\gamma-\bfB\hat\gamma\Vert_2^2\\
&\le\frac{36(1+\eta_1)(1+\eta_2)^2L^2J}{(J-3)^3} + 2\cdot \left(2\sigma(1+M)\sqrt{(1+\eta_1)}\sqrt{\frac{2\log J}{n(J-3)}}\right)^2\cdot \frac{2\sqrt{\log J}}{c_1}\\
&\le\frac{36(1+\eta_1)(1+\eta_2)^2L^2J}{(J-3)^3} + \frac{32}{c_1}\sigma^2(1+M)^2(1+\eta_1)\frac{\log J}{(J-3)^2}\\
&\le O(J^{-2}) + O\left(\log J\cdot J^{-2}\right) = O\left(\frac{\log J}{J^2}\right)\,.    
\end{align*}

\end{proof}

\section{Proof of Theorem 2}

\subsection{Monotone spline fitting $\tilde g_n$ based on sufficient and necessary condition}

\begin{proof}

In the setting with error variable, $y=g(x)+\varepsilon, \varepsilon\sim N(0, \sigma^2)$, \textcite{shenLocalAsymptoticsRegression1998} showed that the bias of unconstrained B-spline fitting is
$$
\bbE \check g_n(x) - g(x) = b(x) + o(J^{-4})\,,
$$
where 
$$
\sup_{x\in [0, 1]}\vert b(x)\vert = O(J^{-4})
$$
Now there is no error, i.e., $\sigma^2=0$, we have
$$
\sup_{x\in [0, 1]}\vert \check g_n(x) - g(x)\vert = O(J^{-4})\,.
$$
It implies that $\check g_n(x)$ converges uniformly to $g$. 
Similarly, the first derivative \parencite{zhouDerivativeEstimationSpline2000} also converges uniformly to $g'$.
$$
\sup_{x\in[0, 1]} \vert \check g_n'(x) - g'(x)\vert = O(J^{-3})\,.
$$

Now since $g$ is strictly increasing. Let $\varepsilon_0 = \min\{g'(x), x\in [0, 1]\} > 0$. Due to the first derivative of the unconstrained B-spline fitting $\check g_n'$ converges uniformly to $g'$, then as $n$ becomes sufficiently large, say $n > n(\varepsilon_0)$, we have
  $$
  \vert \check g_n'(x) - g'(x)\vert < \varepsilon_0/2\,,
  $$
  for any $x$, then
  $$
  \varepsilon_0/2 \le g'(x) - \varepsilon_0/2 < \check g_n'(x) < g'(x) + \varepsilon_0/2\,.
  $$
  Since $\check g_n'(x)>0$, this implies that $\check g_n$ is actually monotone. In other words, the monotone spline fitting $\tilde g_n$ and the unconstrained B-spline fitting $\check g_n$ are identical when $n > n(\varepsilon_0)$. 

  Thus, when $n> n(\varepsilon_0)$,
  $$
  \frac 1n\sum_{i=1}^n (\tilde g_n(x_i)-g(x_i))^2 = \frac 1n\sum_{i=1}^n (\check g_n(x_i)-g(x_i))^2 = O(J^{-8})\,.
  $$
\end{proof}

\subsection{Monotone spline fitting $\hat g_n$ based on sufficient condition}
\begin{proof}
Let
\begin{align*}
    \hat \bfg_n &= [\hat g_n(x_1),\ldots, \hat g_n(x_n)]^T\\
    \check \bfg_n &= [\check g_n(x_1),\ldots, \check g_n(x_n)]^T\\
    \tilde \bfg_n &= [\tilde g_n(x_1),\ldots, \tilde g_n(x_n)]^T\\
    \bfg_n &= [g_n(x_1),\ldots, g_n(x_n)]^T\,,
\end{align*}
Note that
\begin{align*}
\hat\gamma &=\argmin_{\bA\gamma\le 0}\Vert \bfg_n -\bfB\gamma\Vert_2^2\\
&=\argmin_{\bA\gamma\le 0}\Vert \tilde\bfg_n+\bfg_n -\tilde \bfg_n - \bfB\gamma\Vert_2^2\\
&\triangleq \argmin_{\bA\gamma\le 0}\Vert \tilde \bfg_n+\bfe -\bfB\gamma\Vert_2^2\,.
\end{align*}
Since $\bfe$ can be viewed as an observed error for the random error studied in Section~\ref{sec:proof_bound_with_e}, we can follow the proof procedure to obtain the error bound between $\hat\bfg_n=\bfB\hat\gamma$ and $\tilde\bfg_n$.
By \eqref{eq:error_ieq_with_e} and \eqref{eq:delta_1norm}, we have
$$
\frac 1n\Vert\hat \bfg_n-\tilde \bfg_n\Vert_2^2 \le c_2J^{-2} + 2A\cdot \frac{A+\sqrt{A^2+c_1c_2J^{-4}}}{c_1J^{-2}}\,,
$$
with
$$
A = \max_{1\le j\le J}\frac 1n\bfe^T\bfB_j\le \frac 1n \max_{1\le j\le J}\left\{ \max_{1\le i\le n} \bfe_i \cdot \Vert\bfB_j\Vert_1\right\} \le \frac 1n O(J^{-4}) \cdot \frac{4n(1+\eta_1)}{J-3} = O(J^{-5})\,,
$$
where $\max_{1\le i\le n}\bfe_i = O(J^{-4})$ from the previous section.
Thus,
$$
\frac 1n \Vert\hat \bfg_n-\tilde \bfg_n\Vert_2^2 \le c_2J^{-2} + O(J^{-5})\cdot O(1) = O(J^{-2})\,.
$$
It follows that
\begin{align*}
    \frac 1n\Vert \hat\bfg_n - \bfg_n\Vert_2^2 &= \frac 1n\Vert \hat\bfg_n-\tilde\bfg_n+\tilde \bfg_n-\bfg_n\Vert_2^2\\
    &= \frac 1n\Vert \hat\bfg_n-\tilde\bfg_n\Vert_2^2 + \frac 1n\Vert \tilde \bfg_n-\bfg_n\Vert_2^2 + \frac 2n(\hat\bfg_n-\tilde\bfg_n)^T(\tilde \bfg_n - \bfg_n)\\
    &=O(J^{-2}) + O(J^{-8}) + O(J^{-1})\cdot O(J^{-4})\\
    &=O(J^{-2})\,.
\end{align*}
\end{proof}

\section{Proof of Theorem 3}
\begin{proof}
Based on KKT condition, \textcite{lawsonSolvingLeastSquares1995} shows that  there exists a $(J-1)\times 1$ vector $\hat \eta$ and a partition $\cE, \cS$ such that
$$
(-\bA)^T\hat\eta = \bfB^T(\bfB\hat\gamma - y)
$$
that is
$$
\bA^T\hat \eta = \bfB^Ty - \bfB^T\bfB\hat\gamma
$$
And let $\hat r = -\bA\hat\gamma$, we have
\begin{align}
\hat r_i = 0, i\in \cE&\qquad \hat r_i > 0, i\in \cS\,,\\
\hat \eta_i\ge 0, i\in \cE&\qquad \hat \eta_i=0, i\in \cS\,,
\end{align}
that is,
$$
\bA_\cE\hat\gamma = 0, \bA_\cS\hat\gamma < 0\,.
$$
It implies that the constrained solution is a minimizer of a least squares problem subject to the equality constraint $\bA_\cE\gamma = 0$ given the set $\cE$, that is
$$
\Vert \bfy - \bfB\hat\gamma\Vert_2^2 = \min_\gamma \Vert \bfy - \bfB\gamma\Vert_2^2 \quad \text{s.t.}\; \bA_\cE\gamma = 0\,.
$$
Plug $\gamma=\bG^T\beta$ into the above problem, then
$$
\hat\gamma = \bG^T\hat\beta = \bG^T(\bG\bB^T\bB\bG^T)^{-1}\bG\bB^T\bfy\,.
$$
Similarly, for monotone smoothing splines, there is also a set $\cE$ such that $\bA_\cE\hat\gamma=0$, then
$$
\Vert \bfy-\bfB\hat\gamma\Vert_2^2 +\lambda\hat\gamma^T\bOmega\hat\gamma = \min_{\gamma} \Vert \bfy-\bfB\gamma\Vert_2^2 + \lambda\gamma^T\bOmega\gamma\quad\text{s.t.}\; \bA_\cE\gamma=0\,,
$$
then
$$
\hat\gamma = \bG^T\hat\beta = \bG^T(\bG\bB^T\bB\bG^T + \lambda\bG\bOmega\bG^T)^{-1}\bG\bB^T\bfy\,.
$$
\end{proof}

\section{Proof of Theorem 4}
\begin{proof}
The monotone fitting can be written as
$$
\hat\bff = \bB\hat\gamma = \bB\bG^T(\bG\bB^T\bB\bG^T)^{-1}\bG\bB^T\bfy \triangleq \bH_g\bfy\,.
$$
Note that
$$
\bbE\hat\bff = \bH_g\bbE\bfy = \bH_g\bff\,.
$$
Then the squared bias is
\begin{align*}
    \Bias(\hat\bff) = \bbE\Vert \bff-\bbE\hat\bff\Vert_2^2 = \Vert \bff-\bH_g\bff\Vert_2^2 = \bff^T(\bI-\bH_g)\bff\,,
\end{align*}
and the variance is
$$
\Var(\hat\bff) = \sigma^2\bH_g\bH_g^T\,.
$$
Thus, the mean square error (MSE) is
$$
\MSE(\hat\bff) = \Bias^2(\hat\bff) + \tr[\Var(\hat\bff)] = \bff^T(\bI-\bH_g)\bff+g\sigma^2\,.
$$
On the other hand, the MSE for the unconstrained solution is
$$
\MSE(\hat\bff^\ls) = \bff^T(\bI-\bH)\bff+J\sigma^2\,,
$$
where $\bfH = \bfB(\bfB^T\bfB)^{-1}\bfB^T$.
To have a smaller MSE, we want
$$
\MSE(\hat\bff) < \MSE(\hat\bff^\ls)\,,
$$
that is,
$$
\bff^T(\bH-\bH_g)\bff +g\sigma^2 \le J\sigma^2\,,
$$
thus,
$$
\sigma^2 \ge \frac{\bff^T(\bH-\bH_g)\bff}{J-g}\,.
$$

Note that
\begin{align*}
(\bH-\bH_g)^2 &= \bfH^2 -\bH\bH_g -\bH_g\bH+\bH_g^2    \\
&=\bfH -\bfH_g -\bfH_g +\bfH_g\\
&=\bfH-\bfH_g\,,
\end{align*}
which implies that $\bfH-\bfH_g$ is an idempotent matrix and hence is positive semi-definite.



\section{Proof of Proposition 2}
\begin{proof}
Let $\xi = \gamma$ and $\theta = \bfB\gamma$, then

$$
\arg\min_\beta \Vert \bfy-\bfB\gamma\Vert^2_2 \quad \text{s.t.}\quad \bA\gamma\le 0\,,
$$
where $\bA$ is a $(J-1)\times J$ matrix,
$$
\bA_{ij} = \begin{cases}
1 & i=j=1,\ldots,J-1\\
-1 & j=i+1=2,\ldots, J\\
0 & \text{otherwise}
\end{cases}\,.
$$

The object function can be rewritten as
\begin{align*}
\arg\min_\beta \Vert \bfy-\theta\Vert^2_2\\
\bfB\xi - I_n\theta \le 0\\
-\bfB\xi + I_n\theta \le 0\\
\bA\xi + \zero_n\theta \le 0
\end{align*}
and let
$$
A = \begin{bmatrix}
\bfB_{n\times J}\\
-\bfB_{n\times J}\\
\bA_{(J-1)\times J}
\end{bmatrix}\,,
\qquad
B = \begin{bmatrix}
-I_n\\
I_n\\
\zero_{(J-1)\times n}
\end{bmatrix}\,.
$$

Note that the first $2n$ rows would always be in the index set $J_\bfy$, and $I_\bfy$ would take $n$ linearly independent rows from them. If there are $m_\bfy$ (depends on $\bfy$) equal adjacent pairs of $\beta$, and these corresponding row vectors are also linearly independent with the first $2n$ rows, then by Theorem 3.2 of \textcite{chenDegreesFreedomProjection2020},
$$
\vert I_\bfy\vert = n + m_\bfy\,.
$$
If $n > p$, then we always have $\rank(A_{I_\bfy}) = p$. Thus, the \emph{divergence} is
$$
D(\bfy) = n - (n + m_\bfy) + p = p - m_\bfy \triangleq U_\bfy\,,
$$
where $U_y$ is the number of unique coefficients, then
$$
\df = \bbE[D(\bfy)] = p - \bbE[m_\bfy] = \bbE[U_\bfy]\,,
$$
where the randomness comes from the index set $I_y$.
\end{proof}

\section{Jaccard Index and Coverage Probability when $\sigma=0.1, 0.5$}

\begin{figure}[H]
    \centering
    \begin{subfigure}{0.5\textwidth}
    \includegraphics[width=\textwidth]{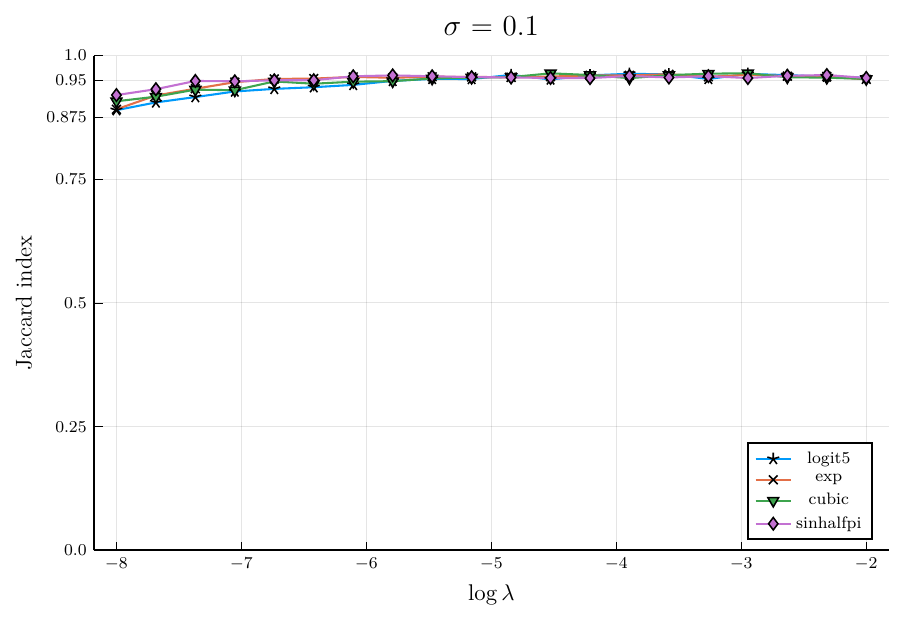}
    \end{subfigure}%
    \begin{subfigure}{0.5\textwidth}
    \includegraphics[width=\textwidth]{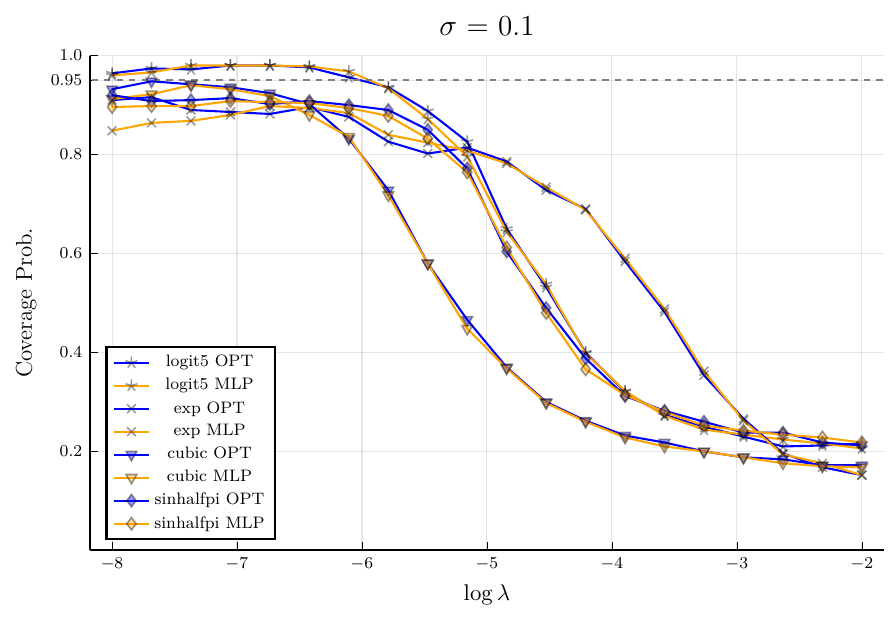}
    \end{subfigure}
    \caption{Jaccard index and Coverage probability for each curve among 5 repetitions when noise level $\sigma = 0.1$}
    \label{fig:ci_0.1}
\end{figure}

\begin{figure}[H]
    \centering
    \begin{subfigure}{0.5\textwidth}
    \includegraphics[width=\textwidth]{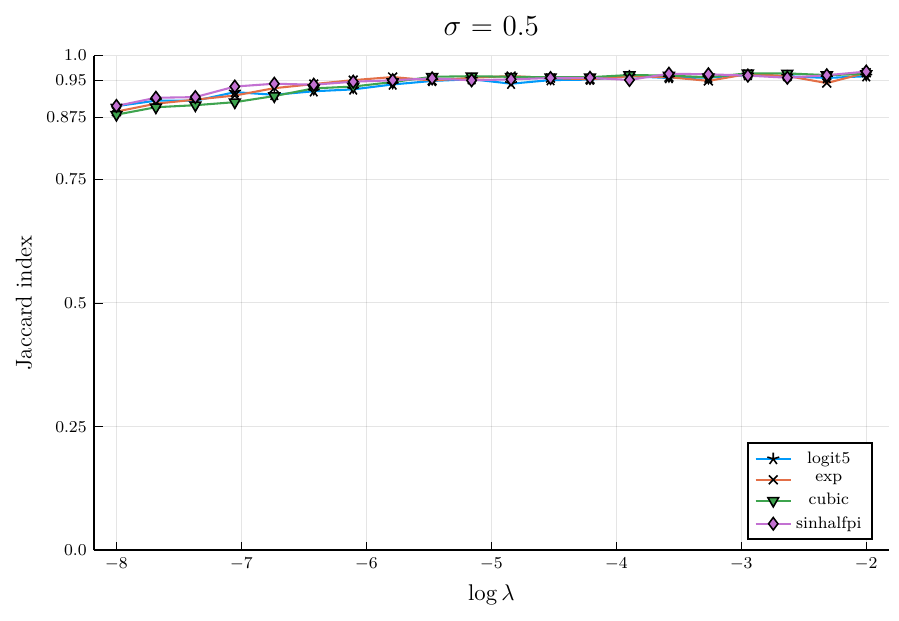}
    \end{subfigure}%
    \begin{subfigure}{0.5\textwidth}
    \includegraphics[width=\textwidth]{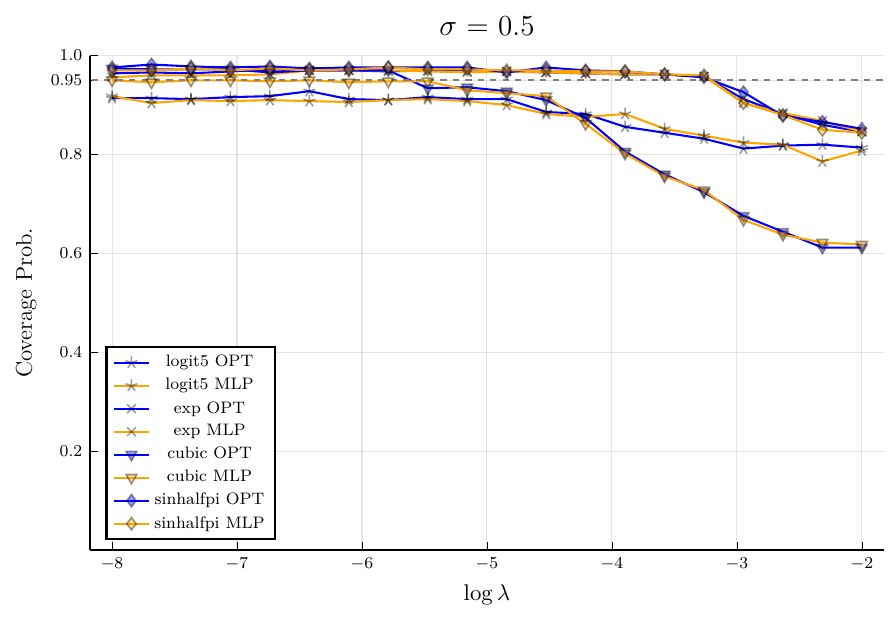}
    \end{subfigure}
    \caption{Jaccard index and Coverage probability for each curve among 5 repetitions when noise level $\sigma = 0.5$}
    \label{fig:ci_0.5}
\end{figure}

\section{$L_p$ Distances for Cubic, Step, and Growth Curves}

Tables \ref{tab:cubic}, \ref{tab:step} and \ref{tab:growth} presents the $L_p$ distances among 100 experiments on the cubic, step, and growth curves with different noise levels. Although our proposed MSS and MCS might not be the best in all cases, they are always comparable with top ranks.

\begin{table}[H]
  \caption{Average (scaled) $L_p$ distances, $p\in \{1,2,\infty\}$, over 100 experiments on the cubic curve, together with the standard error of the average in parentheses. Both the smallest one and the ones whose errors are no more than one standard error above the error of the smallest one are highlighted in bold. The superscripts indicate the rank of methods.}
  \label{tab:cubic}
\resizebox{\textwidth}{!}{
  \begin{tabular}{ccccc}
\toprule
Noise $\sigma$ & Method&$\frac 1n L_1$&$\frac{1}{\sqrt n}L_2$&$L_\infty$\tabularnewline
\midrule
\multirow{14}{*}{0.1}&Cubic Spline (CS)& 1.96e-02 (7.5e-04)\textsuperscript{4}& 2.47e-02 (9.5e-04)\textsuperscript{4}& 7.09e-02 (4.2e-03)\textsuperscript{3}\tabularnewline
&Monotone CS (MCS)& 1.94e-02 (7.0e-04)\textsuperscript{3}& 2.46e-02 (8.7e-04)\textsuperscript{3}& 7.15e-02 (3.8e-03)\textsuperscript{4}\tabularnewline
&Smoothing Spline (SS)& 2.23e-02 (5.9e-04)\textsuperscript{6}& 2.81e-02 (7.3e-04)\textsuperscript{6}& 8.25e-02 (3.6e-03)\textsuperscript{7}\tabularnewline
&Montone SS (MSS)& 2.04e-02 (6.1e-04)\textsuperscript{5}& 2.66e-02 (7.2e-04)\textsuperscript{5}& 8.13e-02 (3.6e-03)\textsuperscript{5}\tabularnewline
&Quadratic Spline (QS)& 1.91e-02 (5.1e-04)\textsuperscript{2}& 2.40e-02 (6.2e-04)\textsuperscript{2}& 6.18e-02 (2.5e-03)\textsuperscript{2}\tabularnewline
&\textcite{heMonotoneBsplineSmoothing1998}: MQS& 2.36e-02 (7.1e-04)\textsuperscript{7}& 3.04e-02 (8.6e-04)\textsuperscript{7}& 8.23e-02 (3.9e-03)\textsuperscript{6}\tabularnewline
&LOESS& 2.49e-02 (6.9e-04)\textsuperscript{10}& 3.16e-02 (7.8e-04)\textsuperscript{9}& 8.88e-02 (3.2e-03)\textsuperscript{8}\tabularnewline
&Isotonic& 3.39e-02 (5.4e-04)\textsuperscript{13}& 4.55e-02 (6.0e-04)\textsuperscript{14}& 1.54e-01 (3.5e-03)\textsuperscript{14}\tabularnewline
&\textcite{mammenEstimatingSmoothMonotone1991}: SI (LOESS+Isotonic)& 2.45e-02 (7.2e-04)\textsuperscript{8}& 3.13e-02 (8.0e-04)\textsuperscript{8}& 8.88e-02 (3.2e-03)\textsuperscript{9}\tabularnewline
&\textcite{mammenEstimatingSmoothMonotone1991}: IS (Isotonic+LOESS)& 2.64e-02 (7.1e-04)\textsuperscript{11}& 3.34e-02 (7.7e-04)\textsuperscript{11}& 9.02e-02 (3.1e-03)\textsuperscript{10}\tabularnewline
&\textcite{murrayFastFlexibleMethods2016}: MonoPoly& \textbf{1.40e-02} (5.7e-04)\textsuperscript{1}& \textbf{1.75e-02} (7.1e-04)\textsuperscript{1}& \textbf{4.34e-02} (2.4e-03)\textsuperscript{1}\tabularnewline
&\textcite{cannonMonmlpMultilayerPerceptron2017}: MONMLP& 2.97e-02 (2.4e-03)\textsuperscript{12}& 3.79e-02 (2.8e-03)\textsuperscript{12}& 1.17e-01 (7.9e-03)\textsuperscript{13}\tabularnewline
&\textcite{navarro-garciaConstrainedSmoothingOutofrange2023}: cpsplines& 2.46e-02 (6.0e-04)\textsuperscript{9}& 3.17e-02 (6.9e-04)\textsuperscript{10}& 9.35e-02 (3.3e-03)\textsuperscript{11}\tabularnewline
&\textcite{groeneboomConfidenceIntervalsMonotone2023}: SLSE& 3.40e-02 (1.0e-03)\textsuperscript{14}& 4.32e-02 (1.3e-03)\textsuperscript{13}& 1.02e-01 (4.0e-03)\textsuperscript{12}\tabularnewline
\midrule
\multirow{14}{*}{1.0}&Cubic Spline (CS)& 2.13e-01 (1.1e-02)\textsuperscript{13}& 2.68e-01 (1.4e-02)\textsuperscript{13}& 7.58e-01 (5.6e-02)\textsuperscript{12}\tabularnewline
&Monotone CS (MCS)& 1.60e-01 (6.0e-03)\textsuperscript{5}& 2.07e-01 (7.3e-03)\textsuperscript{7}& 6.01e-01 (3.7e-02)\textsuperscript{11}\tabularnewline
&Smoothing Spline (SS)& 1.74e-01 (6.3e-03)\textsuperscript{9}& 2.17e-01 (7.6e-03)\textsuperscript{9}& 5.31e-01 (2.3e-02)\textsuperscript{7}\tabularnewline
&Montone SS (MSS)& 1.58e-01 (5.6e-03)\textsuperscript{4}& 2.01e-01 (6.5e-03)\textsuperscript{5}& 5.03e-01 (2.1e-02)\textsuperscript{5}\tabularnewline
&Quadratic Spline (QS)& 1.80e-01 (5.4e-03)\textsuperscript{10}& 2.21e-01 (6.1e-03)\textsuperscript{10}& 5.55e-01 (2.1e-02)\textsuperscript{8}\tabularnewline
&\textcite{heMonotoneBsplineSmoothing1998}: MQS& 1.85e-01 (6.1e-03)\textsuperscript{11}& 2.31e-01 (7.4e-03)\textsuperscript{11}& 5.93e-01 (2.7e-02)\textsuperscript{10}\tabularnewline
&LOESS& 1.63e-01 (5.9e-03)\textsuperscript{7}& 2.00e-01 (7.1e-03)\textsuperscript{4}& 4.62e-01 (2.1e-02)\textsuperscript{4}\tabularnewline
&Isotonic& 1.88e-01 (5.8e-03)\textsuperscript{12}& 2.58e-01 (7.2e-03)\textsuperscript{12}& 9.81e-01 (5.1e-02)\textsuperscript{14}\tabularnewline
&\textcite{mammenEstimatingSmoothMonotone1991}: SI (LOESS+Isotonic)& 1.51e-01 (6.0e-03)\textsuperscript{2}& 1.87e-01 (6.9e-03)\textsuperscript{2}& 4.39e-01 (1.9e-02)\textsuperscript{3}\tabularnewline
&\textcite{mammenEstimatingSmoothMonotone1991}: IS (Isotonic+LOESS)& 1.55e-01 (6.2e-03)\textsuperscript{3}& 1.92e-01 (7.2e-03)\textsuperscript{3}& \textbf{4.27e-01} (1.9e-02)\textsuperscript{2}\tabularnewline
&\textcite{murrayFastFlexibleMethods2016}: MonoPoly& \textbf{1.39e-01} (6.4e-03)\textsuperscript{1}& \textbf{1.72e-01} (7.1e-03)\textsuperscript{1}& \textbf{4.10e-01} (1.9e-02)\textsuperscript{1}\tabularnewline
&\textcite{cannonMonmlpMultilayerPerceptron2017}: MONMLP& 2.43e-01 (5.6e-03)\textsuperscript{14}& 3.01e-01 (6.3e-03)\textsuperscript{14}& 7.90e-01 (3.5e-02)\textsuperscript{13}\tabularnewline
&\textcite{navarro-garciaConstrainedSmoothingOutofrange2023}: cpsplines& 1.62e-01 (9.0e-03)\textsuperscript{6}& 2.07e-01 (1.1e-02)\textsuperscript{6}& 5.30e-01 (3.9e-02)\textsuperscript{6}\tabularnewline
&\textcite{groeneboomConfidenceIntervalsMonotone2023}: SLSE& 1.67e-01 (6.7e-03)\textsuperscript{8}& 2.14e-01 (8.7e-03)\textsuperscript{8}& 5.67e-01 (3.0e-02)\textsuperscript{9}\tabularnewline
\midrule
\multirow{14}{*}{1.5}&Cubic Spline (CS)& 2.91e-01 (1.3e-02)\textsuperscript{14}& 3.69e-01 (1.7e-02)\textsuperscript{14}& 1.06e+00 (7.1e-02)\textsuperscript{13}\tabularnewline
&Monotone CS (MCS)& 2.15e-01 (6.9e-03)\textsuperscript{3}& 2.67e-01 (8.2e-03)\textsuperscript{4}& 7.07e-01 (3.7e-02)\textsuperscript{8}\tabularnewline
&Smoothing Spline (SS)& 2.25e-01 (8.8e-03)\textsuperscript{7}& 2.78e-01 (1.1e-02)\textsuperscript{6}& 6.42e-01 (3.5e-02)\textsuperscript{5}\tabularnewline
&Montone SS (MSS)& \textbf{2.03e-01} (6.8e-03)\textsuperscript{2}& 2.51e-01 (8.4e-03)\textsuperscript{2}& \textbf{5.83e-01} (2.7e-02)\textsuperscript{2}\tabularnewline
&Quadratic Spline (QS)& 2.25e-01 (6.0e-03)\textsuperscript{6}& 2.80e-01 (7.6e-03)\textsuperscript{7}& 7.04e-01 (2.9e-02)\textsuperscript{7}\tabularnewline
&\textcite{heMonotoneBsplineSmoothing1998}: MQS& 2.47e-01 (9.4e-03)\textsuperscript{10}& 3.03e-01 (1.2e-02)\textsuperscript{10}& 7.01e-01 (3.4e-02)\textsuperscript{6}\tabularnewline
&LOESS& 2.44e-01 (7.1e-03)\textsuperscript{9}& 3.01e-01 (8.8e-03)\textsuperscript{9}& 7.32e-01 (3.3e-02)\textsuperscript{9}\tabularnewline
&Isotonic& 2.53e-01 (7.3e-03)\textsuperscript{11}& 3.43e-01 (1.1e-02)\textsuperscript{11}& 1.29e+00 (7.3e-02)\textsuperscript{14}\tabularnewline
&\textcite{mammenEstimatingSmoothMonotone1991}: SI (LOESS+Isotonic)& 2.16e-01 (6.7e-03)\textsuperscript{4}& 2.67e-01 (8.0e-03)\textsuperscript{3}& 6.41e-01 (2.8e-02)\textsuperscript{4}\tabularnewline
&\textcite{mammenEstimatingSmoothMonotone1991}: IS (Isotonic+LOESS)& 2.22e-01 (7.9e-03)\textsuperscript{5}& 2.71e-01 (1.0e-02)\textsuperscript{5}& \textbf{5.89e-01} (3.2e-02)\textsuperscript{3}\tabularnewline
&\textcite{murrayFastFlexibleMethods2016}: MonoPoly& \textbf{1.97e-01} (6.9e-03)\textsuperscript{1}& \textbf{2.42e-01} (8.3e-03)\textsuperscript{1}& \textbf{5.82e-01} (2.7e-02)\textsuperscript{1}\tabularnewline
&\textcite{cannonMonmlpMultilayerPerceptron2017}: MONMLP& 2.91e-01 (6.9e-03)\textsuperscript{13}& 3.51e-01 (6.7e-03)\textsuperscript{13}& 7.71e-01 (1.6e-02)\textsuperscript{11}\tabularnewline
&\textcite{navarro-garciaConstrainedSmoothingOutofrange2023}: cpsplines& 2.78e-01 (2.7e-02)\textsuperscript{12}& 3.44e-01 (3.4e-02)\textsuperscript{12}& 8.47e-01 (9.4e-02)\textsuperscript{12}\tabularnewline
&\textcite{groeneboomConfidenceIntervalsMonotone2023}: SLSE& 2.34e-01 (9.0e-03)\textsuperscript{8}& 2.96e-01 (1.1e-02)\textsuperscript{8}& 7.41e-01 (3.7e-02)\textsuperscript{10}\tabularnewline
\bottomrule
\end{tabular}
}
\end{table}

\begin{table}[H]
  \caption{Average (scaled) $L_p$ distances, $p\in \{1,2,\infty\}$, over 100 experiments on the step curve, together with the standard error of the average in parentheses. Both the smallest one and the ones whose errors are no more than one standard error above the error of the smallest one are highlighted in bold. The superscripts indicate the rank of methods.}
  \label{tab:step}
\resizebox{\textwidth}{!}{
  \begin{tabular}{ccccc}
\toprule
Noise $\sigma$ & Method&$\frac 1n L_1$&$\frac{1}{\sqrt n}L_2$&$L_\infty$\tabularnewline
\midrule
\multirow{14}{*}{0.1}&Cubic Spline (CS)& 1.24e-01 (5.3e-03)\textsuperscript{5}& 1.77e-01 (6.7e-03)\textsuperscript{5}& 6.43e-01 (2.1e-02)\textsuperscript{5}\tabularnewline
&Monotone CS (MCS)& 1.38e-01 (6.2e-03)\textsuperscript{6}& 2.10e-01 (6.8e-03)\textsuperscript{6}& 7.34e-01 (2.0e-02)\textsuperscript{6}\tabularnewline
&Smoothing Spline (SS)& 7.68e-02 (1.3e-03)\textsuperscript{3}& 1.14e-01 (2.5e-03)\textsuperscript{2}& 4.60e-01 (1.4e-02)\textsuperscript{2}\tabularnewline
&Montone SS (MSS)& 7.39e-02 (1.5e-03)\textsuperscript{2}& 1.29e-01 (2.5e-03)\textsuperscript{3}& 5.40e-01 (1.2e-02)\textsuperscript{3}\tabularnewline
&Quadratic Spline (QS)& 1.92e-01 (3.1e-03)\textsuperscript{8}& 2.66e-01 (4.5e-03)\textsuperscript{8}& 8.75e-01 (2.1e-02)\textsuperscript{8}\tabularnewline
&\textcite{heMonotoneBsplineSmoothing1998}: MQS& 1.54e-01 (2.5e-03)\textsuperscript{7}& 2.52e-01 (3.7e-03)\textsuperscript{7}& 9.76e-01 (2.6e-02)\textsuperscript{9}\tabularnewline
&LOESS& 3.72e-01 (6.0e-03)\textsuperscript{12}& 4.71e-01 (7.8e-03)\textsuperscript{11}& 1.26e+00 (2.8e-02)\textsuperscript{11}\tabularnewline
&Isotonic& \textbf{3.54e-02} (6.3e-04)\textsuperscript{1}& \textbf{5.00e-02} (7.7e-04)\textsuperscript{1}& \textbf{1.90e-01} (5.7e-03)\textsuperscript{1}\tabularnewline
&\textcite{mammenEstimatingSmoothMonotone1991}: SI (LOESS+Isotonic)& 3.71e-01 (6.0e-03)\textsuperscript{11}& 4.71e-01 (7.8e-03)\textsuperscript{10}& 1.26e+00 (2.8e-02)\textsuperscript{12}\tabularnewline
&\textcite{mammenEstimatingSmoothMonotone1991}: IS (Isotonic+LOESS)& 3.72e-01 (6.0e-03)\textsuperscript{13}& 4.71e-01 (7.8e-03)\textsuperscript{12}& 1.26e+00 (2.8e-02)\textsuperscript{10}\tabularnewline
&\textcite{murrayFastFlexibleMethods2016}: MonoPoly& 4.77e-01 (1.3e-02)\textsuperscript{14}& 5.88e-01 (1.5e-02)\textsuperscript{14}& 1.45e+00 (3.7e-02)\textsuperscript{14}\tabularnewline
&\textcite{cannonMonmlpMultilayerPerceptron2017}: MONMLP& 2.04e-01 (1.2e-02)\textsuperscript{9}& 2.78e-01 (1.3e-02)\textsuperscript{9}& 8.28e-01 (2.7e-02)\textsuperscript{7}\tabularnewline
&\textcite{navarro-garciaConstrainedSmoothingOutofrange2023}: cpsplines& 9.63e-02 (2.2e-03)\textsuperscript{4}& 1.60e-01 (3.4e-03)\textsuperscript{4}& 6.10e-01 (1.5e-02)\textsuperscript{4}\tabularnewline
&\textcite{groeneboomConfidenceIntervalsMonotone2023}: SLSE& 3.56e-01 (6.1e-03)\textsuperscript{10}& 4.75e-01 (9.4e-03)\textsuperscript{13}& 1.43e+00 (4.6e-02)\textsuperscript{13}\tabularnewline
\midrule
\multirow{14}{*}{1.0}&Cubic Spline (CS)& 3.94e-01 (8.3e-03)\textsuperscript{7}& 4.97e-01 (9.9e-03)\textsuperscript{7}& 1.40e+00 (3.5e-02)\textsuperscript{12}\tabularnewline
&Monotone CS (MCS)& 3.40e-01 (6.3e-03)\textsuperscript{5}& 4.29e-01 (6.9e-03)\textsuperscript{5}& 1.21e+00 (2.4e-02)\textsuperscript{4}\tabularnewline
&Smoothing Spline (SS)& 3.30e-01 (4.8e-03)\textsuperscript{4}& 4.17e-01 (5.5e-03)\textsuperscript{4}& 1.15e+00 (2.4e-02)\textsuperscript{3}\tabularnewline
&Montone SS (MSS)& \textbf{3.12e-01} (5.2e-03)\textsuperscript{1}& \textbf{3.96e-01} (5.8e-03)\textsuperscript{1}& \textbf{1.11e+00} (2.5e-02)\textsuperscript{2}\tabularnewline
&Quadratic Spline (QS)& 4.04e-01 (6.6e-03)\textsuperscript{9}& 5.08e-01 (8.2e-03)\textsuperscript{9}& 1.37e+00 (3.1e-02)\textsuperscript{10}\tabularnewline
&\textcite{heMonotoneBsplineSmoothing1998}: MQS& 4.02e-01 (6.9e-03)\textsuperscript{8}& 5.04e-01 (8.5e-03)\textsuperscript{8}& 1.38e+00 (3.2e-02)\textsuperscript{11}\tabularnewline
&LOESS& 4.11e-01 (5.7e-03)\textsuperscript{12}& 5.15e-01 (7.1e-03)\textsuperscript{11}& 1.37e+00 (2.9e-02)\textsuperscript{8}\tabularnewline
&Isotonic& 3.19e-01 (6.0e-03)\textsuperscript{3}& 4.14e-01 (6.5e-03)\textsuperscript{3}& 1.30e+00 (3.1e-02)\textsuperscript{5}\tabularnewline
&\textcite{mammenEstimatingSmoothMonotone1991}: SI (LOESS+Isotonic)& 4.10e-01 (5.8e-03)\textsuperscript{11}& 5.14e-01 (7.1e-03)\textsuperscript{10}& 1.37e+00 (2.9e-02)\textsuperscript{9}\tabularnewline
&\textcite{mammenEstimatingSmoothMonotone1991}: IS (Isotonic+LOESS)& 4.17e-01 (6.1e-03)\textsuperscript{13}& 5.21e-01 (7.4e-03)\textsuperscript{12}& 1.37e+00 (2.8e-02)\textsuperscript{7}\tabularnewline
&\textcite{murrayFastFlexibleMethods2016}: MonoPoly& 5.17e-01 (1.3e-02)\textsuperscript{14}& 6.37e-01 (1.5e-02)\textsuperscript{14}& 1.59e+00 (3.5e-02)\textsuperscript{14}\tabularnewline
&\textcite{cannonMonmlpMultilayerPerceptron2017}: MONMLP& 3.59e-01 (7.5e-03)\textsuperscript{6}& 4.56e-01 (8.3e-03)\textsuperscript{6}& 1.30e+00 (3.3e-02)\textsuperscript{6}\tabularnewline
&\textcite{navarro-garciaConstrainedSmoothingOutofrange2023}: cpsplines& \textbf{3.12e-01} (5.0e-03)\textsuperscript{2}& \textbf{3.96e-01} (5.5e-03)\textsuperscript{2}& \textbf{1.11e+00} (2.4e-02)\textsuperscript{1}\tabularnewline
&\textcite{groeneboomConfidenceIntervalsMonotone2023}: SLSE& 4.04e-01 (7.9e-03)\textsuperscript{10}& 5.21e-01 (1.0e-02)\textsuperscript{13}& 1.50e+00 (4.5e-02)\textsuperscript{13}\tabularnewline
\midrule
\multirow{14}{*}{1.5}&Cubic Spline (CS)& 4.85e-01 (1.2e-02)\textsuperscript{11}& 6.04e-01 (1.5e-02)\textsuperscript{12}& 1.60e+00 (5.6e-02)\textsuperscript{13}\tabularnewline
&Monotone CS (MCS)& 4.32e-01 (7.8e-03)\textsuperscript{5}& 5.42e-01 (9.1e-03)\textsuperscript{4}& 1.44e+00 (3.6e-02)\textsuperscript{4}\tabularnewline
&Smoothing Spline (SS)& 4.29e-01 (7.9e-03)\textsuperscript{3}& 5.35e-01 (9.3e-03)\textsuperscript{3}& 1.42e+00 (3.1e-02)\textsuperscript{3}\tabularnewline
&Montone SS (MSS)& \textbf{4.07e-01} (7.9e-03)\textsuperscript{2}& \textbf{5.09e-01} (8.8e-03)\textsuperscript{2}& \textbf{1.37e+00} (2.7e-02)\textsuperscript{2}\tabularnewline
&Quadratic Spline (QS)& 4.85e-01 (9.0e-03)\textsuperscript{12}& 6.03e-01 (1.0e-02)\textsuperscript{11}& 1.55e+00 (3.5e-02)\textsuperscript{9}\tabularnewline
&\textcite{heMonotoneBsplineSmoothing1998}: MQS& 5.07e-01 (9.9e-03)\textsuperscript{13}& 6.30e-01 (1.2e-02)\textsuperscript{13}& 1.57e+00 (3.9e-02)\textsuperscript{10}\tabularnewline
&LOESS& 4.60e-01 (7.7e-03)\textsuperscript{9}& 5.73e-01 (9.5e-03)\textsuperscript{9}& 1.47e+00 (3.2e-02)\textsuperscript{6}\tabularnewline
&Isotonic& 4.32e-01 (7.7e-03)\textsuperscript{4}& 5.62e-01 (8.8e-03)\textsuperscript{5}& 1.72e+00 (4.4e-02)\textsuperscript{14}\tabularnewline
&\textcite{mammenEstimatingSmoothMonotone1991}: SI (LOESS+Isotonic)& 4.58e-01 (7.7e-03)\textsuperscript{8}& 5.71e-01 (9.4e-03)\textsuperscript{8}& 1.47e+00 (3.2e-02)\textsuperscript{5}\tabularnewline
&\textcite{mammenEstimatingSmoothMonotone1991}: IS (Isotonic+LOESS)& 4.65e-01 (7.5e-03)\textsuperscript{10}& 5.79e-01 (9.1e-03)\textsuperscript{10}& 1.48e+00 (3.0e-02)\textsuperscript{7}\tabularnewline
&\textcite{murrayFastFlexibleMethods2016}: MonoPoly& 5.22e-01 (1.1e-02)\textsuperscript{14}& 6.43e-01 (1.3e-02)\textsuperscript{14}& 1.58e+00 (3.8e-02)\textsuperscript{11}\tabularnewline
&\textcite{cannonMonmlpMultilayerPerceptron2017}: MONMLP& 4.53e-01 (9.7e-03)\textsuperscript{7}& 5.71e-01 (1.1e-02)\textsuperscript{7}& 1.54e+00 (3.6e-02)\textsuperscript{8}\tabularnewline
&\textcite{navarro-garciaConstrainedSmoothingOutofrange2023}: cpsplines& \textbf{4.02e-01} (7.3e-03)\textsuperscript{1}& \textbf{5.05e-01} (8.2e-03)\textsuperscript{1}& \textbf{1.37e+00} (2.8e-02)\textsuperscript{1}\tabularnewline
&\textcite{groeneboomConfidenceIntervalsMonotone2023}: SLSE& 4.49e-01 (8.4e-03)\textsuperscript{6}& 5.67e-01 (1.1e-02)\textsuperscript{6}& 1.60e+00 (5.2e-02)\textsuperscript{12}\tabularnewline
\bottomrule
\end{tabular}
}
\end{table}

\begin{table}[H]
  \caption{Average (scaled) $L_p$ distances, $p\in \{1,2,\infty\}$, over 100 experiments on the growth curve, together with the standard error of the average in parentheses. Both the smallest one and the ones whose errors are no more than one standard error above the error of the smallest one are highlighted in bold. The superscripts indicate the rank of methods.}
  \label{tab:growth}
\resizebox{\textwidth}{!}{
  \begin{tabular}{ccccc}
\toprule
Noise $\sigma$ & Method&$\frac 1n L_1$&$\frac{1}{\sqrt n}L_2$&$L_\infty$\tabularnewline
\midrule
\multirow{14}{*}{0.1}&Cubic Spline (CS)& 3.43e-02 (7.1e-04)\textsuperscript{3}& 4.59e-02 (1.0e-03)\textsuperscript{3}& 1.71e-01 (7.7e-03)\textsuperscript{3}\tabularnewline
&Monotone CS (MCS)& 3.37e-02 (6.8e-04)\textsuperscript{2}& 4.51e-02 (9.8e-04)\textsuperscript{2}& 1.67e-01 (7.6e-03)\textsuperscript{2}\tabularnewline
&Smoothing Spline (SS)& 4.40e-02 (6.4e-04)\textsuperscript{7}& 5.71e-02 (7.9e-04)\textsuperscript{6}& 2.08e-01 (6.4e-03)\textsuperscript{6}\tabularnewline
&Montone SS (MSS)& 4.25e-02 (5.9e-04)\textsuperscript{5}& 5.51e-02 (7.3e-04)\textsuperscript{5}& 2.01e-01 (6.7e-03)\textsuperscript{5}\tabularnewline
&Quadratic Spline (QS)& 7.68e-02 (4.0e-03)\textsuperscript{9}& 1.39e-01 (8.2e-03)\textsuperscript{9}& 6.99e-01 (4.6e-02)\textsuperscript{9}\tabularnewline
&\textcite{heMonotoneBsplineSmoothing1998}: MQS& 4.38e-02 (8.0e-04)\textsuperscript{6}& 6.39e-02 (1.6e-03)\textsuperscript{7}& 3.06e-01 (1.8e-02)\textsuperscript{8}\tabularnewline
&LOESS& 4.20e-01 (1.0e-02)\textsuperscript{12}& 8.70e-01 (1.9e-02)\textsuperscript{10}& 4.62e+00 (1.2e-01)\textsuperscript{10}\tabularnewline
&Isotonic& 5.95e-02 (6.2e-04)\textsuperscript{8}& 7.58e-02 (7.8e-04)\textsuperscript{8}& 2.27e-01 (4.1e-03)\textsuperscript{7}\tabularnewline
&\textcite{mammenEstimatingSmoothMonotone1991}: SI (LOESS+Isotonic)& 4.20e-01 (1.0e-02)\textsuperscript{13}& 8.70e-01 (1.9e-02)\textsuperscript{11}& 4.62e+00 (1.2e-01)\textsuperscript{11}\tabularnewline
&\textcite{mammenEstimatingSmoothMonotone1991}: IS (Isotonic+LOESS)& 4.20e-01 (1.0e-02)\textsuperscript{11}& 8.70e-01 (1.9e-02)\textsuperscript{12}& 4.62e+00 (1.2e-01)\textsuperscript{12}\tabularnewline
&\textcite{murrayFastFlexibleMethods2016}: MonoPoly& 7.07e-01 (1.2e-02)\textsuperscript{14}& 1.02e+00 (2.1e-02)\textsuperscript{14}& 4.97e+00 (1.4e-01)\textsuperscript{13}\tabularnewline
&\textcite{cannonMonmlpMultilayerPerceptron2017}: MONMLP& \textbf{2.54e-02} (5.5e-04)\textsuperscript{1}& \textbf{3.37e-02} (6.8e-04)\textsuperscript{1}& \textbf{1.36e-01} (5.1e-03)\textsuperscript{1}\tabularnewline
&\textcite{navarro-garciaConstrainedSmoothingOutofrange2023}: cpsplines& 4.16e-02 (5.9e-04)\textsuperscript{4}& 5.36e-02 (7.3e-04)\textsuperscript{4}& 1.89e-01 (6.3e-03)\textsuperscript{4}\tabularnewline
&\textcite{groeneboomConfidenceIntervalsMonotone2023}: SLSE& 2.98e-01 (8.6e-03)\textsuperscript{10}& 9.72e-01 (3.1e-02)\textsuperscript{13}& 6.21e+00 (1.9e-01)\textsuperscript{14}\tabularnewline
\midrule
\multirow{14}{*}{1.0}&Cubic Spline (CS)& 2.69e-01 (7.4e-03)\textsuperscript{5}& 3.47e-01 (9.4e-03)\textsuperscript{3}& 1.09e+00 (5.3e-02)\textsuperscript{3}\tabularnewline
&Monotone CS (MCS)& 2.42e-01 (5.4e-03)\textsuperscript{2}& 3.12e-01 (7.1e-03)\textsuperscript{2}& \textbf{1.01e+00} (5.1e-02)\textsuperscript{2}\tabularnewline
&Smoothing Spline (SS)& 2.88e-01 (5.6e-03)\textsuperscript{6}& 3.79e-01 (7.2e-03)\textsuperscript{6}& 1.33e+00 (5.6e-02)\textsuperscript{6}\tabularnewline
&Montone SS (MSS)& 2.64e-01 (4.9e-03)\textsuperscript{4}& 3.49e-01 (6.4e-03)\textsuperscript{5}& 1.28e+00 (5.7e-02)\textsuperscript{5}\tabularnewline
&Quadratic Spline (QS)& 5.10e-01 (1.6e-02)\textsuperscript{10}& 7.51e-01 (2.3e-02)\textsuperscript{9}& 3.49e+00 (1.5e-01)\textsuperscript{9}\tabularnewline
&\textcite{heMonotoneBsplineSmoothing1998}: MQS& 2.95e-01 (6.3e-03)\textsuperscript{7}& 4.08e-01 (9.5e-03)\textsuperscript{7}& 1.62e+00 (9.9e-02)\textsuperscript{8}\tabularnewline
&LOESS& 5.35e-01 (1.1e-02)\textsuperscript{13}& 9.45e-01 (1.7e-02)\textsuperscript{11}& 4.88e+00 (1.3e-01)\textsuperscript{11}\tabularnewline
&Isotonic& 3.49e-01 (4.6e-03)\textsuperscript{8}& 4.63e-01 (6.1e-03)\textsuperscript{8}& 1.59e+00 (4.4e-02)\textsuperscript{7}\tabularnewline
&\textcite{mammenEstimatingSmoothMonotone1991}: SI (LOESS+Isotonic)& 5.35e-01 (1.1e-02)\textsuperscript{11}& 9.45e-01 (1.7e-02)\textsuperscript{10}& 4.88e+00 (1.3e-01)\textsuperscript{12}\tabularnewline
&\textcite{mammenEstimatingSmoothMonotone1991}: IS (Isotonic+LOESS)& 5.35e-01 (1.0e-02)\textsuperscript{12}& 9.47e-01 (1.7e-02)\textsuperscript{12}& 4.87e+00 (1.3e-01)\textsuperscript{10}\tabularnewline
&\textcite{murrayFastFlexibleMethods2016}: MonoPoly& 7.39e-01 (1.2e-02)\textsuperscript{14}& 1.07e+00 (2.0e-02)\textsuperscript{14}& 5.24e+00 (1.4e-01)\textsuperscript{13}\tabularnewline
&\textcite{cannonMonmlpMultilayerPerceptron2017}: MONMLP& \textbf{2.35e-01} (6.9e-03)\textsuperscript{1}& \textbf{3.03e-01} (8.4e-03)\textsuperscript{1}& \textbf{9.77e-01} (4.5e-02)\textsuperscript{1}\tabularnewline
&\textcite{navarro-garciaConstrainedSmoothingOutofrange2023}: cpsplines& 2.63e-01 (4.9e-03)\textsuperscript{3}& 3.48e-01 (6.6e-03)\textsuperscript{4}& 1.27e+00 (5.7e-02)\textsuperscript{4}\tabularnewline
&\textcite{groeneboomConfidenceIntervalsMonotone2023}: SLSE& 4.56e-01 (1.0e-02)\textsuperscript{9}& 1.05e+00 (3.0e-02)\textsuperscript{13}& 6.40e+00 (1.8e-01)\textsuperscript{14}\tabularnewline
\midrule
\multirow{14}{*}{1.5}&Cubic Spline (CS)& 3.78e-01 (9.6e-03)\textsuperscript{5}& 4.94e-01 (1.2e-02)\textsuperscript{5}& 1.52e+00 (6.0e-02)\textsuperscript{3}\tabularnewline
&Monotone CS (MCS)& \textbf{3.26e-01} (7.9e-03)\textsuperscript{1}& \textbf{4.33e-01} (1.0e-02)\textsuperscript{1}& \textbf{1.43e+00} (6.0e-02)\textsuperscript{1}\tabularnewline
&Smoothing Spline (SS)& 4.19e-01 (8.6e-03)\textsuperscript{7}& 5.46e-01 (1.0e-02)\textsuperscript{6}& 1.84e+00 (6.8e-02)\textsuperscript{6}\tabularnewline
&Montone SS (MSS)& 3.63e-01 (7.0e-03)\textsuperscript{4}& 4.83e-01 (9.3e-03)\textsuperscript{4}& 1.78e+00 (7.0e-02)\textsuperscript{5}\tabularnewline
&Quadratic Spline (QS)& 6.45e-01 (1.8e-02)\textsuperscript{13}& 9.27e-01 (2.6e-02)\textsuperscript{9}& 4.33e+00 (1.9e-01)\textsuperscript{9}\tabularnewline
&\textcite{heMonotoneBsplineSmoothing1998}: MQS& 4.07e-01 (9.8e-03)\textsuperscript{6}& 5.73e-01 (1.4e-02)\textsuperscript{7}& 2.33e+00 (1.4e-01)\textsuperscript{8}\tabularnewline
&LOESS& 5.57e-01 (1.1e-02)\textsuperscript{11}& 9.47e-01 (1.6e-02)\textsuperscript{11}& 4.88e+00 (1.4e-01)\textsuperscript{11}\tabularnewline
&Isotonic& 4.74e-01 (6.9e-03)\textsuperscript{8}& 6.31e-01 (9.3e-03)\textsuperscript{8}& 2.24e+00 (5.8e-02)\textsuperscript{7}\tabularnewline
&\textcite{mammenEstimatingSmoothMonotone1991}: SI (LOESS+Isotonic)& 5.55e-01 (1.1e-02)\textsuperscript{10}& 9.46e-01 (1.6e-02)\textsuperscript{10}& 4.88e+00 (1.4e-01)\textsuperscript{12}\tabularnewline
&\textcite{mammenEstimatingSmoothMonotone1991}: IS (Isotonic+LOESS)& 5.60e-01 (1.1e-02)\textsuperscript{12}& 9.53e-01 (1.5e-02)\textsuperscript{12}& 4.85e+00 (1.3e-01)\textsuperscript{10}\tabularnewline
&\textcite{murrayFastFlexibleMethods2016}: MonoPoly& 7.18e-01 (1.2e-02)\textsuperscript{14}& 1.06e+00 (1.8e-02)\textsuperscript{14}& 5.19e+00 (1.4e-01)\textsuperscript{13}\tabularnewline
&\textcite{cannonMonmlpMultilayerPerceptron2017}: MONMLP& 3.55e-01 (1.1e-02)\textsuperscript{2}& 4.56e-01 (1.4e-02)\textsuperscript{2}& \textbf{1.45e+00} (6.5e-02)\textsuperscript{2}\tabularnewline
&\textcite{navarro-garciaConstrainedSmoothingOutofrange2023}: cpsplines& 3.61e-01 (7.2e-03)\textsuperscript{3}& 4.77e-01 (9.4e-03)\textsuperscript{3}& 1.72e+00 (7.0e-02)\textsuperscript{4}\tabularnewline
&\textcite{groeneboomConfidenceIntervalsMonotone2023}: SLSE& 5.19e-01 (1.3e-02)\textsuperscript{9}& 1.03e+00 (3.0e-02)\textsuperscript{13}& 6.03e+00 (2.0e-01)\textsuperscript{14}\tabularnewline
\bottomrule
\end{tabular}
}
\end{table}

\end{proof}

\end{document}